\newtheorem{lemma}{Lemma}
\newtheorem{theorem}[lemma]{Theorem}
\newtheorem{defn}[lemma]{Definition}
\newtheorem{corollary}[lemma]{Corollary}
\newcommand{\ceil}[1]{\lceil #1 \rceil}
\newcommand{\wt}{\omega}
\newcommand{\NULL}{\text{null}}
\newcommand{\tree}{\widehat{T}}
\newcommand{\size}{n^{1/3}}
\newcommand{\scale}{\bar{\mu}}
\newcommand{\blowgraph}{\mathcal{G}}
\newcommand{\blowvertex}{\mathcal{V}}
\newcommand{\blowedge}{\mathcal{E}}
\newcommand{\elig}{\blowgraph_{\text{elig}}}
\newcommand{\celig}{\widehat{\blowgraph_{\text{elig}}}}
\newcommand{\gap}{12}
\newcommand{\hhat}[1]{\widehat{#1}}
\newcommand{\diff}{\mathrm{Diff}}
\newcommand{\dy}{\widehat{y}}
\newcommand{\zsum}{\mathrm{ZSum}}
\begin{document}

\title{A Scaling Algorithm for Weighted $f$-Factors in General Graphs}
\author{Ran Duan \thanks{Institute for Interdisciplinary Information Sciences, Tsinghua University,  \href{}{duanran@mail.tsinghua.edu.cn}} 
	\and Haoqing He \thanks{Institute for Interdisciplinary Information Sciences, Tsinghua University, \href{}{hehq13@mails.tsinghua.edu.cn}}
	\and Tianyi Zhang \thanks{Institute for Interdisciplinary Information Sciences, Tsinghua University, \href{}{tianyi-z16@mails.tsinghua.edu.cn}}
	}
\date{}

\maketitle

\begin{abstract}
	We study the maximum weight perfect $f$-factor problem on any general simple graph $G=(V,E,w)$ with positive integral edge weights $w$, and $n=|V|$, $m=|E|$. When we have a function $f:V\rightarrow \mathbb{N}_+$ on vertices, a perfect $f$-factor is a generalized matching so that every vertex $u$ is matched to $f(u)$ different edges. The previous best algorithms on this problem have running time $O(m f(V))$ [Gabow 2018] or $\tilde{O}(W(f(V))^{2.373}))$ [Gabow and Sankowski 2013], where $W$ is the maximum edge weight, and $f(V)=\sum_{u\in V}f(u)$.
	%A faster running time of $\tilde{O}(mn^{2/3}\log W)$ was known in \cite{gabow1989faster} but only for bipartite graphs; here integer $W$ is the maximum weight. 
	In this paper, we present a scaling algorithm for this problem with running time $\tilde{O}(mn^{2/3}\log W)$. Previously this bound is only known for bipartite graphs [Gabow and Tarjan 1989]. 
	The running time of our algorithm is independent of $f(V)$, and consequently it first breaks the $\Omega(mn)$ barrier
	for large $f(V)$ even for the unweighted $f$-factor problem in general graphs.
	%when $f(V) = \Omega(n^2)$.
\end{abstract}

\thispagestyle{empty}
\clearpage
\pagestyle{plain}
\pagenumbering{arabic}

\section{Introduction}
Suppose we are given an undirected simple graph $G = (V, E)$ on $n$ vertices and $m$ edges, with positive integer edge weights $\wt: E\rightarrow \{1, 2, \cdots, W\}$. Let $f:V\rightarrow \mathbb{N}_+$ be a function that maps vertices to positive integers. An $f$-factor is a subset of edges $F\subseteq E$ such that $\deg_F(u) \leq f(u)$ for all $u\in V$, and $F$ is a \emph{perfect} $f$-factor if $\deg_F(u) = f(u),\forall u\in V$. In this paper we are concerned with computing a perfect $f$-factor with maximum edge weights. Note that the maximum weight $f$-factor problem can be easily reduced to the maximum weight perfect $f$-factor problem.
%Some other combinatorial optimization problems can be reducible to a perfect $f$-factor with maximum edge weights, including computing a $f$-factor with maximum edge weights and computing a $f$-edge cover with maximum edge weights.

For polynomial running time algorithms, the previous best result on this problem has running time $\tilde{O}(m f(V))$ \cite{gabow2018data}, where conventionally $f(V) = \sum_{v\in V}f(v)$. When edge weights are small integers, a pseudo-polynomial running of $\tilde{O}(W\left(f(V)\right)^{2.373})$ was obtained using algebraic approaches by \cite{gabow2013algebraic}. For unweighted graphs, one can achieve $\tilde{O}(m\sqrt{f(V)})$ running time using algorithms from \cite{huang2017approximate,gabow1983efficient}. Faster algorithms with running time independent of $f(V)$ could be obtained previously but only in bipartite graphs: \cite{goldberg1987solving} gave a scaling algorithm that runs in time $\tilde{O}(m^{2/3}n^{5/3}\log W)$ that solves the more general min-cost unit-capacity max-flow problem. This time bound was later improved to $\tilde{O}(m\min\{n^{2/3},m^{1/2}\}\cdot \log W)$ in \cite{gabow1989faster}. For the min-cost flow problem, the running time was further improved to $\tilde{O}(mn^{1/2})$ and $\tilde{O}(m^{10/7}\log W)$ using algebraic approaches \cite{lee2014path}\cite{cohen2017negative}. If one is willing to settle for approximate solutions instead of the exact maximum, linear time algorithms can be found from \cite{huang2017approximate,duan2014linear}. A closely related problem is the min-cost perfect $b$-matching, in which every edge can be matched multiple times. There are several classical results for $b$-matchings.~\cite{gabow1989faster,edmonds1972theoretical,gabow1983scaling,gabow2018data}. Another closely related problem is minimum weight $f$-edge cover, where an $f$-edge cover is a subset of edges $F\subseteq E$ such that $\deg_F(u) \geq f(u)$ for all $u\in V$. Since the time complexity of our algorithm does not depend on $f$, it also works for the minimum weight $f$-edge cover problem.  %For even faster running time, the only result of $\tilde{O}(m^{10/7}\log W)$ is known in \cite{cohen2017negative}.

In this paper we prove the following result, which is the first one to break the $\Omega(mn)$ barrier of perfect $f$-factors in general graphs even for the unweighted setting.
\begin{theorem}\label{nmain}
	There is a deterministic algorithm that computes a maximum weight perfect $f$-factor in $\tilde{O}(mn^{2/3}\log W)$ time.
\end{theorem}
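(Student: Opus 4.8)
The overall strategy is a weight-scaling algorithm in the tradition of Gabow–Tarjan, combined with a graph blow-up reduction so that the blossom machinery of general matching can be run efficiently despite large $f$-values. First I would reduce the maximum weight perfect $f$-factor problem to a sequence of $O(\log W)$ scales, where in each scale we have already computed a near-optimal pair of primal solution (an $f$-factor $F$) and dual solution ($y$-values on vertices plus blossom duals $\zs$ on odd sets) satisfying complementary slackness up to an additive error of $1$, and we must refine this to a solution for the doubled weight function $2w$ plus a low-order correction. The key conceptual step, following the bipartite scaling paradigm, is to phrase each scale as a unit-capacity shortest-augmenting-path computation in an auxiliary graph $\elig$ of eligible (tight) edges, where augmenting along shortest paths drives the dual adjustments. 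The subtlety over bipartite graphs is that in general graphs we must contract and expand blossoms during the search; I would adapt Gabow's blossom-based scaling $f$-factor approach from \cite{gabow2018data} but run it on the blow-up graph $\blowgraph$ in which each vertex $u$ is replaced by $f(u)$ copies, so that the problem becomes ordinary matching. Naively this blows up the size by $f(V)$, which is exactly what we want to avoid, so the heart of the algorithm is to maintain the blow-up graph only \emph{implicitly}: at any time only $O(n)$ copies per vertex are "active" because the contracted structure $\celig$ has bounded complexity, and phases of augmentation touch only $\tilde O(m)$ implicitly represented edges.

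The step count would go as follows. (1) Set up the LP relaxation of perfect $f$-factor with its blossom (odd-set) constraints, state the complementary slackness conditions and define eligibility of edges relative to the current duals; show that an integral primal with exact slackness is optimal, and with slackness violated by at most $1$ per scale the final accumulated error is $O(\log W)$, absorbed by a final cleanup phase. (2) Describe one scale: initialize $y \assign 2y$, round duals, mark the now-infeasible vertices (those with deficiency), and run a Hungarian-style search that repeatedly finds a shortest alternating path from a deficient vertex, using a blossom-contraction data structure on the eligible subgraph. (3) Bound the number of phases in a scale by $\tilde O(\size)$ via the standard argument that after $k$ phases of shortest-augmenting-path, the residual deficiency is $O(f(V)/k)$ — but here one must be careful: the right potential is not $f(V)$ but the number of \emph{blow-up vertices that remain unmatched}, and we need the analogue of the Hopcroft–Karp $O(\sqrt{N})$-phase bound where $N$ is the number of blow-up vertices on one side; I expect the $n^{2/3}$ rather than $n^{1/2}$ to come from the general-graph blossom overhead interacting with the scaling, as in \cite{gabow1989faster}. (4) Bound the cost of a phase by $\tilde O(m)$ using efficient blossom data structures (split-findmin / mergeable priority queues), again operating on the implicit blow-up. (5) Multiply: $O(\log W)$ scales $\times\ \tilde O(\size)$ phases $\times\ \tilde O(m)$ per phase $=\tilde O(mn^{2/3}\log W)$, then handle the $f$-edge-cover corollary by the standard complementation $\deg_F(u) \ge f(u) \iff \deg_{E\setminus F}(u) \le \deg_G(u) - f(u)$.

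The main obstacle, which I expect to occupy the bulk of the paper, is making the blossom structure on the blow-up graph both \emph{correct} and \emph{efficiently representable}. In general-graph matching the dual variables include $\zs$-values on an laminar family of odd sets, and in the $f$-factor setting these sets live in the blow-up graph; a blossom there can contain many but not all copies of a given original vertex, and its behavior under the scaling dual adjustment is delicate. Concretely, the hard lemmas will be: (a) that throughout a scale the laminar family of blossoms has total size $\tilde O(m)$ when represented with original vertices annotated by counts, so that a phase does not pay for $f(V)$; (b) that the shortest-augmenting-path distances are monotone across phases even in the presence of blossom formation and dissolution, which is what licenses the $\tilde O(\size)$-phase bound; and (c) that a single augmentation, which in the blow-up corresponds to flipping the match status of one copy per vertex along the path, can be implemented so that blossoms are updated in $\tilde O(1)$ amortized time each. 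I would expect to need a carefully chosen invariant — something like "every active blossom is either full (all copies of its vertices inside) or touched by the current search tree" — to control the combinatorics, and the verification of this invariant under all the search operations (grow, blossom, augment, dual-adjust) is where the real work lies.
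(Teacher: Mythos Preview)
Your proposal has two genuine gaps, and both concern the central mechanism by which the running time becomes independent of $f(V)$ and lands at $n^{2/3}$.

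First, your blow-up is the wrong one. You propose replacing each vertex $u$ by $f(u)$ copies and then representing this implicitly; but the paper's blow-up $\blowgraph$ instead replaces each \emph{edge} $(u,v)$ by a $3$-path $u\,e_u\,e_v\,v$ with $f(e_u)=f(e_v)=1$. This graph has $n+2m$ vertices and $3m$ edges regardless of $f$, so no implicit-representation trick is needed and the blossom machinery runs directly on an $O(m)$-size object. Your vertex-replication blow-up would force you to reason about blossoms that contain some but not all copies of an original vertex, and your proposed invariant (``every active blossom is either full or touched by the search tree'') is not something the paper uses or needs; I don't see how it would be maintained under dual adjustments in a general graph.

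Second, and more importantly, your phase bound is the wrong argument and would not give $n^{2/3}$. The Hopcroft--Karp style claim ``after $k$ phases the residual deficiency is $O(N/k)$'' yields $\sqrt{N}$ phases where $N$ is the number of blow-up vertices, hence $\sqrt{f(V)}$ --- precisely the dependence you want to avoid. The paper's $n^{2/3}$ has nothing to do with ``blossom overhead interacting with scaling.'' It comes from a direct combinatorial bound (Lemma~\ref{bottleneck}) on the number of edge-disjoint augmenting walks in $F_{t-1}\oplus F_t$ after $Cn^{2/3}$ rounds of Edmonds search: one shows that each surviving augmenting walk must, for $\Omega(n^{2/3})$ distinct integers $q$, directly connect two \emph{original} vertices whose (modified) duals lie near $q$ and $-q$ respectively. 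By pigeonhole over the $\Theta(n^{2/3})$ values of $q$, some level set $Y(q)\subseteq V$ has $|Y(q)|=O(n^{1/3})$; since $G$ is simple, at most $|Y(q)|^2=O(n^{2/3})$ edge-disjoint walks can pass through it. This is where $n^{2/3}=(n^{1/3})^2$ actually originates. Making this argument work in general graphs requires the large/small blossom dichotomy (dissolving blossoms with $|B\cap V|\geq n^{1/3}$ at the start of each scale, handling small blossoms by local PQ-Edmonds searches inside them), the bound $\sum_{\text{large }B}z(B)=O(n^{4/3})$, and a careful accounting of how blossom $z$-values and the $\eta(X),\zeta(X)$ edges perturb the dual-difference inequalities along a walk (Lemmas~\ref{odd}--\ref{even1} and the $\diff$ machinery). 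None of this structure appears in your plan.
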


%For the page limit, the full version of our paper is contained after the main body.

\subsection{Technical overview}
Our algorithm is based on the scaling approach for maximum weight matching in general graphs that runs in time $\tilde{O}(m\sqrt{n}\log W)$ from~\cite{duan2018scaling} and the blocking flow method in~\cite{even1975network,karzanov1973finding,goldberg1998beyond}. Here we begin with a sketch of our idea on finding a perfect $f$-factor in an unweighted graph. To generalize it to weighted graphs, we will adapt the scaling framework for maximum weight perfect matching from \cite{duan2018scaling}. 

The algorithm for the unweighted case uses a primal-dual approach for $f$-factors which was presented in \cite{gabow2018data,huang2017approximate}. It maintains a set of dual variables $y: V\rightarrow \mathbb{Z}$ and $z: 2^V\rightarrow \mathbb{N}$, as well as a laminar family of blossoms $\Omega\subseteq 2^V$ and a compatible $f$-factor $F$, which are initialized as $F = \Omega = \emptyset$. Basically, the algorithm invokes for $Cn^{2/3}$ times the Edmonds search procedure under an approximate complementary slackness constraint on $F, y, z, \Omega$, where $C$ is a sufficiently large constant. The key idea is that when $G$ is a simple graph, after that we wish to prove that the total deficiency of the current $f$-factor $F$ is bounded by $O(n^{2/3})$, namely $\sum_{v\in V}(f(v) - \deg_F(v))\leq O(n^{2/3})$. If this is true, then we only need extra $O(n^{2/3})$ rounds of Edmonds searches to reach a perfect $f$-factor.

Let $F^*$ be an arbitrary perfect $f$-factor. To upper bound the total deficiency $\sum_{v\in V}(f(v) - \deg_F(v))\leq O(n^{2/3})$, we need to bound the total number of edge-disjoint augmenting walks in $F^*\oplus F$. Consider any augmenting walk which is specified by a sequence of consecutive edges $(u_1, u_2), (u_2, u_3), \cdots, (u_{2s-1}, u_{2s})$, where $(u_{2i-1}, u_{2i})\in F^*, (u_{2i}, u_{2i+1})\in F$, and all $u_i$'s but $u_1, u_{2s}$ are saturated vertices in $F$ ($\deg_F(u_i) = f(u_i)$). If we start the search for $y$-values of all vertices equal to some positive constant, then $y$-value of unsaturated vertices remain equal. Since $u_1, u_{2s}$ are both unsaturated vertices, we have $y(u_1) = y(u_{2s}) = -Cn^{2/3}$.

\subparagraph*{No blossoms}
For bipartite graphs, we do not need to consider blossoms, so we can use the idea from~\cite{goldberg1998beyond,even1975network}. By approximate complementary slackness we know: $y(u_{2i-1}) + y(u_{2i})\geq -2, y(u_{2i}) + y(u_{2i+1})\leq 0$. Then we have $y(u_{2i+1}) - y(u_{2i-1})\leq 2$, $y(u_{2s-1})\geq Cn^{2/3}$. Consider the sequence of duals: $y(u_1), y(u_3), \cdots, y(u_{2s-1})$. This sequence starts with a small value $y(u_1) = -Cn^{2/3}$ but ends with a large value $y(u_{2s-1})\geq Cn^{2/3}$, and so intuitively many of the differences $y(u_{2i+1}) - y(u_{2i-1})$ should be positive. However, given the upper bound $y(u_{2i+1}) - y(u_{2i-1})\leq 2$, we would know many differences $y(u_{2i+1}) - y(u_{2i-1})$ can only belong to a very narrow range $\{1, 2\}$. In this case, since $y(u_{2i-1}) + y(u_{2i})\geq -2, y(u_{2i}) + y(u_{2i+1})\leq 0$, it must be $-1-y(u_{2i+1})\leq y(u_{2i})\leq -y(u_{2i+1})$. In words, this augmenting walk contains an edge in $V_q\times V_{-q}$, where $V_x = \{|y(u) - x|\leq 1\mid u \in V \}$, $q = y(u_{2i})$.

Since there are many different such pairs $y(u_{2i-1}), y(u_{2i+1})$, intuitively we can imagine this augmenting walk contains edges in $V_q\times V_{-q}$ for $\Omega(n^{2/3})$ different integer $q$'s. By the pigeon-hole principle, there exists one $q$ such that $|V_q\cup V_{-q}|= O(n^{1/3})$. As $G$ is a simple graph, the total number of edge disjoint augmenting walks that contains an edge in $V_q\times V_{-q}$ is at most $|V_q\cup V_{-q}|^2 = O(n^{2/3})$.

\subparagraph*{Handling blossoms}
%So far we have not taken into account the influences of blossoms, and thus we were only focusing on bipartite graphs. 
The major difficulty for general graphs comes from the blossoms. We utilize the blossom dissolution technique from~\cite{duan2018scaling}, but it will become much more complicated for $f$-factors. To analyze the influence of blossoms, let us divide $\Omega$ into two categories: large and small: a blossom $B\in \Omega$ is large if $|B|\geq n^{1/3}$. For small blossoms, we know by definition, the total number of edges covered under all small blossoms is bounded by $n^{4/3}$. So if $F^*\oplus F$ contains $\ge Cn^{2/3}$ augmenting walks, then most augmenting walks contain less than $\frac{n^{4/3}}{Cn^{2/3}} = O(n^{2/3})$ many such edges. To restore the argument we discussed before, we could safely remove those vertices incident on any edges belonging to small blossoms from the sequence $u_1, u_3, u_5, \cdots, u_{2s-1}$. Since $O(n^{2/3})$ would intuitively be small compared to $s$, we could still work with a very long sequence of vertices that are not removed.

As for large blossoms, we could prove that $\sum_{\text{large }B\in \Omega}z(B)\leq O(n^{4/3})$. Basically, this is because the total number of large blossoms is always bounded by $n^{2/3}$, and so each round of Edmonds search could increase this sum by at most $n^{2/3}$, and therefore the algorithm could raise $\sum_{\text{large }B\in \Omega}z(B)$ to at most $O(n^{4/3})$ during $Cn^{2/3}$ executions of Edmonds search. Once we have a good handle of the total sum $\sum_{\text{large }B\in \Omega}z(B)= O(n^{4/3})$, we could argue that the ``average influence'' of large blossoms on each augmenting walk is bounded by $O(n^{2/3})$, if $F^*\oplus F$ has more than $Cn^{2/3}$ augmenting walks.

\subsection{Structure of our paper}
In Section~\ref{sec:prelim} we define the notations and basic concepts we will use in this paper, and in Section~\ref{sec:algo} the algorithm is given, whose running time analysis is given in Section~\ref{sec:time}.

\section{Preliminaries}\label{sec:prelim}
\paragraph*{Notations}
Our input is a weighted simple graph $G = (V, E, \wt)$ and a function $f:V\rightarrow \mathbb{N}_+$. For $S\subseteq V$, define $f(S) = \sum_{v\in S}f(v)$, and let $\delta(S)$ and $\gamma(S)$ be sets of edges with exactly one endpoint and both endpoints in $S$, respectively. For any edge subset $F\subseteq E$, define $\delta_F(S) = \delta(S)\cap F$, $\wt(F) = \sum_{e\in F}\wt(e)$, and $\deg_F(u) = |F\cap \{(u, v)\in E \} |$. $F\subseteq E$ is called an $f$-factor if $\deg_F(u)\leq f(u)$ for all $u\in V$. For an $f$-factor $F$, the \textbf{deficiency} of $u$ in $F$ is defined as $f(u) - \deg_F(u)$ and $u$ is \textbf{saturated} by $F$ if $f(u)-\deg_F(u) = 0$. When all vertices are saturated, $F$ is called a perfect $f$-factor.

\paragraph*{Blowup graphs}
Instead of running on the original graph, our algorithm will be operating on an auxiliary weighted graph $\blowgraph = (\blowvertex, \blowedge, \mu)$ which is called the \textbf{blowup graph}. The blowup graph is built on the original vertex set $V$ as following.
\begin{itemize}
	\item For each $e = (u, v)\in E$, add two vertices $e_u, e_v$ to $\blowvertex$ and three edges $(u, e_u), (e_u, e_v), (e_v, v)$ to $\blowedge$. All vertices in $V$ are called \textbf{original} vertices, and the new added vertices are called \textbf{auxiliary} vertices.
	%$\blowvertex$ contains all vertices from $V$. Besides, for each $e = (u, v)\in E$, add two vertices $e_u, e_v$ to $\blowvertex$. 
	\item The weights of added edge are assigned as: $\mu(u, e_u) = \mu(v, e_v) = \wt(u, v)$, $\mu(e_u, e_v) = 0$.
	\item For the added vertices, assign $f(e_u) = f(e_v) = 1$.
\end{itemize}

%The following lemma shows when we define $\mu$ properly, computing maximum weight perfect $f$-factors in $G$ and $\blowgraph$ are equivalent.
\begin{lemma}\label{equiv}
	Computing maximum weight perfect $f$-factor in $G$ and $\blowgraph$ are equivalent.
\end{lemma}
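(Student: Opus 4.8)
The plan is to set up an explicit weight-preserving bijection between perfect $f$-factors of $G$ and perfect $f$-factors of $\blowgraph$. Given a perfect $f$-factor $F\subseteq E$ of $G$, I would construct the corresponding edge set $\mathcal{F}\subseteq\blowedge$ as follows: for each original edge $e=(u,v)\in E$, if $e\in F$ then put the two "pendant" edges $(u,e_u)$ and $(v,e_v)$ into $\mathcal{F}$ and leave the middle edge $(e_u,e_v)$ out; if $e\notin F$, then put the middle edge $(e_u,e_v)$ into $\mathcal{F}$ and leave both pendant edges out. The first thing to check is that $\mathcal{F}$ is a perfect $f$-factor of $\blowgraph$: for an auxiliary vertex $e_u$, exactly one of $(u,e_u)$ and $(e_u,e_v)$ is selected in either case, so $\deg_\mathcal{F}(e_u)=1=f(e_u)$; for an original vertex $u$, the edges of $\mathcal{F}$ incident to $u$ are exactly the $(u,e_u)$ with $e=(u,v)\in F$, so $\deg_\mathcal{F}(u)=\deg_F(u)=f(u)$. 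Since $\mu(u,e_u)=\mu(v,e_v)=\wt(e)$ and $\mu(e_u,e_v)=0$, each edge $e\in F$ contributes $2\wt(e)$ to $\mu(\mathcal{F})$ while each $e\notin F$ contributes $0$, so $\mu(\mathcal{F})=2\wt(F)$.

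For the reverse direction I would show that \emph{every} perfect $f$-factor $\mathcal{F}$ of $\blowgraph$ has this canonical form, so the map is a bijection. The key observation is the local constraint at the auxiliary pair $e_u,e_v$: since $f(e_u)=f(e_v)=1$, exactly one edge at $e_u$ and one at $e_v$ is used. If the middle edge $(e_u,e_v)$ is in $\mathcal{F}$, it saturates both $e_u$ and $e_v$, forcing $(u,e_u),(v,e_v)\notin\mathcal{F}$; if $(e_u,e_v)\notin\mathcal{F}$, then saturating $e_u$ forces $(u,e_u)\in\mathcal{F}$ and saturating $e_v$ forces $(v,e_v)\in\mathcal{F}$. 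Hence for each $e$ either both pendant edges are present or neither is, and defining $F=\{e\in E: (u,e_u),(v,e_v)\in\mathcal{F}\}$ recovers an edge set of $G$ whose image under the forward map is $\mathcal{F}$. Checking $\deg_F(u)=\deg_\mathcal{F}(u)=f(u)$ for every original vertex shows $F$ is a perfect $f$-factor of $G$, and again $\mu(\mathcal{F})=2\wt(F)$.

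Combining the two directions, $F\mapsto\mathcal{F}$ is a bijection between perfect $f$-factors of $G$ and of $\blowgraph$ satisfying $\mu(\mathcal{F})=2\wt(F)$; consequently $F$ maximizes $\wt(F)$ over perfect $f$-factors of $G$ if and only if $\mathcal{F}$ maximizes $\mu(\mathcal{F})$ over perfect $f$-factors of $\blowgraph$, and a maximum in one graph is recovered from a maximum in the other in linear time. I do not anticipate a genuine obstacle here; the only point requiring a little care is verifying that no "exotic" $f$-factor of $\blowgraph$ exists outside the canonical form, which the degree-one constraints on $e_u,e_v$ rule out as argued above. (One should also note in passing that a perfect $f$-factor of $G$ exists iff one of $\blowgraph$ does, which is immediate from the bijection, so the equivalence is not vacuous.)
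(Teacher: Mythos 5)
Your proof is correct and takes essentially the same approach as the paper: both use the same forward map (replace $e\in F$ by the two pendant edges, $e\notin F$ by the middle edge), with the weight relation $\mu(\mathcal{F})=2\wt(F)$. You simply spell out the reverse direction (that the degree-one constraints on $e_u,e_v$ force every perfect $f$-factor of $\blowgraph$ into the canonical form), which the paper leaves to the reader as ``easy to see.''
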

\begin{proof}
	Basically, we argue there is a one-to-one correspondence between perfect $f$-factors in $G$ and perfect $f$-factors in $\blowgraph$. For any perfect $f$-factor $F$ in $G$, construct a perfect $f$-factor $F^\prime \in \blowgraph$ in the following manner.
	\begin{itemize}
		\item For each $e=(u, v)\in F$, add $(u, e_u), (e_v, v)$ to $F^\prime$.
		\item For each $e=(u, v)\notin F$, add $(e_u, e_v)$ to $F^\prime$.
	\end{itemize}
	
	It is easy to see this is a one-to-one correspondence, and $\sum_{e\in F^\prime}\mu(e) = 2\sum_{e\in F}\omega(e)$.
\end{proof}

\paragraph*{LP formulation}
Computing maximum weight perfect $f$-factors on the blowup graph $\blowgraph = (\blowvertex, \blowedge, \mu)$ can be expressed as a linear program~\cite{gabow2018data}:
$$\begin{aligned}
\text{maximize}\quad & \sum_{e\in \blowedge}\mu(e)x(e)\\
\text{subject to}\quad& \sum_{e\in\delta(v)}x(e) = f(v), \forall v\in \blowvertex\\
& \sum_{e\in \gamma(B)\cup I}x(e)\leq \left\lfloor\frac{f(B) + |I|}{2}\right\rfloor, \forall B\subseteq \blowvertex, I\subseteq \delta(B)\\
& 0\leq x(e) \leq 1, \forall e\in \blowedge
\end{aligned}$$

Here, the blossom constraint $\sum_{e\in \gamma(B)\cup I}x(e)\leq \left\lfloor\frac{f(B)+|I|}{2}\right\rfloor$ is a generalization of blossom constraint $\sum_{e\in \gamma(B)}x(e)\leq \left\lfloor|B|/2\right\rfloor$ in ordinary matching. Its dual LP is written as the following.
$$\begin{aligned}
\text{minimize}\quad &\sum_{v\in \blowvertex}f(v)y(v) + \sum_{B\subseteq \blowvertex, I\subseteq \delta(B)} \left\lfloor\frac{f(B) + |I|}{2}\right\rfloor z(B, I) + \sum_{e\in \blowedge}u(e)\\
\text{subject to}\quad &yz(e) + u(e)\geq \mu(e), \forall e\in \blowedge\\
\quad &z(B, I)\geq 0, u(e)\geq 0
\end{aligned}$$
Here $yz(u, v)$ is defined as: $$yz(u,v) = y(u) + y(v) + \sum_{B, I:(u, v)\in \gamma(B)\cup I,I\subseteq \delta(B)}z(B, I)$$

\paragraph*{Blossoms}
We follow the definitions and the terminology of \cite{gabow2018data,huang2017approximate} for $f$-factor blossoms. A blossom is specified by a tuple $(B, \blowedge_B, \beta(B), \eta(B))$, where $B\subseteq \blowvertex$ is a subset of vertices, $\blowedge_B \subseteq \blowedge$ a subset of edges, $\beta(B)\in B$ a special vertex which is called the \textbf{base}, and $\eta(B)$ is either $\NULL$ or an edge from $\delta(\beta(B))\cap \delta(B)$. Blossoms follow an inductive definition below.

\begin{defn}[Blossom, \cite{gabow2018data,huang2017approximate}]
	A single vertex $v$ forms a trivial blossom, also called a singleton. Here $B = \{v\}$, $\blowedge_B = \emptyset$, $\beta(B) = v$, and $\eta(B)$ is $\NULL$. Inductively, let $B_0, B_1, \cdots, B_{l-1}$ be a sequence of disjoint singletons or nontrivial blossoms. Suppose there exists a closed walk $C_B = \{e_0, e_1, \cdots, e_{l-1}\}$ starting and ending with $B_0$ such that $e_i\in B_i\times B_{i+1}$, $(B_l = B_0)$. The vertex set $B=\bigcup_{i=0}^{l-1}B_i$ is identified as a blossom if the following are satisfied.
	\begin{enumerate}
		\item Base. If $B_0$ is a singleton, the two edges incident to $B_0$ on $C_B$, i.e., $e_0$ and $e_{l-1}$, must both be matched or both be unmatched.
		\item Alternation.  Fix a $B_i, i\neq 0$. If $B_i$ is a singleton, exactly one of $e_{i-1}$ and $e_i$ is matched. If $B_i$ is a nontrivial blossom, $\eta(B_i)= e_{i-1}$ or $e_i$.
	\end{enumerate}
	The edge set of the blossom $B$ is $\blowedge_B = C_B \cup (\cup_{i=0}^{l-1} \blowedge_{B_i})$ and its base is $\beta(B) = \beta(B_0)$. If $B_0$ is not a singleton, $\eta(B) = \eta(B_0)$. Otherwise, $\eta(B)$ may either be $\NULL$ or one edge in $\delta(B) \cap \delta(B_0)$ that is the opposite type of $e_0$ and $e_{l-1}$.
\end{defn}

A blossom is called \textbf{root blossom} if it is not contained in any other blossom. Blossoms have two different types: \textbf{light} and \textbf{heavy}. If $B_0$ is a singleton, $B$ is light/heavy if $e_0$ and $e_{l-1}$ are both unmatched/matched. Otherwise, $B$ is light/heavy if $B_0$ is light/heavy.

\begin{defn}
	Given an $f$-factor $F$, an alternating walk on $\blowgraph$ is a sequence of consecutive edges $(u_1, u_2), (u_2, u_3), \cdots, (u_{l-1}, u_l)$ such that:
	\begin{itemize}
		\item $(u_i, u_{i+1})\in \blowedge$ are different edges $1\leq i<l$.
		\item exactly one of $(u_{i-1}, u_i), (u_i, u_{i+1})$ belongs to $F$, $1<i<l$.
	\end{itemize}
	This walk is called an augmenting walk if both $(u_1, u_2), (u_{l-1}, u_l)\notin F$.
\end{defn}

When searching for an augmenting walk, a blossom behaves as a unit in the graph. These properties are formally stated by the following lemma.

\begin{lemma}[\cite{gabow2018data,huang2017approximate}]
	Let $v$ be an arbitrary vertex in $B$. There exists an even length alternating walk $P_0(v)$ and an odd length alternating walk $P_1(v)$ from $\beta(B)$ to $v$ using edges in $E_B$. Moreover, the terminal edge of $P_{0, 1}(v)$ incident to $\beta(B)$ must have a different type than $\eta(B)$, if $\eta(B)$ is defined.
\end{lemma}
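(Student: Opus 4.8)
The plan is to prove this by structural induction on the recursive construction of $B$ in the blossom definition. Call a walk \emph{alternating} if consecutive edges have opposite type (matched versus unmatched), and let the parity of a walk be the parity of its number of edges. A truly trivial blossom $\{v\}$ carries only the empty walk $P_0(v)$ and no odd walk, so I take the base case to be a nontrivial blossom of nesting depth one.

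Base case: $B=\bigcup_{i=0}^{l-1}B_i$ with every $B_i$ a singleton, glued by a closed walk $C_B=\{e_0,\dots,e_{l-1}\}$, $e_i\in B_i\times B_{i+1}$ (with $B_l=B_0$), $\beta(B)=B_0$. The Base condition gives $e_0,e_{l-1}$ the same type, and the Alternation condition makes exactly one of $e_{i-1},e_i$ matched at each $B_i$, $i\neq 0$; together they force the types of $e_0,e_1,\dots,e_{l-1}$ to strictly alternate, so $l$ is odd. Hence for $v\in B_j$ the forward sub-walk $e_0,\dots,e_{j-1}$ and the backward sub-walk $e_{l-1},\dots,e_j$ are both alternating, lie in $\blowedge_B=C_B$, and have opposite parities $j$ and $l-j$; these are $P_0(v),P_1(v)$, and for $v=\beta(B)$ we take $P_0$ empty and $P_1=C_B$. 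The terminal edge at $\beta(B)$ is $e_0$ or $e_{l-1}$, which share a type opposite to that of $\eta(B)$ whenever $\eta(B)\neq\NULL$, giving the ``moreover'' clause.

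Inductive step: some $B_i$ is nontrivial; assume the lemma for each $B_i$ and fix $v\in B_j$. I would build the two walks by (i) choosing one of the two directions around $C_B$ from $B_0$ to $B_j$; (ii) passing through each intermediate constituent $B_i$ from the endpoint of its incoming closed-walk edge to the endpoint of its outgoing one --- for a singleton $B_i$ this through-piece is empty, and alternation across $e_{i-1},e_i$ is immediate from the Alternation condition, whereas for a nontrivial $B_i$ one routes through $\beta(B_i)$ using the even or odd walk supplied by induction, which glues correctly because the Alternation condition forces $\eta(B_i)\in\{e_{i-1},e_i\}$ and the inductive hypothesis pins down the type of that walk's terminal edge at $\beta(B_i)$; and (iii) finishing inside $B_j$, routing from the entry endpoint to $v$ via induction with the parity of our choice. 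Varying the direction in (i) and the parities of the nontrivial through-pieces and of the final piece in (iii) produces both an even and an odd total walk. For ``moreover'': if $B_0$ is nontrivial (or $j=0$, $v\neq\beta(B)$) the first edge lies in $B_0$ and by induction has type opposite $\eta(B_0)=\eta(B)$; if $B_0$ is a singleton the first edge is $e_0$ or $e_{l-1}$, which share a type opposite to $\eta(B)$ by definition.

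The main obstacle is step (ii): bookkeeping edge types and walk parities at each junction so that the through-pieces never break the alternating property and so that both parities of the whole walk remain attainable. Singleton constituents are trivial, the singleton base uses the Base condition, and every nontrivial constituent relies on $\eta(B_i)\in\{e_{i-1},e_i\}$ together with the inductive claim about terminal-edge types at $\beta(B_i)$; stitching all the cases together is the lengthy-but-routine argument that makes this a result we quote from~\cite{gabow2018data,huang2017approximate} rather than reprove in detail.
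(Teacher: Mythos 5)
The paper does not prove this lemma; it is stated and cited directly from \cite{gabow2018data,huang2017approximate}, so there is no in-paper argument to compare your attempt against. Your sketch --- structural induction on the nesting depth of $B$, handling the all-singleton cycle as the base case and routing through nontrivial constituents via the inductively supplied $P_0/P_1$ plus the ``moreover'' clause at each $\beta(B_i)$ --- is the standard argument and matches what the references do. Your base case is complete and correct: alternation of $e_0,\dots,e_{l-1}$ plus the Base condition forces $l$ odd, giving the two prefix/suffix walks of opposite parity.

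One bookkeeping point in the inductive step is stated too loosely. You count ``the parities of the nontrivial through-pieces'' among the free degrees of freedom, but at an intermediate nontrivial $B_i$ the parity of the through-piece is in fact \emph{forced}: entering at $\beta(B_i)$ via $\eta(B_i)\in\{e_{i-1},e_i\}$ fixes, by the moreover clause, the type of the first internal edge, and alternation at the exit vertex against the other cycle edge leaving $B_i$ then uniquely selects which of $P_0(w_i),P_1(w_i)$ is admissible, since their terminal edges have opposite types. Similarly, when $B_0$ or $B_j$ is nontrivial, the inductive hypothesis only provides walks from $\beta(B_i)$, so one must enter such a blossom via its $\eta$-edge to land on its base --- which constrains the choice of direction around $C_B$ rather than leaving it free. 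The genuine flexibility that delivers both total parities is therefore more delicate than ``vary every piece.'' This does not undermine the overall plan, and it is precisely this careful accounting of forced versus free choices that constitutes the lengthy case analysis you correctly anticipate and defer to the references.
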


We also introduce the notion of \textbf{maturity} of blossoms below.
\begin{defn}[Mature Blossom, \cite{gabow2018data,huang2017approximate}]
	A blossom is mature with respect to an $f$-factor $F$ if the following requirements are satisfied.
	\begin{enumerate}
		\item Every vertex $v\in B\setminus \{\beta(B) \}$ is saturated, namely $\deg_F(v) = f(v)$.
		\item The deficiency of $\beta(B)$ is at most $1$. Furthermore, if it is $1$, then $B$ must be a light blossom and $\eta(B)$ is $\NULL$; otherwise, $\eta(B)$ is defined.
	\end{enumerate}
\end{defn}

Our algorithm always keeps a set $\Omega$ of mature blossoms and maintains a non-negative value $z(B)$ for each $B \in \Omega$. For each blossom $B$, define a set $I(B) \subseteq \delta(B)$ which is defined as $I(B) = \delta_F(B)\oplus \{\eta(B)\}$.

\paragraph*{Augmenting and alternating paths}
To find augmentations, we need to work with the contraction graph $\widehat{\blowgraph}$ where every root blossom is contracted to a single node.

\begin{defn}[\cite{gabow2018data,huang2017approximate}]
	Let $F$, $\Omega$ and $\widehat{\blowgraph}$ be an $f$-factor, a set of blossoms and the graph obtained by contracting every root blossom in the $\Omega$. $\widehat{P} = \langle B_0, e_0, B_1, e_1, \cdots, B_l\rangle \in \widehat{\blowgraph}$ is called an augmenting path if the following requirements are satisfied.
	\begin{enumerate}
		\item The terminals $B_0$ and $B_l$ must be unsaturated singletons or unsaturated light blossoms. If $\widehat{P}$ is a closed walk ($B_0 = B_l$), $B_0$ must be a singleton and the deficiency of $\beta(B_0)$ is at least $2$. Otherwise $B_0$ and $B_l$ can be either singletons or blossoms and their deficiency must be positive.
		\item If the terminal vertex $B_0$ ($B_l$) is a singleton, then the incident terminal edges $e_0$ ($e_{l-1}$) must be unmatched. Otherwise, they can be either matched or unmatched.
		\item Let $B_i$, $0<i<l$ be an internal singleton or blossom. If $B_i$ is a singleton, then exactly one of $e_{i-1}$ and $e_i$ is matched. If $B_i$ is a nontrivial blossom, then $\eta(B_i)=e_{i-1}$ or $e_i$.
	\end{enumerate}
\end{defn}

To avoid misunderstanding, we emphasize the different between the augmenting paths and the augmenting walks. First they are defined on $\widehat{\blowgraph}$ and $\blowgraph$ respectively. Second, an augmenting walk can pass through a vertex in $\blowgraph$ several times but an augmenting path can pass through a vertex in $\widehat{\blowgraph}$ (except the endpoint) only once. In the following parts, these two concepts are used in different scenarios.

Next we define a concept of the alternating path, which is weaker than the concept of the augmenting path.

\begin{defn}
	Let $F$, $\Omega$ and $\widehat{\blowgraph}$ be an $f$-factor, a set of blossoms and the graph obtained by contracting every root blossom in the $\Omega$. A simple path $\widehat{P} = \langle B_0, e_0, B_1, e_1, \cdots, B_l\rangle$ is called an alternating path if it satisfies the following requirements.
	\begin{enumerate}
		\item The terminals $B_0$ must be unsaturated singletons or unsaturated light blossoms.
		\item If the terminal vertex $B_0$ is a singleton, then the incident terminal edges $e_0$ must be non-matching. Otherwise, they can be either matching or non-matching.
		\item For each $1\leq i<l$, if $B_i$ is a singleton, then exactly one of $e_{i-1}, e_i$ is matched. Otherwise, $\eta(B_i)=e_{i-1}$ or $e_i$.
	\end{enumerate}
\end{defn}

\paragraph*{Complementary slackness}
Throughout the algorithm, we will be maintaining an $f$-factor $F$, a set of blossoms $\Omega\subset 2^\blowvertex$, dual functions $y: \blowvertex\rightarrow \mathbb{N}$, $z:\Omega\rightarrow \mathbb{N}_{\geq 0}$ and $yz: \blowedge \rightarrow \mathbb{N}$.

For an $f$-factor $F$, we define four kinds of complementary slackness: complementary slackness, weak complementary slackness, approximate complementary slackness and weak approximate complementary slackness.

\begin{defn}\label{slack}
	In the blowup graph $\blowgraph$, an $f$-factor $F$, duals $y, z$, as well as a laminar family of blossoms $\Omega$ satisfy \textbf{complementary slackness} if the following requirements hold.
	\begin{enumerate}
		\item Dominance. For each edge $e\in \blowedge$, $yz(e)\geq \mu(e)$.
		\item Tightness. For each $e\in F$, $yz(e) = \mu(e)$.
		\item Maturity. For each blossom $B\in \Omega$, $|F\cap(\gamma(B)\cup I(B))| = \left\lfloor\frac{f(B)+|I(B)|}{2}\right\rfloor$.
	\end{enumerate}
	The \textbf{weak complementary slackness} relaxes the requirements to: 
	\begin{enumerate}
		\item Dominance. For each edge $e\in \blowedge \setminus F$, $yz(e)\geq \mu(e)$.
		\item Tightness. For each $e\in F$, $yz(e) \leq \mu(e)$.
		\item Maturity. For each blossom $B\in \Omega$, $|F\cap(\gamma(B)\cup I(B))| = \left\lfloor\frac{f(B)+|I(B)|}{2}\right\rfloor$.
	\end{enumerate}
	We emphasize the difference between these two requirements: for complementary slackness, we require dominance condition on every edge in $\blowedge$, while in the weaker version, we only need dominance on edges not in $F$.
\end{defn}
%We textbfasize that definition of complementary slackness is stronger than the one from \cite{gabow2018data}, in the sense that their definition only requires dominance on edges $e\notin F$, rather than all edges $e\in \blowedge$. 

\begin{defn}\label{approx-slack}
	In the blowup graph $\blowgraph$, an $f$-factor $F$, duals $y, z$, as well as a laminar family of blossoms $\Omega$ satisfy \textbf{approximate complementary slackness} if the following requirements hold.
	\begin{enumerate}
		\item Dominance. For each edge $e\in \blowedge$, $yz(e)\geq \mu(e)-2$.
		\item Tightness. For each $e\in F$, $yz(e)\leq \mu(e)$.
		\item Maturity. For each blossom $B\in \Omega$, $|F\cap(\gamma(B)\cup I(B))| = \left\lfloor\frac{f(B)+|I(B)|}{2}\right\rfloor$.
	\end{enumerate}
	The \textbf{weak approximate complementary slackness} relaxes the requirements to: 
	\begin{enumerate}
		\item Dominance. For each edge $e\in \blowedge \setminus F$, $yz(e)\geq \mu(e)-2$.
		\item Tightness. For each $e\in F$, $yz(e)\leq \mu(e)$.
		\item Maturity. For each blossom $B\in \Omega$, $|F\cap(\gamma(B)\cup I(B))| = \left\lfloor\frac{f(B)+|I(B)|}{2}\right\rfloor$.
	\end{enumerate}
\end{defn}

\begin{lemma}[\cite{huang2017approximate}]\label{opt}
	Let $F$ be a perfect $f$-factor associated with duals $y, z$ and blossoms $\Omega$, and define perfect $F^*$ to be the maximum perfect $f$-factor. Suppose $F, \Omega, y, z$ satisfy approximate complementary slackness, then
	$$\mu(F)\geq \mu(F^*) - f(\blowvertex)$$
\end{lemma}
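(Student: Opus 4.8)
The plan is to bound the duality gap between the primal value $\mu(F)$ and the dual objective evaluated at $(y,z,u)$, where $u$ is the slack function chosen as $u(e)=\max\{0,\mu(e)-yz(e)\}$ so that $(y,z,u)$ is dual-feasible. First I would invoke LP duality (weak duality) to get $\mu(F^*)\le \sum_{v}f(v)y(v)+\sum_{(B,I)}\lfloor\frac{f(B)+|I|}{2}\rfloor z(B,I)+\sum_e u(e)$, noting that since $F^*$ is a perfect $f$-factor it is primal-feasible. Then I would expand $\mu(F)=\sum_{e\in F}\mu(e)$ and compare it term by term against the dual objective, using the three approximate complementary slackness conditions.

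The main computation is the standard "unfolding" of $\sum_{e\in F}yz(e)$. By the definition $yz(u,v)=y(u)+y(v)+\sum_{(B,I):(u,v)\in\gamma(B)\cup I}z(B,I)$, summing over $e\in F$ and swapping the order of summation gives $\sum_{e\in F}yz(e)=\sum_{v}\deg_F(v)\,y(v)+\sum_{(B,I)}|F\cap(\gamma(B)\cup I)|\,z(B,I)$. Since $F$ is a perfect $f$-factor, $\deg_F(v)=f(v)$, so the first term is exactly $\sum_v f(v)y(v)$. For the second term, the Maturity condition says $|F\cap(\gamma(B)\cup I(B))|=\lfloor\frac{f(B)+|I(B)|}{2}\rfloor$ for each $B\in\Omega$ (and $z(B,I)=0$ for pairs not maintained by the algorithm), which matches the coefficient in the dual objective. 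Hence $\sum_{e\in F}yz(e)$ equals the first two terms of the dual objective exactly. The Tightness condition $yz(e)\le\mu(e)$ for $e\in F$ gives $\sum_{e\in F}yz(e)\le\mu(F)$, and therefore the sum of the first two dual terms is at most $\mu(F)$. It remains to control $\sum_e u(e)$: by Dominance, $yz(e)\ge\mu(e)-2$ for every $e\in\blowedge$, so $u(e)=\max\{0,\mu(e)-yz(e)\}\le 2$ for every edge, and $u(e)=0$ unless $e$ contributes. To get the claimed bound $f(\blowvertex)$ rather than $2|\blowedge|$, I would restrict attention to the edges that actually carry positive slack: one argues (as in \cite{huang2017approximate,duan2018scaling}) that the edges with $u(e)>0$ can be charged to vertices so that the total is at most $\sum_{v}f(v)\cdot 1 = f(\blowvertex)$ — concretely, in the blowup graph each such edge can be associated with a distinct "slot" at one of its endpoints, and each vertex $v$ has only $f(v)$ slots. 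Combining, $\mu(F^*)\le \mu(F)+\sum_e u(e)\le\mu(F)+f(\blowvertex)$, which is the claim.

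The step I expect to be the main obstacle is pinning down the bound $\sum_{e}u(e)\le f(\blowvertex)$ with the right constant: naively each of $|\blowedge|$ edges contributes up to $2$, which is far too weak. The fix uses the special structure of the blowup graph (every auxiliary vertex has $f\equiv 1$ and degree $3$, and original vertices have the $f$-factor constraint) together with the fact that edges with $\mu(e)-yz(e)>0$ are precisely the ones where Tightness would be violated, so they behave like "almost-matched" edges and can be injectively charged to saturated slots. I would cite the corresponding lemma in \cite{huang2017approximate} for this charging argument rather than reprove it, since Lemma~\ref{opt} is attributed there. If one only needs a bound of the form $\mu(F)\ge\mu(F^*)-O(f(\blowvertex))$ the argument is immediate; obtaining the clean constant $f(\blowvertex)$ is where the care goes.
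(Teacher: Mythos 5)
Your overall skeleton is right, and the ``unfolding'' computation $\sum_{e\in F}yz(e)=\sum_v \deg_F(v)y(v)+\sum_B |F\cap(\gamma(B)\cup I(B))|z(B)$ together with tightness and maturity is exactly the core of the paper's argument. But the step you flag as ``the main obstacle'' is in fact a genuine gap, and the fix you propose does not work. With $u(e)=\max\{0,\mu(e)-yz(e)\}$, approximate dominance only gives $u(e)\le 2$ per edge; there is no constraint whatsoever on how many edges satisfy $yz(e)<\mu(e)$, and it is perfectly possible that every non-matched edge has $yz(e)=\mu(e)-2$. The ``charging to $f(v)$ slots'' idea has no basis: those are not almost-matched edges in any sense that limits their number, and no such lemma exists in the cited papers for this choice of $u$. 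You do get $\sum_e u(e)\le 2|\blowedge|\le 3f(\blowvertex)$ for free (since $f(\blowvertex)\ge 2m$ and $|\blowedge|=3m$), which justifies your last sentence, but the constant $1$ does not follow by this route.

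The paper sidesteps this entirely by defining $u(e)=\mu(e)-yz(e)$ for $e\in F$ and $u(e)=0$ for $e\notin F$. This $u$ is nonnegative by tightness, but $(y,z,u)$ is \emph{not} dual-feasible: for $e\notin F$ it only satisfies the relaxed inequality $yz(e)+u(e)\ge\mu(e)-2$. So you cannot invoke exact weak LP duality; instead you repeat the weak-duality calculation by hand. One has $\mu(F)=\sum_{e\in F}(yz(e)+u(e))=\sum_v f(v)y(v)+\sum_B\lfloor(f(B)+|I(B)|)/2\rfloor z(B)+\sum_{e\in\blowedge}u(e)$, then drops down to the corresponding primal-feasible expression for $F^*$ (using $\deg_{F^*}(v)=f(v)$, the blossom constraint on $F^*$, and $u\ge 0$), and finally applies the \emph{approximate} inequality $yz(e)+u(e)\ge\mu(e)-2$ edge by edge over $F^*$, giving a loss of $2|F^*|=f(\blowvertex)$. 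The crucial maneuver is that the $-2$ slack lands on $F^*$ (of which there are only $f(\blowvertex)/2$ edges) rather than on all of $\blowedge$. Keeping $u$ supported on $F$ rather than making it exactly feasible is what makes this possible.
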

\begin{proof}
	We first define $u:\blowedge \to \mathbb{R}$ as
	\begin{displaymath}
	u(e) = \left\{
	\begin{array}{lr}
	\mu(e) - yz(e), &\text{if } e \in F\\
	0, &\text{otherwise} 
	\end{array}
	\right.
	\end{displaymath}
	According the approximate domination and tightness properties, we have $u(e) \ge 0$ for  all $e \in \blowedge$. Moreover, $yz(e) + u(e) \ge \mu(e) - 2$ for all $e \in \blowedge$. This gives the following:
	\begin{align*}
	\mu(F) &= \sum_{e\in F} (yz(e)+u(e)) \\
	&= \sum_{v\in V} deg_F(v)y(v) + \sum_{B\in \Omega}|F \cap (\gamma(B)\cup I(B))|z(B) + \sum_{e \in F} u(e) \\
	&= \sum_{v\in V} f(v)y(v) + \sum_{B\in \Omega}\left\lfloor\frac{f(B)+|I(B)|}{2}\right\rfloor z(B) + \sum_{e \in \blowedge} u(e) \\
	&\ge \sum_{v\in V} deg_{F^*}(v)y(v) + \sum_{B\in \Omega}|F^* \cap (\gamma(B)\cup I(B))|z(B) + \sum_{e \in F^*} u(e) \\
	&\ge \sum_{e\in F^*} (\mu(e) - 2) \\
	&\ge \mu(F^*) - f(V)\qedhere
	\end{align*}
\end{proof}

\subsection{Edmonds search}
In this subsection, we introduce two different implementations of Edmonds search. Suppose we have an $f$-factor $F$, a set of blossoms $\Omega$, and duals $y, z$ satisfying some kind of slackness condition. The purpose of Edmonds search is to reduce total deficiency of $F$ by eligible augmenting paths. We need two different notions of eligibility, namely eligibility and approximate eligibility, compatible with Definition~\ref{slack} or Definition~\ref{approx-slack}.
\begin{defn}[Eligibility, \cite{gabow2018data}]\label{elig}
	An edge $e\in \blowedge$ is eligible if $yz(e) = \mu(e)$.
\end{defn}

\begin{defn}[Approximate Eligibility, \cite{huang2017approximate}]\label{approx-elig}
	An edge $e\in E$ is approximately eligible if it satisfies one of the following.
	\begin{enumerate}
		\item $e\in \blowedge_B$ for some $B\in \Omega$.
		\item $e\notin F$ and $yz(e) = \mu(e)  -2$.
		\item $e\in F$ and $yz(e) = \mu(e)$.
	\end{enumerate}
\end{defn}
Let $\celig$ be the subgraph of $\widehat{\blowgraph}$ consisting of eligible edges. A root blossom $B^\prime\in \Omega$ is called reachable from an unsaturated root blossom $B$ via an alternating path in $\celig$, if there is an alternating path that starts at $B$ and ends at $B^\prime$. To find augmenting paths and blossoms in $\celig$, we start from any unsaturated node $u_0$ in the contraction graph $\widehat{\blowgraph}$ and grow a search tree $\tree$ rooted at $u_0$; this method was also described in \cite{gabow2018data,huang2017approximate}. All nodes in $\tree$ are classified as \textbf{outer/inner}. Initially the root is outer. Next we use a DFS-like approach to build the entire $\tree$. During the process, we keep track of a tree path $\langle u_0, e_0, u_1, \cdots, e_{l-1}, u_l\rangle$ from the root, which is guaranteed to be an alternating path. According to the type of $u$, the next edge $e_{l}$ and node $u_l$ are selected by the rules below:
\begin{enumerate}
	\item $u_l$ is outer. If $u_l$ is a singleton, then scan the next non-matching edge $e_l$ and find the other endpoint $u_{l+1}$. If $u_l$ is a nontrivial blossom, then scan the next edge $e_l$ and find the other endpoint $u_{l+1}$.
	
	\item $u_l$ is inner. If $u_l$ is a singleton, then scan the next matching edge $e_l$ and find the other endpoint $u_{l+1}$. If $u_l$ is nontrivial blossom, then assign $e_l = \eta(u_l)$ (if it was not scanned before) and find the other endpoint $u_{l+1}$.
\end{enumerate}

After finding $u_{l+1}$, we try to classify it as outer or inner: if $u_{l+1}$ is a singleton, then $u_{l+1}$ is outer if $e_l$ is matched; otherwise, $u_{l+1}$ is outer if $e_l = \eta(u_{l+1})$. Issues may arise if (1) $u_{l+1}$ was already classified by previous tree searches and there is a conflict between the new label and the old label; or (2) $u_{l+1}$ is an unsaturated then the tree search has found a new augmenting path. In either case we can construct a new blossom or reduce the total deficiency.

\begin{comment}
It might be the case that $\langle u_0, e_0, u_1, \cdots, e_{l}, u_{l+1}\rangle$ is an augmenting path, or $u_{l+1} = u_i$ for some $0\leq i\leq l-1$ and in this case $\langle u_i, e_{i+1}, \cdots, e_{l-1}, u_l, e_{l}\rangle$ would make a mature blossom. In case of an augmentation, the DFS procedure aborts and returns an augmenting path; in case of a blossom, we contract the new blossom and add it to $\Omega$. When no augmenting walk nor blossom is found, we need to classify $u_{l+1}$ as outer / inner in the following way: if $u_{l+1}$ is a singleton, then $u_{l+1}$ is outer if $e_l$ is matched; otherwise, $u_{l+1}$ is outer if $e_l = \eta(u_{l+1})$.
\end{comment}

In the end, when all reachable singletons or root blossoms are classified as outer or inner, let $\widehat{\blowvertex}_{\text{out}}$ be the set of all outer singletons or root blossoms, and let $\widehat{\blowvertex}_{\text{in}}$ be the set of all inner singletons or root blossoms. Define $\blowvertex_{\text{out}}, \blowvertex_{\text{in}}$ to be the set of all vertices in $\blowvertex$ contained in outer and inner root blossom, respectively. Next we introduce a meta procedure that will be a basic building block, which is dual adjustment. A dual adjustment performs the following step: decrement $y(v)$ for all $v\in\blowvertex_\text{out}$, and increment $y(v)$ for all $v\in\blowvertex_\text{in}$; after that, increment by $2$ all $z(B)$ for all non-singleton root blossoms $B\in\widehat{\blowvertex}_\text{out}$, and decrement by $2$ all $z(B)$ for all non-singleton root blossoms $B\in\widehat{\blowvertex}_\text{in}$. The algorithm so far is summarized as the \textsf{AdjustDuals} algorithm~\ref{dual-adjust-no-extra}. 

\begin{algorithm}
	\caption{$\textsf{AdjustDuals}(F, \Omega, y, z)$}
	\label{dual-adjust-no-extra}
	classify every root blossom in $\Omega$ as outer or inner \;
	let $\widehat{\blowvertex_{\text{out}}}$/$\widehat{\blowvertex_{\text{in}}}$ be the set of all outer/inner root blossoms in $\celig$\;
	let $\blowvertex_{\text{out}}$/$\blowvertex_{\text{in}}$ be the set of all vertices in $\blowvertex$ contained in outer/inner root blossoms\;
	adjust the duals $y, z$ as follows:
	$$\begin{aligned}
	&y(v)\leftarrow y(v)-1, v\in \blowvertex_{\text{out}}\\
	&y(v)\leftarrow y(v)+1, v\in \blowvertex_{\text{in}}\\
	&z(B)\leftarrow z(B)+2, \text{for non-singleton root blossoms $B$ in $\widehat{\blowvertex_{\text{out}}}$}\\
	&z(B)\leftarrow z(B)-2, \text{for non-singleton root blossoms $B$ in $\widehat{\blowvertex_{\text{in}}}$}\\
	\end{aligned}$$
\end{algorithm}

\subsubsection{Bounded dual adjustments}
The first implementation of Edmonds search consists of three main steps below in the \textsf{EdmondsSearch} algorithm~\ref{bounded} \cite{gabow2018data,huang2017approximate}: (1) Augmentation and blossom formation, (2) Dual adjustment and recover, (3) Blossom dissolution. The \textsf{EdmondsSearch} algorithm requires that the y-values of all $F$ vertices have the same parity. Notice that it only performs one dual adjustment and must be used with approximate eligibility.
\begin{algorithm}
	\caption{$\textsf{EdmondsSearch}(F, \Omega, y, z)$}
	\label{bounded}
	\tcc{Precondition: unsaturated vertices must all be of the same parity}
	\tcc{Augmentation and Blossom Formation from all unsaturated vertices}
	find a maximal set $\widehat{\Psi}$ of a vertex-disjoint augmenting paths in $\celig$ and extend $\widehat{\Psi}$ to a set $\Psi$ of vertex-disjoint augmenting walks in $\elig$\;
	find a maximal set $\Omega^\prime$ of reachable mature blossoms on $\celig$\;
	update $F\leftarrow F\oplus \bigcup_{P\in \Psi}P$, $\Omega\leftarrow \Omega\cup \Omega^\prime$ and $\celig$\;
	\tcc{Dual Adjustment}
	run $\textsf{AdjustDuals}(F, \Omega, y, z)$\;
	\tcc{Recover}
	for every matching edge $(u, v)$ does not satisfy the dominance condition, choose an auxiliary node $u$, $y(u) \leftarrow \mu(u,v) - y(v) - \sum_B z(B)$\;	
	\tcc{Blossom Dissolution}
	recursively remove all root blossoms whose dual value is zero\;
\end{algorithm}

\begin{lemma}\label{boundedrunningtime}(\cite{huang2017approximate})
	The \textsf{EdmondsSearch} algorithm can be implemented in $O(m)$ time.
\end{lemma}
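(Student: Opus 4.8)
The plan is to bound the cost of each of the four stages of \textsf{EdmondsSearch} separately, using the fact that a single call performs only \emph{one} dual adjustment, so a near-linear bound per stage suffices; here $m$ refers to $|\blowedge| = 3m$ and $|\blowvertex| = n+2m$, both $O(m)$.

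\textbf{Augmentation and blossom formation.} I would grow the alternating search trees by the DFS rules described above, scanning edges of $\celig$ out of outer nodes (and the $\eta$-edge out of inner non-trivial blossoms). Each eligible edge is examined only $O(1)$ times: once it has been traversed or rejected it is never reconsidered, because it either enters the current tree path, closes a blossom (and is then interior to a root blossom), or completes an augmenting path. When a conflict between the new and old label of an endpoint is detected, we contract the corresponding closed walk into a new blossom; to keep the total relabelling cost linear I would maintain the partition of $\blowvertex$ into root blossoms with an incremental-tree set-union structure, under which the $O(|\blowvertex|)=O(m)$ contractions performed during the search take $O(m)$ time overall. Augmenting paths are handled greedily: each time one is found we augment $F$ along it and delete its vertices from the search, so at the end $\widehat{\Psi}$ is a maximal vertex-disjoint family in $\celig$. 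Expanding $\widehat{\Psi}$ to the family $\Psi$ of augmenting walks in $\elig$ uses, for every contracted blossom $B$ met on a path, one of the alternating walks $P_0(v), P_1(v)$ through $\blowedge_B$ guaranteed by the blossom-as-a-unit lemma; blossoms met by a vertex-disjoint collection of paths have pairwise disjoint edge sets, so this costs $O(\sum_B |\blowedge_B|) = O(m)$. Constructing $\Omega^\prime$, the maximal set of reachable mature blossoms, is a by-product of the same search.

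\textbf{Dual adjustment, recover, and dissolution.} The outer/inner classification of root blossoms is produced by the search, so \textsf{AdjustDuals} only has to add $\pm 1$ to $y(v)$ for each $v \in \blowvertex_{\text{out}} \cup \blowvertex_{\text{in}}$ and $\pm 2$ to $z(B)$ for each non-singleton root blossom, which is $O(|\blowvertex|) = O(m)$; rebuilding $\celig$ after the augmentation and the adjustment re-tests eligibility of each edge $O(1)$ times, hence $O(m)$. In the recover step every matching edge is inspected once and, if it now violates dominance, the $y$-value of one of its auxiliary endpoints is corrected by a single arithmetic operation using the maintained $yz$ values; total $O(m)$. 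For dissolution, the blossoms ever present form a laminar family on the $O(m)$ vertices of $\blowgraph$, so there are $O(m)$ of them and each is dissolved at most once: dissolving a root blossom $B$ with $z(B)=0$ only discards its top-level cycle $C_B$ and promotes $B_0,\dots,B_{l-1}$ to root blossoms (recursing on those with zero dual), and since the cycles $C_B$ are edge-disjoint across blossoms the total work is $O(|\blowedge|) = O(m)$.

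\textbf{Main obstacle.} The delicate point is the blossom bookkeeping in the first stage: obtaining a clean $O(m)$ (rather than $O(m\,\alpha(m,n))$) for the repeated contractions, and implementing contraction/expansion so that the DFS can continue in amortized constant time per edge after each blossom is formed. This is where the incremental-tree set-union machinery and the standard pointer representation of contracted blossoms are needed; everything else is routine accounting. Note that correctness — that $F, \Omega, y, z$ still satisfy the relevant approximate slackness after the call — is not part of this running-time claim and is argued separately.
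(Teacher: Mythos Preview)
Your proposal is correct and follows essentially the same approach as the paper: bound each of the four stages of \textsf{EdmondsSearch} by $O(m)$, with the augmentation/blossom-formation stage handled by a DFS in which every edge is explored only a constant number of times and the remaining stages by constant work per edge. The paper's proof is considerably terser (it largely defers to the cited reference and says little more than ``every edge is explored at most once''), whereas you spell out the set-union bookkeeping for contractions and the $P_0/P_1$ expansion of paths through blossoms; these details are reasonable and do not change the argument.
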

\begin{proof}
	First, we claim the Augmentation and blossom formation step can be implemented in $O(m)$ time.Basically we follow the paradigm of depth-first search. Iterate over all unsaturated nodes $u$ in $\celig$ and construct the search tree $\tree_u$. Add the augmenting path to $\Psi$, if any, and remove the entire tree $\tree_u$ from $\celig$ so that DFS procedure from other unsaturated nodes would avoid edges and nodes that were searched before. It is easy to see that every edge is explored at most once, so the whole DFS procedure takes $O(m)$ time.
	
	For the dual adjustment, recover and blossom dissolution steps, every edge is explored in constant time and the total running time is bounded by $O(m)$.
\end{proof}

\begin{lemma}[\cite{huang2017approximate}]
	In the \textsf{EdmondsSearch} algorithm, after augmentation and blossom formation, $\celig$ does not contain any augmenting paths.
\end{lemma}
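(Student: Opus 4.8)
The plan is to argue by contradiction. Suppose that after the augmentation-and-blossom-formation step the recomputed eligible subgraph $\celig$ still contains an augmenting path $\widehat{P} = \langle B_0, e_0, B_1, \dots, B_l\rangle$, and among all such paths pick one that intersects the set $\Psi$ of augmenting walks and the set $\Omega'$ of newly created blossoms as little as possible (in particular, of minimum length). First I would record two structural facts about the DFS that produces $\widehat\Psi$ and $\Omega'$. (i) When the search halts it has classified \emph{every} root blossom that is reachable by an alternating path in $\celig$ from an unsaturated node as outer or inner, and it has kept extending each tree path until no such extension reaches a fresh unsaturated node or closes an alternating cycle; so at halting time no reachable tree path can be prolonged into a new augmenting path or a new blossom. (ii) Augmenting along a walk of $\Psi$ only decreases deficiencies, hence both endpoints of $\widehat P$ were already unsaturated before the step (as singletons or inside smaller blossoms), so a search tree was in fact grown from each of them.

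The core of the argument is an eligibility-bookkeeping step. Under approximate eligibility (Definition~\ref{approx-elig}), an edge $e$ on a walk of $\Psi$ that lay in $F$ with $yz(e)=\mu(e)$ ceases to be eligible once it is flipped out of $F$ (it would then need $yz(e)=\mu(e)-2$), and symmetrically an eligible non-matching edge with $yz(e)=\mu(e)-2$ becomes ineligible once flipped into $F$; the only eligible edges whose status the flips do not destroy are those lying inside some blossom of $\Omega\cup\Omega'$. Consequently every edge of $\widehat P$ is either an edge that was already eligible before the step or lies inside a (possibly new) blossom. Projecting away the new blossoms, $\widehat P$ corresponds to an alternating walk $\widehat P_0$ that was present in the eligible graph at the time the DFS ran. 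Since the DFS had classified all reachable nodes and exhausted all extensions, $\widehat P_0$ must run through some vertex that the DFS ``consumed'', i.e.\ a vertex on a walk of $\Psi$ — otherwise the DFS would have discovered $\widehat P_0$ as an augmenting path or absorbed a segment of it into a blossom, contradicting maximality of $\widehat\Psi$ or of $\Omega'$. Taking the first and last vertices shared with $\Psi$ and performing the standard crossing/exchange on the two augmenting walks then yields either a shorter augmenting path or one vertex-disjoint from $\Psi$, contradicting the minimality/maximality choices.

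I expect the main obstacle to be the bookkeeping around newly formed blossoms: a blossom of $\Omega'$ can contract an entire segment of $\widehat P$, so the ``projection to $\widehat P_0$'' is not literally a subpath, and one must use the lemma on even/odd alternating walks $P_0(v),P_1(v)$ inside a blossom to reroute $\widehat P$ through blossom interiors consistently with the matched/unmatched pattern required of an augmenting path — in particular getting the parity right at each base $\beta(B)$, where $\eta(B)$ forces the type of the terminal edge, and checking that the rerouted endpoints are still unsaturated singletons or light blossoms. Once this rerouting is in place, the vertex-disjointness and crossing part is routine Hopcroft–Karp-style reasoning; I would also need to confirm that flipping the walks of $\Psi$ cannot \emph{create} an eligible edge outside the new blossoms (which the parity computation above already shows), so that $\celig$ after the step is contained, up to blossom interiors, in $\celig$ before the step.
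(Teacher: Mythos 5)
You've correctly identified the one essential ingredient the paper also uses — that flipping an edge's matched/unmatched status along a walk of $\Psi$ destroys its approximate eligibility, since an eligible non-matching edge needs $yz(e)=\mu(e)-2$ while an eligible matching edge needs $yz(e)=\mu(e)$. But the surrounding argument you build is both far heavier than the paper's and, as you yourself acknowledge, unfinished. The paper's proof is a three-line local argument: by maximality of $\widehat\Psi$, any surviving augmenting path $P$ must share a contracted node $v$ with some $P'\in\widehat\Psi$; every edge of $P'$ became ineligible after the flip; so the matching edge of $P$ at $v$ is no longer in $\celig$, contradicting that $P$ is eligible. There is no projection back through the newly formed blossoms, no appeal to DFS exhaustiveness beyond maximality, and no crossing/exchange step.

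Your route — project $\widehat P$ through the new blossoms to an ``old'' walk $\widehat P_0$, then run a Hopcroft–Karp-style exchange with $\Psi$ to manufacture a shorter or vertex-disjoint augmenting path — imports machinery that does not transfer cleanly to $f$-factors with blossoms, and these are exactly the places you flag as obstacles rather than resolve. Two concrete issues. First, the ``crossing/exchange'' step: at a shared node $v$ that is an original vertex with $f(v)>1$, the two walks can meet at $v$ through entirely disjoint edge pairs, and at a shared blossom node the base/$\eta$ alternation constraints and parity of the $P_0(v),P_1(v)$ re-routings must all line up; ``routine Hopcroft–Karp-style reasoning'' does not, by itself, produce a valid augmenting path in this setting. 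Second, the projection through new blossoms of $\Omega'$ is not a subpath of $\widehat P$, so ``minimum length among those intersecting $\Psi$ and $\Omega'$ as little as possible'' is not even a well-posed minimality criterion without pinning down how blossom contraction interacts with the length you are minimizing. The paper sidesteps both problems by locating the contradiction entirely at the single edge of $P$ incident to the intersection vertex, so no surgery on walks and no blossom re-routing is needed. You should replace the last two stages of your plan (projection + exchange) with the direct eligibility contradiction at the shared node.
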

\begin{proof}
	Suppose that, after the augmentation and blossom formation, there is an augmenting path $P$ in $\celig$. Since $\widehat{\Psi}$ is maximal, $P$ must intersect some augmenting path $P^\prime \in \widehat{\Psi}$ at a vertex $v$. However, after the augmentation and blossom formation every edge in $P^\prime$ will become ineligible, so the matching edge $(v,v^\prime) \in P$ is no longer in $\celig$, contradicting the fact that $P$ consists of eligible edges.
\end{proof}

\begin{lemma}\label{preserve1}
	The \textsf{EdmondsSearch} algorithm preserves approximate complementary slackness under the approximate eligibility definition.
\end{lemma}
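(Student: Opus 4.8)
The plan is to track the three conditions of Definition~\ref{approx-slack} (dominance on all of $\blowedge$, tightness on $F$, and maturity of every $B\in\Omega$) separately through the four stages of \textsf{EdmondsSearch}: augmentation together with blossom formation, the single call to \textsf{AdjustDuals}, the recover step, and blossom dissolution. Two elementary facts will be used repeatedly. First, under the approximate eligibility of Definition~\ref{approx-elig}, an eligible edge $e\notin F$ has $yz(e)=\mu(e)-2$ and an eligible edge $e\in F$ has $yz(e)=\mu(e)$. Second, every edge lying inside a blossom of $\Omega$ is eligible with a $yz$-value consistent with its $F$-membership in this same sense; this holds at the moment a blossom is created (by induction on the blossom definition, since its cycle edges are eligible and its sub-blossoms were eligible) and is maintained afterwards because $yz(e)=y(u)+y(v)+\sum_{B:\,e\in\gamma(B)\cup I(B)}z(B)$ is invariant on such an edge under any uniform $\pm1$ shift of the $y$-values of a root blossom together with the $\mp2$ shift of that blossom's $z$-value.

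For the augmentation stage: flipping an eligible non-matching edge into $F$ turns a $yz=\mu-2$ edge into a matched edge, for which only $yz\le\mu$ is required; flipping an eligible matching edge out of $F$ turns a $yz=\mu$ edge into a non-matched edge, for which only $yz\ge\mu-2$ is required; and the same bookkeeping covers the interior edges of pre-existing blossoms that are re-oriented when an augmenting walk in $\elig$ is routed through a blossom. Newly created blossoms in $\Omega'$ are mature by construction and have eligible interiors, so dominance, tightness and maturity all hold for them. For a pre-existing blossom whose base and hence $I(B)$ rotate because an augmenting walk passes through it, I would invoke the structural property of $f$-factor blossoms from~\cite{gabow2018data,huang2017approximate} that the blossom remains a valid mature blossom and the identity $|F\cap(\gamma(B)\cup I(B))|=\left\lfloor\frac{f(B)+|I(B)|}{2}\right\rfloor$ is preserved. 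By the lemma immediately preceding this one, after this stage $\celig$ contains no augmenting path, which is the hypothesis that makes the dual adjustment safe.

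For the dual adjustment stage: from the formula for $yz$ one reads off that \textsf{AdjustDuals} changes $yz(e)$ by $0$ when $u,v$ lie in one root blossom or in an outer and an inner one, by $\mp1$ for an edge joining a reached outer/inner blossom to an unreached vertex, and by $\mp2$ for an edge between two distinct outer/inner root blossoms. The crux is that after the augmentation stage the eligible subgraph has been explored to saturation, so every eligible non-matching edge out of an outer blossom reaches an inner blossom or the same blossom, and every eligible matching edge out of an inner blossom reaches an outer blossom or the same blossom; hence $yz$ is unchanged on every eligible edge, preserving dominance and tightness there, while on a non-eligible non-matching edge the integrality slack of at least $1$ absorbs a $-1$ change and a $-2$ change cannot occur by the same saturation argument, and tightness on matching edges is maintained likewise. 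The one case that genuinely fails is a matching edge both of whose endpoints end up outer: its $yz$ can fall below $\mu-2$, and repairing exactly these edges is the purpose of the recover step — this interaction is the main obstacle of the proof.

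For recover and dissolution: every matching edge of $\blowgraph$ has an auxiliary endpoint, and an auxiliary vertex has exactly two incident edges with exactly one of them matched, so recover may reset the $y$-value of that auxiliary endpoint to make $yz=\mu$ on the offending matched edge, restoring both tightness and dominance there; since the step is triggered only when $yz<\mu-2<\mu$, this reset only raises the auxiliary vertex's $y$-value, so the unique other (non-matching) edge at that vertex only gains in $yz$ and keeps dominance, and maturity is untouched because $F$, $\Omega$ and $z$ are unchanged. Blossom dissolution removes only root blossoms with $z(B)=0$, which changes no $yz$-value, and by the standard property of $f$-factor blossoms~\cite{gabow2018data,huang2017approximate} the family that remains is still laminar and mature with the required cardinality identity; so all three conditions survive. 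Assembling the four stages yields the lemma, with the delicate point being, as noted, the way the recover step is forced to clean up after the dual adjustment on matching edges that become doubly-outer, together with the saturation properties of the search that keep everything else in range.
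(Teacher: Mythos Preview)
Your treatment of the recover step matches the paper's almost exactly: the paper too observes that recover is triggered on a matching edge only when $yz_0(u,v)<\mu(u,v)-2$, that resetting the auxiliary endpoint's $y$-value forces $yz(u,v)=\mu(u,v)$ (so both dominance and tightness hold there), and that since tightness already gave $yz_0(u,v)\le\mu(u,v)$ the reset can only \emph{raise} $y(u)$, so dominance on the unique other (non-matching) edge at $u$ is preserved. The paper, however, does not argue the earlier stages at all: it simply invokes \cite{huang2017approximate} for the fact that \emph{weak} approximate complementary slackness (dominance on $\blowedge\setminus F$, tightness on $F$, maturity) survives augmentation, blossom formation, and the single dual adjustment, and then shows that recover upgrades ``weak'' to full approximate complementary slackness by repairing dominance on the matching edges.

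Where you try to replace that citation with a direct argument there is a real gap. Your assertion that on a non-eligible non-matching edge ``a $-2$ change cannot occur by the same saturation argument'' is not justified: saturation of the search constrains only \emph{eligible} edges, and nothing in it prevents an \emph{ineligible} non-matching edge from joining two distinct outer root blossoms. Such an edge sees $yz$ drop by $2$ under \textsf{AdjustDuals}, and an integrality slack of merely $1$ would not absorb that. What actually closes this case (and the symmetric inner--inner matching case you wave past with ``likewise'') is the parity hypothesis in the precondition, which you never invoke: every vertex reached in the search inherits the $y$-parity of the unsaturated roots through eligible edges (whose $yz-\mu\in\{-2,0\}$), and with even $\mu$ and $z$ this forces $yz-\mu$ to be even on every edge inside the reached region; hence an ineligible non-matching edge there already has $yz\ge\mu$, and survives the $-2$ shift with $yz\ge\mu-2$.
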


\begin{proof}
	We already know from \cite{huang2017approximate} that until the recover step, weak approximate complementary slackness is preserved. So we only need to reason about the recover step. The only purpose of the recover step is to restore approximate complementary slackness from weak approximate complementary slackness. Let $y_0, z_0, yz_0$ be the duals before the recover step, which guarantees weak approximate complementary slackness. 
	
	Consider any matching edge $(u, v)$ such that $yz_0(u, v) < \mu(u, v)-2$. Assume $u$ is the auxiliary vertex that undergoes a dual change in the recover step, and let $w$ be its only neighbor such that $(u, w)$ is a non-matching edge. The dual of each vertex changes only once since every auxiliary vertex is adjacent to at most one matching edge. To argue about approximate complementary slackness, we need to verify tightness and dominance conditions on the matching edge $(u, v)$ and dominance condition on the non-matching edge $(u, w)$.
	
	For the matching edge $(u, v)$, our recover step enforces $$yz(u,v) = y(u) + y_0(v) + \sum_{B:(u,v)\in \gamma(B)\cup I(B)}z(B) = \mu(u,v)$$ So dominance and the tightness are satisfied simultaneously.
	
	For the non-matching edge $(u, w)$, by tightness condition $$ y(u) = \mu(u, v) - y_0(v) - \sum_{B:(u,v)\in \gamma(B)\cup I(B)}z(B) \ge yz_0(u,v) - y_0(v) - \sum_{B:(u,v)\in \gamma(B)\cup I(B)}z(B) = y_0(u)$$ we thus have $yz(u, w) = y(u) + y_0(w) + \sum_{B:(w,v)\in \gamma(B)\cup I(B)}z(B) \ge yz_0(u, w) \ge \mu(u, w) - 2$. Thus dominance is satisfied.
\end{proof}

\subsubsection{Unbounded dual adjustments}
The second implementation of Edmonds search is described in pseudo-code~\ref{unbounded}. It also requires that the y-values of all $F$ vertices (a subset of unsaturated vertices) have the same parity. This algorithm searches for an augmenting path only from a set $U$ of unsaturated vertices whose $y$ values share the same parity, which means the augmenting path must have at least one end in $U$ and possibly both. The search iteratively performs Blossom Formation, Dual Adjustment, and Blossom Dissolution, halting after finding an augmenting path from vertices in $U$ or making $D$ dual adjustments.

\begin{algorithm}
	\caption{$\textsf{PQ-Edmonds}(F, \Omega, y, z, U, D)$}
	\tcc{Precondition: $\{y(u)|u\in U\}$ must all be of the same parity}
	\label{unbounded}
	\While{no augmenting paths from vertices in $U$ are found, or less than $D$ dual adjustments have been made so far}{
		\tcc{Blossom Formation}
		find a maximal set $\Omega^\prime$ of reachable mature blossoms on $\celig$\;
		update $\Omega\leftarrow \Omega\cup \Omega^\prime$ and $\celig$\;
		\tcc{Dual Adjustment}
		run $\textsf{AdjustDuals}(F, \Omega, y, z)$\;
		\tcc{Blossom Dissolution}
		recursively remove all root blossoms whose dual value is zero\;
	}
	\tcc{recover complementary slackness from weak complementary slackness}
	for every matching edge $(u, v)$ does not satisfy the dominance condition, choose an auxiliary node $u$, $y(u) \leftarrow \mu(u,v) - y(v) - \sum_B z(B)$
\end{algorithm}

\begin{lemma}\label{pq-search}
	The \textsf{PQ-Edmonds} algorithm can be implemented in $O(m\log n)$ time. Moreover, the $y(u)$ for unsaturated vertex $u$ not in $U$ will not be increased during the algorithm.
\end{lemma}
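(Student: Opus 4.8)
The plan for the running-time bound is to implement \textsf{PQ-Edmonds} with the standard priority-queue form of Edmonds search, exactly as for ordinary matching in \cite{duan2018scaling} and for $f$-factors in \cite{gabow2018data}. The difficulty to sidestep is that performing each of the up to $D$ dual adjustments explicitly costs $\Theta(mD)$; instead I would apply them lazily. Keep one global counter $t$ equal to the number of dual adjustments made so far, and for each root blossom $B$ store the value of $t$ at which it last became an outer (resp.\ inner) node together with a sign bit. The current $y$-value of a vertex and the current $z$-value of a non-singleton root blossom are then obtained by adding $\mp(t-t_B)$ and $\pm 2(t-t_B)$, respectively, to the stored values; a dual adjustment becomes the single operation $t\leftarrow t+1$, and stored values need to be flushed only when a blossom is formed or dissolved.

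On top of this I would grow the search structure incrementally, driven by events. Within a single call there is no augmentation, so $y$ on outer vertices only decreases, eligible edges stay eligible, and reached root blossoms stay reached (up to the relabelling forced when an inner blossom is expanded). Maintain one priority queue holding, for each edge that might still enter $\celig$, the value of $t$ at which it becomes eligible under the present structure (its slack shrinks at a rate fixed by the outer/inner/unreached status of its two endpoints), and a second priority queue holding, for each inner non-singleton root blossom, the value of $t$ at which its $z$ reaches $0$. The main loop repeatedly extracts the minimum event time $t^\star$: if $t^\star>D$ it stops, executes the recover step, and returns; otherwise it sets $t\leftarrow t^\star$ and processes the event, which is one of --- an edge becoming eligible that extends the tree (label the newly reached root blossom and insert the boundary edges of the freshly reached vertices), an edge between two outer blossoms of the same tree (contract the tree path between them, merging the absorbed blossoms' queues and re-keying boundary edges), an edge between two outer blossoms of different trees (report an augmenting path and halt), or a $z$-value reaching $0$ (expand the inner blossom, relabel its sub-blossoms along the internal alternating paths, re-key the affected edges and sub-blossoms). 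Each event costs $O(\log n)$ per edge or sub-blossom it touches; since $\Omega$ is laminar and the search structure grows monotonically within a call, each edge is (re-)keyed $O(1)$ amortized times and each vertex takes part in $O(1)$ amortized blossom operations, so there are $O(m)$ priority-queue operations overall and the time is $O(m\log n)$ --- this is the accounting of \cite{gabow2018data,duan2018scaling} once concatenable queues / splay trees represent the blossom hierarchy and mergeable heaps the edge events.

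For the ``moreover'' statement, note that $y$ changes only inside \textsf{AdjustDuals}, where $y(u)$ is incremented precisely when $u$ lies in an inner root blossom, and in the recover step, which touches only auxiliary vertices incident to a matching edge; since an unsaturated vertex has $\deg_F$ strictly below $f$, an unsaturated auxiliary vertex has $\deg_F=0$ and hence no incident matching edge, and original vertices are never touched, so the recover step never changes $y$ on an unsaturated vertex. It therefore suffices to show an unsaturated $u\notin U$ never lies in an inner root blossom. By maturity, every non-base vertex of a blossom is saturated, so an unsaturated $u$ can appear in a blossom only as its base, and then that blossom is a light blossom with $\eta=\NULL$; thus the root blossom $R$ containing an unsaturated $u\notin U$ is either $\{u\}$ or such a light blossom with $\beta(R)=u$. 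Were $R$ classified as inner, it was reached from an outer node along an edge $e$ that is non-matching (if $R=\{u\}$) or satisfies $e\neq\eta(R)$ (if $R$ is the light blossom), so the search-tree path from the root of that search --- which lies in $U$ and is unsaturated --- followed by $e$ to $u$ is an augmenting path: one endpoint lies in $U$, and the other ($R$) is an unsaturated singleton or unsaturated light blossom of positive deficiency, a legal terminal. In the event-driven implementation this augmenting path is recognized the instant $e$ becomes eligible and the search reaches $R$, so the algorithm reports it and halts before ever placing $R$ among the inner blossoms. Hence an unsaturated $u\notin U$ is at all times unreached or outer, and $y(u)$ is only ever decreased or left unchanged.

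The step I expect to be the main obstacle is the amortized accounting in the running-time argument: arranging that blossom creation and dissolution --- which relabel outer/inner nodes and must rearrange both the lazy dual offsets and the per-blossom priority queues --- cost only $O(m\log n)$ in aggregate, rather than the full size of the affected region per dual adjustment or per blossom. This is where the laminarity of $\Omega$ together with the monotone growth of $\celig$ within one call must be exploited carefully. A lighter secondary subtlety, needed for the ``moreover'' part, is to verify that augmenting-path detection genuinely precedes the offending dual adjustment; this holds because an unsaturated vertex outside $U$ already qualifies as an augmenting-path terminal the instant the search reaches it, leaving no room for a partial dual adjustment to intervene.
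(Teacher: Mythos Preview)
Your running-time argument is essentially what the paper does: the paper simply cites \cite{gabow2018data} for the $O(m\log n)$ bound (plus $O(m)$ for the recover step), whereas you spell out the lazy dual updates and the event-driven priority-queue implementation that underlie that citation. That part is fine and matches the paper in substance.

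For the ``moreover'' claim the paper takes a genuinely different route. Rather than arguing directly about the search structure, it reduces to the plain algorithm of \cite{gabow2018data} (which assumes $U$ is the full set of unsaturated vertices) via a gadget: for each unsaturated $v\notin U$, match $v$ to $f(v)-\deg_F(v)$ fresh temporary vertices of weight $0$ and dual $-y(v)$, so that in the augmented graph $U$ really is the set of all unsaturated vertices and the black-box algorithm applies verbatim. One then observes that $y(v)$ could increase only if $v$ becomes inner, which forces its temporary neighbour to become outer and have its dual decrease; the algorithm is instructed to abort at that instant and report the corresponding augmenting path from $U$ to $v$. Your direct argument --- that reaching an unsaturated $u\notin U$ in a way that would label it inner already constitutes an augmenting-path event, so the event-driven loop halts before any offending dual adjustment --- is correct and reaches the same conclusion. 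The paper's reduction has the advantage of treating \cite{gabow2018data} as a black box with a single monitoring hook; your argument trades that for a shorter self-contained case analysis but requires checking the inner/outer labelling rules for both the singleton and light-blossom cases, which you do correctly.
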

\begin{proof}
	By \cite{gabow2018data}, these steps until the recover step can be implemented in $O(|\blowedge| + |\blowvertex|\log n) = O(m\log n)$ time. For the recover step, each edge can be adjusted in constant time. The total running time is bounded by $O(m\log n)$.
	
	One more remark: in the original paper \cite{gabow2018data}, their algorithm actually does not contain this parameter $U$; namely $U$ is always equal to the set of all unsaturated vertices. This slack can be remedied by the following reduction. For each unsaturated vertex  $v\notin U$, match $v$ to $f(v) - \deg_F(v)$ new temporary vertices whose duals are equal to $-y(v)$ and the matching edges have zero weight. So in the new graph $U$ becomes the set of all unsaturated vertices. During executing \cite{gabow2013algebraic}'s original algorithm on the new graph, whenever the dual of any temporary vertex is about to decrease, we can abort the algorithm and claim an augmenting path from $U$ to $\blowvertex\setminus U$, so the $y$-values for the unsaturated vertices not in $U$ are not increased.
\end{proof}

\begin{lemma}\label{preserve0}
	If $y(u), y(v)$ have the same parity and $\mu, z$ are both even, the \textsf{PQ-Edomonds} algorithm preserves the complementary slackness under the eligibility definition.
\end{lemma}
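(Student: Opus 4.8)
The plan is to track each of the three conditions of Definition~\ref{slack} --- dominance, tightness, maturity --- through one iteration of the \textsf{While} loop, and then through the final recover step, exactly as in the proof of Lemma~\ref{preserve1} but now with the exact (rather than approximate) notion of eligibility. First I would record the structural invariant that makes the parity hypothesis useful: since $\mu$ and $z$ are even-valued and the $y$-values of all unsaturated vertices (in particular all of $U$) start with a common parity, \textsf{AdjustDuals} changes every $y(v)$ by $\pm 1$ and every $z(B)$ by $\pm 2$, so after the adjustment the $y$-values of $F$-vertices again share a common parity and $z$ stays even; hence $yz(e)$ stays integral and the equalities ``$yz(e)=\mu(e)$'' defining eligibility are preserved in a controlled way. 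This is the bookkeeping that lets the exact argument go through where only the $\pm2$-approximate one worked before.

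Next I would handle Blossom Formation and Blossom Dissolution, which do not touch $y$ or $z$ and only reorganize $\Omega$: a newly formed blossom is mature by construction (the search only contracts reachable mature blossoms), so the maturity condition $|F\cap(\gamma(B)\cup I(B))| = \lfloor (f(B)+|I(B)|)/2\rfloor$ holds for it; dissolving a root blossom with $z(B)=0$ removes a term that contributes nothing to any $yz(e)$, so dominance and tightness are unaffected. Then for Dual Adjustment I would verify that an edge $e$ with both endpoints outer has $yz(e)$ decreased by exactly $2$ (two $-1$'s on the endpoints plus $+2$ on the blossom containing it if it is internal, or minus nothing) --- here the precise cancellation depending on whether $e\in\gamma(B)$ for an outer/inner root blossom is the routine-but-essential computation; since every tight edge in $F$ is eligible and hence lies on the alternating/search structure, its two $yz$-changes cancel, so tightness $yz(e)\le\mu(e)$ is maintained (in fact as equality for matched edges not yet disturbed), and for $e\notin F$ the value $yz(e)$ can only drop when $e$ joins an outer--outer pair, and the loop invariant/maximality of the blossom set guarantees it does not drop below $\mu(e)$, giving dominance $yz(e)\ge\mu(e)$. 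Because only one dual adjustment's worth of analysis is genuinely new relative to \cite{gabow2018data}, I would cite their invariant that weak complementary slackness is preserved by the loop body and concentrate on the exact-eligibility refinement.

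Finally I would treat the recover step verbatim as in Lemma~\ref{preserve1}: for each matched edge $(u,v)$ violating dominance, $u$ is an auxiliary vertex adjacent to exactly one matched edge, so its dual is reset exactly once to $y(u)\leftarrow \mu(u,v)-y(v)-\sum_B z(B)$, which forces $yz(u,v)=\mu(u,v)$ (restoring both tightness and dominance on that edge), and the inequality $y(u)\ge y_0(u)$ shown there implies $yz(u,w)$ does not decrease for the unique non-matched neighbour $w$, so dominance on $(u,w)$ survives; maturity is untouched since $F$ and $\Omega$ are unchanged. I expect the main obstacle to be the dual-adjustment case analysis for edges straddling or lying inside root blossoms of mixed outer/inner type --- getting the signs of the $y$- and $z$-increments to cancel in exactly the right way, and checking that an eligible edge cannot be pushed out of tightness, is where the parity and evenness hypotheses are doing real work and where an off-by-$\pm1$ would break the claim. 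The other delicate point is arguing dominance is never violated by the final adjustment before recover, i.e.\ that the only edges needing repair are matched edges incident to auxiliary vertices, which I would get from the blowup-graph structure (each auxiliary vertex has degree in $\{1,2\}$ in $F$-relevant edges) together with the loop's halting condition.
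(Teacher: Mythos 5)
Your proposal is correct and takes essentially the same route as the paper: cite prior work that the loop body (blossom formation, dual adjustment, blossom dissolution) preserves \emph{weak} complementary slackness, and then verify directly that the recover step upgrades this to full complementary slackness by forcing $yz(u,v)=\mu(u,v)$ on each offending matched edge and observing $y(u)\ge y_0(u)$ so dominance on the unique non-matched neighbor survives. The extra discussion of parity bookkeeping and the per-case dual-adjustment cancellations is more than the paper spells out but is consistent with --- and ultimately subsumed by --- the cited invariant, so the argument lands in the same place.
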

\begin{proof}
	We already know from \cite{huang2017approximate} that until the recover step, weak complementary slackness is preserved. So we only need to reason about the recover step. The only purpose of the recover step is to restore complementary slackness from weak complementary slackness. Let $y_0, z_0, yz_0$ be the duals before the recover step, which guarantees weak complementary slackness. 
	
	Consider any matching edge $(u, v)$ such that $yz_0(u, v) < \mu(u, v)$. Assume $u$ is the auxiliary vertex that undergoes a dual change in the recover step, and let $w$ be its only neighbor such that $(u, w)$ is a non-matching edge. The dual of each vertex changes only once since every auxiliary vertex is adjacent to at most one matching edge. To argue about complementary slackness, we need to verify tightness and dominance conditions on the matching edge $(u, v)$ and dominance condition on the non-matching edge $(u, w)$.
	
	For the matching edge $(u, v)$, our recover step enforces $$yz(u,v) = y(u) + y_0(v) + \sum_{B:(u,v)\in \gamma(B)\cup I(B)}z(B) = \mu(u,v)$$ So the dominance and tightness are satisfied simultaneously.
	
	For the non-matching edge $(u, w)$, by tightness condition $$ y(u) = \mu(u, v) - y_0(v) - \sum_{B:(u,v)\in \gamma(B)\cup I(B)}z(B) \ge yz_0(u,v) - y_0(v) - \sum_{B:(u,v)\in \gamma(B)\cup I(B)}z(B) = y_0(u)$$ we thus have $yz(u, w) = y(u) + y_0(w) + \sum_{B:(w,v)\in \gamma(B)\cup I(B)}z(B) \ge yz_0(u, w) \ge \mu(u, w)$. Thus dominance is satisfied.
\end{proof}

\section{The Scaling Algorithm}\label{sec:algo}
Our algorithm follows the idea of the scaling algorithm in \cite{duan2018scaling} for maximum weight perfect matching. Suppose currently we have maintained an $f$-factor $F$, along with a laminar family of blossoms $\Omega$ and duals $y, z$. Throughout the algorithm we assume $y$ always assigns integer values and $z$ always assigns even non-negative integers. For any $B \subseteq \Omega$, $B$ is called a \emph{large blossom} if $|B\cap V|\geq \size$; otherwise it is deemed a \emph{small blossom}. 

The scaling algorithm maintains an $f$-factor $F$, a family of blossoms $\Omega$, as well as duals $y, z$, and it is divided into $\lceil \log (2f(V)W)\rceil$ iterations. Let $\scale$ be the edge weight function that keeps track of the scaled edge weights in each iteration. Initially before the first iteration, assign $F, \Omega = \emptyset$, $y, z, \scale = 0$. At the beginning of each iteration, define $F_0$ to be the $f$-factor from the previous iteration. Empty the matching $F\leftarrow \emptyset$, and update weights and duals as following.
$$\begin{aligned}
\scale(e) &\leftarrow 2\left(\scale(e) + \text{the next bit of }2f(V)\mu(e)\right)\\
y(u) &\leftarrow 2y(u) + 3\\
z(B) &\leftarrow 2z(B)
\end{aligned}$$

The algorithm involves an important subroutine: the Dissolve algorithm~\ref{dissolve}.

\begin{algorithm}
	\caption{$\textsf{Dissolve}(B, y, z, \Omega)$}
	\label{dissolve}
	\For{$u\in B$ or exists $v\in B$ such that $(u, v)\in I(B)$}{
		$y(u)\leftarrow y(u) + z(B)/2$\;
	}
	$z(B)\leftarrow 0$ and remove it from $\Omega$\;
\end{algorithm}

Then we apply the Dissolve algorithm~\ref{dissolve} to dissolve every large blossom $B\in \Omega$, and repeatedly dissolve any small root blossom $B$ if $z(B)\leq \gap$. Then, reweight the graph $\scale(u, v) \leftarrow \scale(u, v) - y(u) - y(v), \forall (u, v)\in E$, and reassign $y(u)\leftarrow 0, \forall u\in V$. 

Let $B_1, B_2, \cdots, B_l$ be all the nontrivial root small blossoms in $\Omega$ that are not dissolved yet. First dissolve all blossoms. For each $1\leq i\leq l$, dissolve $B_i$ and add to $F$ the matching edge set $I_{F_0}(B_i)\setminus\{\eta(B_i) \}$. To ensure tightness on these matching edges that are newly added to $F$, for each such edge $(u, v)\in I_{F_0}(B_i)\setminus\{\eta(B_i) \}, u\in B_i, v\notin B_i$, reassign $y(v)\leftarrow \mu(u, v) - y(u)$. After that, construct a subgraph $H_i$ starting with $H_i = \blowgraph[B_i]$, and then add to $H_i$ all the endpoints of $I_{F_0}(B_i)\setminus\{\eta(B_i)\}$ along with edges in $I_{F_0}(B_i)\setminus\{\eta(B_i)\}$. Then repeat the following process until the $y$ values of all unsaturated vertices are no more than $6$. Apply the PQ-Edmonds algorithm~\ref{unbounded} under the eligibility condition in subgraph $H_i$ with edges weights $\scale$ to reduce total deficiency against function $f$ by one or perform $D$ dual adjustments, where $D$ is the gap between the largest $y$ values and the second largest $y$ values among unsaturated vertices.

After we are done with all of $B_1, B_2, \cdots, B_l$, run the EdmondsSearch algorithm~\ref{bounded} under the approximate eligibility condition on the entire graph $G$ for $\ceil{C n^{2/3}}+6$ times that would reduce the dual of each unsaturated vertex to $-\ceil{C n^{2/3}}$, where $C$ is a large constant to be determined in the end. If this is the last iteration of scaling when we have exhausted all bits of integer weights $4f(V)\mu(e)$, repeatedly apply the PQ-Edmonds algorithm~\ref{unbounded} under the approximate eligibility condition on the whole graph $G$ until the overall deficiency becomes zero. The Scaling algorithm~\ref{scaling} summarizes the algorithm so far.

\begin{algorithm}[htbp]
	\caption{$\textsf{Scaling}(V, E, \mu, f)$}
	\label{scaling}
	$y, z\leftarrow 0$, $F, \Omega\leftarrow\emptyset$\;
	\For{$\text{iter} = 1, \cdots, \lceil\log(2f(V)W)\rceil$}{
		\tcc{scaling}
		$\scale(e)\leftarrow 2\left(\scale(e) + \text{the next bit of }2f(V)\mu(e)\right)$\;
		$y(u) \leftarrow 2y(u) + 3$\;
		$z(B) \leftarrow 2z(B)$\;
		$F_0\leftarrow F$, $F\leftarrow\emptyset$\;
		\tcc{blossom dissolution(Line 7-15)}
		\While{exists a large blossom $B\in \Omega$, or a root blossom $B$ with $z(B)\leq \gap$}{
			run $\textsf{Dissolve}(B, y, z, \Omega)$\;
		}
		$\scale(u, v)\leftarrow \scale(u, v) - y(u) - y(v), \forall (u, v)\in E$\;
		$y(u)\leftarrow 0, \forall u\in V$\;
		let $B_1,B_2,\cdots, B_l\in \Omega$ be all the root small blossoms not dissolved yet\;
		\While{exists a blossom $B\in \Omega$}{
			run $\textsf{Dissolve}(B, y, z, \Omega)$\;
		}
		\tcc{augmentation within small blossoms(Line 16-27)}
		\If{$(u,v)\in I_{F_0}(B_j)\setminus\{\eta(B_j)\}$ for some previous root small blossom $B_j$}{
			$F \leftarrow F \cup \{(u,v)\}$\;
			if $u\in B_j, v\notin B_j$, $y(v) \leftarrow \scale(u,v) - y(u)$\;
		}
		\For{$i = 1, 2, \cdots, l$}{
			\While{$\max\{y(u)\mid \deg_F(u)<f(u), u\in B_i \}>6$}{
				let $Y_1, Y_2$ be the largest and second largest $y$ values of unsaturated vertices in $B_i$\;
				define $U\subseteq B_i$ to be the set of unsaturated vertices whose $y$ values equal $Y_1$\;
				define $H_i = \blowgraph[B_i] \cup$ all the endpoints of $I(B_i)\setminus\{\eta(B_i)\}$\;
				run $\textsf{PQ-Edmonds}(F, \Omega, y, z, U, Y_1-Y_2)$ in subgraph $H_i$\;
			}
		}
		\tcc{deficiency reduction}
		run $\textsf{EdmondsSearch}(F, \Omega, y, z)$ on the entire graph $G$ for $\ceil{C n^{2/3}}+6$ times\;
	}
	\tcc{weight adjustment}
	\For{an edge $(u,v)\in\blowedge$ that $yz(u,v)<\mu(u,v)$}{
		$\mu(u,v) \leftarrow yz(u,v)$\;
	}
	\tcc{PQ-deficiency reduction}
	repeat $\textsf{PQ-Edmonds}(F, \Omega, y, z, \{u\mid u\in\blowvertex\text{ unsaturated} \}, \infty)$ on the entire graph $G$ until the total deficiency becomes zero\;
\end{algorithm}

\subsection{Correctness}
We begin with some basic lemmas.

\begin{lemma}\label{blowedge}
	For any blossom $B \in \Omega$ in $\blowgraph$, the edge $e \in \delta(B)$ has the form of $(u,e_u)$ where $u \in B$ is an original vertex and $e_u$ is an auxiliary vertex.
\end{lemma}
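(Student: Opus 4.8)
The plan is to argue directly from the inductive definition of blossoms together with the special structure of the blowup graph $\blowgraph$. Recall that in $\blowgraph$ every auxiliary vertex $e_u$ (for an original edge $e=(u,v)\in E$) has degree exactly $2$: its only neighbors are the original vertex $u$ and the other auxiliary vertex $e_v$. Moreover $f(e_u)=f(e_v)=1$, so in any $f$-factor $F$ exactly one of the two edges $(u,e_u)$, $(e_u,e_v)$ lies in $F$. I would first record this degree-$2$ observation and then show that any nontrivial root blossom $B$ cannot contain an auxiliary vertex $e_u$ unless it also contains its original neighbor $u$ \emph{and} the other auxiliary vertex $e_v$; consequently the only edges of $\delta(B)$ can be of the form $(u',e'_{u'})$ with $u'$ original. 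Equivalently: the "boundary" edges of a blossom never sit on the middle edge $(e_u,e_v)$ of a blown-up original edge, nor can they be a $(u,e_u)$-type edge cut "from the auxiliary side".

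The key steps, in order: (1) State the degree-$2$ fact for auxiliary vertices and note each auxiliary vertex lies on exactly one matched and one unmatched edge. (2) Prove by induction on the blossom structure that if an auxiliary vertex $e_u$ belongs to a blossom $B$, then the original vertex $u\in B$ as well; the base case is a singleton (trivial), and in the inductive step, when $e_u$ appears inside some $B_i$ on the cycle $C_B$, the two cycle edges incident to $B_i$ at $e_u$—or, if $e_u$ is a sub-blossom's internal vertex, the edges forced by $\eta$—together with the alternation condition pin down that $(u,e_u)$ must be used, forcing $u$ into $B_i\subseteq B$; symmetrically the middle edge $(e_u,e_v)$ is used, forcing $e_v\in B$. (3) Conclude: take any $e\in\delta(B)$ with one endpoint $x\in B$ and the other endpoint $x'\notin B$. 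If $x$ were an auxiliary vertex $e_u$, then by step (2) both of its neighbors $u$ and $e_v$ lie in $B$, so $e_u$ has no neighbor outside $B$, contradicting $e\in\delta(B)$. Hence $x$ is an original vertex, say $u$, and $x'$ is a neighbor of $u$ in $\blowgraph$; every neighbor of an original vertex $u$ in $\blowgraph$ is an auxiliary vertex of the form $e_u$, so $e=(u,e_u)$ as claimed.

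The main obstacle I anticipate is step (2): carefully handling the inductive definition when $e_u$ (or its incident middle edge) is buried several levels deep inside nested sub-blossoms, where one must invoke Lemma~3 (the $P_0,P_1$ alternating-walk property) or the $\eta(B_i)=e_{i-1}$ or $e_i$ clause rather than directly reading off cycle edges. The bookkeeping amounts to showing that whenever an auxiliary vertex is "used" by a blossom at all, both of its two incident edges are internal to the blossom's edge set $\blowedge_B$, so it can never be a base with an exposed $\eta$-edge and can never sit on $\delta(B)$; once that is established, the matching/alternation conditions never actually constrain anything new, and the degree-$2$ structure does all the work.
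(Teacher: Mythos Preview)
Your approach is correct and essentially identical to the paper's: both hinge on the observation that an auxiliary vertex $e_u$ has degree exactly $2$ in $\blowgraph$, so if $e_u\in B$ then both of its neighbors $u$ and $e_v$ must lie in $B$, whence no edge of $\delta(B)$ can have its $B$-endpoint be auxiliary. The paper dispatches your step~(2) in a single line without invoking alternation or $\eta$-edges at all---the bare fact that every vertex of a nontrivial blossom has at least two incident edges in $\blowedge_B$ (immediate from the closed-walk definition, by a trivial induction) already forces both of $e_u$'s edges to be internal---so the obstacle you anticipate does not arise and the matching structure plays no role.
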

\begin{proof}
	Let $u,v$ be the original vertices and $e_u, e_v$ be the auxiliary vertices in $\blowgraph$. By construction of the blowup graph $\blowgraph$, an auxiliary vertex $e_u$ only has degree $2$. Hence, if $e_u$ belongs to $B$, then its only two neighbors which are $u, e_v$ must also belong to $B$, and thus the edge $e \in \delta(B)$ has the form of $(u,e_u)$ where $u$ belongs to $B$.
\end{proof}

\begin{defn}
If $\eta(B)$ is not null and has the form of $(u,e_u)$ where $u \in B$ is an original vertex and $e_u$ is an auxiliary vertex, define $\zeta(B) = (e_u, e_v)$.
\end{defn}

\begin{lemma}\label{dissloveedge}
	Let $y_0,z_0$ be the original duals, then after dissloving a blossom $B$, we have:
	\begin{enumerate}
		\item For each $(u,v)\in \blowedge$, $y(u) +y(v) \ge y_0(u) + y_0(v) + \sum_{(u,v)\in \gamma(B)\cup I(B)}z_0(B)$.
		\item For each $(u,v)\in F_0$, $$y(u) +y(v) = y_0(u) + y_0(v) + \sum_{(u,v)\in \gamma(B)\cup I(B)}z_0(B) + \frac{1}{2}\sum_{(u,v)\in \{\zeta(B),\eta(B)\}}z_0(B)$$
	\end{enumerate}
\end{lemma}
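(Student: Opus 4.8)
\emph{Proof plan.}

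The plan is to track, for each endpoint $w$ of the edge under consideration, whether the single call to $\textsf{Dissolve}(B,y,z,\Omega)$ adds $z_0(B)/2$ to $y(w)$. By Algorithm~\ref{dissolve} this happens exactly when $w\in B$, or $w$ is the endpoint lying outside $B$ of some edge of $I(B)=\delta_{F_0}(B)\oplus\{\eta(B)\}$; these two possibilities are mutually exclusive, since every edge of $I(B)\subseteq\delta(B)$ crosses $B$, so the total increment to $y(u)+y(v)$ equals $\tfrac{z_0(B)}{2}\,k$, where $k\in\{0,1,2\}$ counts how many of $u,v$ meet the condition. Part 1 follows at once: as $z_0(B)\ge 0$ we always have $y(w)\ge y_0(w)$, which handles $(u,v)\notin\gamma(B)\cup I(B)$; if $(u,v)\in\gamma(B)$ both endpoints lie in $B$, so $k=2$; and if $(u,v)\in I(B)$ then one endpoint is in $B$ and the other is the outside endpoint of $(u,v)$ itself, so again $k=2$; in both cases $y(u)+y(v)=y_0(u)+y_0(v)+z_0(B)$, which matches (indeed meets with equality) the claimed lower bound.

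For Part 2 I must show that when $(u,v)\in F_0$ we have $k=2\cdot\ind[(u,v)\in\gamma(B)\cup I(B)]+\ind[(u,v)\in\{\zeta(B),\eta(B)\}]$, and I would do this by a case split on the type of $(u,v)$ in the blowup graph, using three facts: (i) by Lemma~\ref{blowedge} every edge of $\delta(B)$ joins an original vertex of $B$ to an auxiliary vertex outside $B$, hence no auxiliary--auxiliary edge lies in $\delta(B)\supseteq I(B)$, and an auxiliary vertex belongs to $B$ only together with its whole gadget; (ii) each auxiliary vertex has exactly two incident edges, exactly one of them auxiliary--auxiliary; (iii) because $f(e_a)=f(e_b)=1$, among the three edges of a gadget at most one of $(a,e_a),(e_a,e_b)$ and at most one of $(e_a,e_b),(e_b,b)$ lies in $F_0$.

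The cases: (a) $(u,v)\in\gamma(B)$: both endpoints in $B$, so $k=2$, while $(u,v)$ cannot be $\eta(B)$ (in $\delta(B)$) nor $\zeta(B)$ (has an endpoint outside $B$), so both sides equal $2$. (b) $(u,v)\in\delta(B)$ with original endpoint $u\in B$ and auxiliary endpoint $v$: $u$ is incremented, and the unique $\delta(B)$-edge at $v$ is $(u,v)$ itself (its other incident edge is auxiliary--auxiliary), so $v$ is incremented iff $(u,v)\in I(B)$; since $(u,v)\in F_0\cap\delta(B)=\delta_{F_0}(B)$, this is equivalent to $(u,v)\ne\eta(B)$, giving $k=2$ with $(u,v)\in I(B)$ in one subcase and $k=1$ with $(u,v)=\eta(B)$ in the other, both matching the right-hand side. (c) $(u,v)$ with both endpoints outside $B$ and $u$ original: $u$ is not incremented (an original vertex is never the outside endpoint of a $\delta(B)$-edge) and the only $\delta(B)$-candidate at $v$, namely $(u,v)$, is not in $\delta(B)$ because $u\notin B$, so $k=0$ and $(u,v)\notin\{\zeta(B),\eta(B)\}$. (d) $(u,v)=(e_a,e_b)$ auxiliary--auxiliary, hence (by (i)) both endpoints outside $B$: by (iii) $(a,e_a),(b,e_b)\notin F_0$, so each lies in $I(B)$ iff it equals $\eta(B)$; thus $e_a$ (resp.\ $e_b$) is incremented via its only possible $\delta(B)$-edge $(a,e_a)$ (resp.\ $(b,e_b)$) exactly when $\eta(B)=(a,e_a)$ (resp.\ $(b,e_b)$), so $k=\ind[\eta(B)\in\{(a,e_a),(b,e_b)\}]=\ind[(u,v)=\zeta(B)]$, which again equals the right-hand side since $(u,v)\notin\gamma(B)\cup I(B)$.

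I expect the only genuine obstacle to be keeping the bookkeeping in cases (b) and (d) straight, since this is exactly where the half-integer term $\tfrac12\sum_{(u,v)\in\{\zeta(B),\eta(B)\}}z_0(B)$ originates: one must invoke (iii) to exclude $(a,e_a),(b,e_b)$ from $\delta_{F_0}(B)$ in case (d), and carefully separate $\eta(B)\in F_0$ from $\eta(B)\notin F_0$ in case (b). Everything else reduces to direct substitution into the definition of $\textsf{Dissolve}$ and the degree structure of the blowup graph.
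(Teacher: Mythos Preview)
Your proof is correct and takes essentially the same approach as the paper: both track, for each endpoint of $(u,v)$, whether the $\textsf{Dissolve}$ step adds $z_0(B)/2$, and then pin down the exact count via a case analysis exploiting Lemma~\ref{blowedge} and $f(e_u)=1$. Your case split (by the position of $(u,v)$ relative to $B$ and by vertex type in the blowup graph) is organized a bit more exhaustively than the paper's (which fixes one endpoint and classifies the extra incident edge in $\gamma(B)\cup I(B)$), but the underlying argument is the same.
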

\begin{proof}
	As $y(u) = y_0(u) + \frac{1}{2}\sum_{\exists w,(u,w)\in \gamma(B)\cup I(B)}z_0(B)$ and $y(v) = y_0(v) + \frac{1}{2}\sum_{\exists w,(v,w)\in \gamma(B)\cup I(B)}z_0(B)$, we have $$y(u) +y(v) \ge y_0(u) + y_0(v) + \sum_{(u,v)\in \gamma(B)\cup I(B)}z_0(B)$$

	For $(u,v) \in F_0$, if $(u,v)\in \gamma(B)\cup I(B)$, $y(u)+y(v)=y_0(u)+y_0(v)+z_0(B)$. Otherwise, assume there exists a vertex $w \neq v$ such that $(u, w) \in \gamma(B)\cup I(B)$. If $u$ belongs to $B$, as $(u,v)\in F_0$, $(u,v)$ must be $\eta(B)$ and $y(u) = y_0(u) + \frac{1}{2}z_0(B)$. If $(u,w) = \eta(B)$ but $u \notin B$, $(u,v)$ must be $\zeta(B)$ and $y(u) = y_0(u) + \frac{1}{2}z_0(B)$. Otherwise, $(u,w) \in I(B)\setminus\{ \eta(B)\}$ but $u \notin B$. In this case, $u$ is an auxiliary vertex and both $(u,v)$ and $(u,w)$ belongs to $F_0$, which is a contradiction to $f(u)=1$. So $$y(u) +y(v) = y_0(u) + y_0(v) + \sum_{(u,v)\in \gamma(B)\cup I(B)}z_0(B) + \frac{1}{2}\sum_{(u,v)\in \{\zeta(B),\eta(B)\}}z_0(B)$$.
\end{proof}

\begin{lemma}\label{scale-slack}
	If the edge weights $\scale_0$, the $f$-factor $F_0$, the duals $y_0, z_0$ and the blossoms $\Omega_0$ satisfy approximate complementary slackness at the beginning of the step of scaling, we have two properties right after the scaling step:
	\begin{enumerate}
		\item For each $e\in \blowedge$, $\scale(e)\leq yz(e)$.
		\item For each $e\in F_0$, $\scale(e)\geq yz(e) - 6$.
	\end{enumerate}
\end{lemma}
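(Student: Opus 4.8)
The plan is to track, edge by edge, how the quantity $yz(e)$ changes across the scaling step, and compare the change to the change in $\scale(e)$. The scaling step does three things in order: it updates $\scale(e)\leftarrow 2(\scale_0(e)+b_e)$ where $b_e\in\{0,1\}$ is the next bit of $2f(V)\mu(e)$; it sets $y(u)\leftarrow 2y_0(u)+3$ and $z(B)\leftarrow 2z_0(B)$; and then, still inside the scaling phase as described in the algorithm, it dissolves all large blossoms and all root blossoms with $z(B)\le \gap$. I will first analyze the pure doubling part, then account for the effect of the dissolutions using Lemma~\ref{dissloveedge}.

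For the doubling part: write $yz_0(e)$ for the old value computed from $y_0,z_0,\Omega_0$. After $y\leftarrow 2y_0+3$ and $z\leftarrow 2z_0$, for an edge $e=(u,v)$ we get $yz(e) = 2y_0(u)+3 + 2y_0(v)+3 + \sum_{B,I:(u,v)\in\gamma(B)\cup I}2z_0(B) = 2\,yz_0(e) + 6$. Meanwhile $\scale(e) = 2\scale_0(e) + 2b_e \le 2\scale_0(e)+2$. Since approximate complementary slackness gives $\scale_0(e)\le yz_0(e)$ for $e\in\blowedge$ (dominance, even for $e\in F_0$ by tightness the inequality $\scale_0(e)\le yz_0(e)$ still holds in the form needed — for $e\notin F_0$ we even have $\scale_0(e)\ge yz_0(e)-2$, which we'll use for property~2), we get $\scale(e)\le 2yz_0(e)+2 \le 2yz_0(e)+6 = yz(e)$, establishing property~1 at the doubling stage. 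For property~2, take $e\in F_0$: tightness gives $yz_0(e)\le \scale_0(e)$, hence $\scale_0(e)\ge yz_0(e)$, and the dominance-type bound $\scale_0(e)\ge yz_0(e)-2$ is also available; combining, $\scale(e) = 2\scale_0(e)+2b_e \ge 2\scale_0(e) \ge 2yz_0(e) = yz(e)-6$, so $\scale(e)\ge yz(e)-6$ holds after doubling.

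Next I account for blossom dissolutions. Each call to $\textsf{Dissolve}(B,\cdot)$ raises $y(u)$ by $z(B)/2$ for every $u\in B$ and every endpoint $u$ of an edge in $I(B)$, and zeroes $z(B)$. By Lemma~\ref{dissloveedge}(1), for every $e=(u,v)\in\blowedge$ the quantity $y(u)+y(v)+\sum_{(u,v)\in\gamma(B)\cup I(B)}z(B)$ does not decrease across a dissolution, i.e. $yz(e)$ is non-decreasing; this preserves property~1. By Lemma~\ref{dissloveedge}(2), for $e=(u,v)\in F_0$ the value $yz(e)$ is exactly preserved by a dissolution \emph{except} possibly when $(u,v)\in\{\zeta(B),\eta(B)\}$, in which case it increases by $\tfrac12 z(B)$ — an increase, so $yz(e)\ge$ its old value, and property~2 (a lower bound on $\scale(e)-yz(e)$, i.e. $\scale(e)\ge yz(e)-6$) could only be threatened by $yz(e)$ going \emph{up}.

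Here is the main obstacle, and where I expect to spend the most care: property~2 asserts $\scale(e)\ge yz(e)-6$ for $e\in F_0$, and dissolutions can increase $yz(e)$, so I must check the increase is bounded. Each relevant dissolution only happens for root blossoms with $z(B)\le\gap$ (for small blossoms) — so for such a blossom the increase $\tfrac12 z(B)\le \tfrac{\gap}{2}$; but large blossoms are dissolved regardless of their $z$-value, and a chain of nested dissolutions could in principle accumulate. I expect the resolution is that an edge $e=(u,v)\in F_0$ can be equal to $\eta(B)$ or $\zeta(B)$ for at most a bounded number of blossoms $B$ along the laminar chain, and/or that the relevant large blossoms have $z(B)=0$ at this point (having been driven to zero at the end of the previous iteration's deficiency-reduction phase) so they contribute nothing. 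Pinning down exactly which dissolutions can add to $yz(e)$ and bounding the total additional contribution by $6 - 2\cdot$(the slack already in hand from $2\scale_0(e)\ge 2yz_0(e)$, which gives slack to spare since we only needed $yz(e)-6$ but currently have $yz(e)-6$ with a cushion of $2b_e\ge 0$ from the bit and nothing more) is the delicate bookkeeping. I would organize this by: (i) showing the doubled configuration already satisfies both properties with the constant $6$; (ii) showing property~1 is monotone under \textsf{Dissolve} via Lemma~\ref{dissloveedge}(1); and (iii) showing that for $e\in F_0$, the only dissolutions changing $yz(e)$ are those with $e\in\{\eta(B),\zeta(B)\}$, that there is at most one such $B$ per "side" in the laminar family with nonzero small $z$-value, and that large blossoms in $\Omega_0$ have already had $z(B)$ reduced so that $\sum \tfrac12 z(B)$ over all contributing dissolutions is absorbed within the budget of $6$.
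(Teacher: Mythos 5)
Your doubling-only analysis is the paper's proof, and it is essentially correct; the problem is that you have misread the scope of the lemma. In Algorithm~\ref{scaling}, the ``step of scaling'' is only lines~3--6 (the updates $\scale\leftarrow 2(\scale+b)$, $y\leftarrow 2y+3$, $z\leftarrow 2z$, $F_0\leftarrow F$, $F\leftarrow\emptyset$); the blossom dissolutions are a distinct, explicitly labeled step (``blossom dissolution, Line 7--15''). The lemma's phrase ``right after the scaling step'' therefore refers to the state just after the doubling and before any \textsf{Dissolve} is called, and the paper's proof handles exactly that. The effect of the dissolutions is addressed separately, in Lemma~\ref{half} and Lemma~\ref{tight}, which take Lemma~\ref{scale-slack} as a premise and combine it with Lemma~\ref{dissloveedge}. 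So the ``main obstacle'' paragraph you worry about is chasing a problem that this lemma does not need to solve -- and indeed you leave that part unfinished, with a conjectural sketch rather than a proof, so as written your argument would not close even under your broader reading.

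One smaller slip within the part you did prove: you assert that approximate dominance gives $\scale_0(e)\le yz_0(e)$, and you use this to claim $\scale(e)\le 2\,yz_0(e)+2$. Dominance only gives $yz_0(e)\ge\scale_0(e)-2$, i.e.\ $\scale_0(e)\le yz_0(e)+2$, so the correct chain for property~1 is $\scale(e)\le 2\scale_0(e)+2\le 2\bigl(yz_0(e)+2\bigr)+2 = 2\,yz_0(e)+6 = yz(e)$. Your final inequality $\scale(e)\le yz(e)$ happens to still hold, but the intermediate bound you wrote is not justified. Property~2 (using tightness $yz_0(e)\le\scale_0(e)$ and $\scale(e)\ge 2\scale_0(e)$) is fine as you wrote it.
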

\begin{proof}
	By the approximate complementary slackness, $yz_0(e)\geq \scale_0(e) - 2$ for all $e\in \blowedge$. Since $yz(e) = 2yz_0(e) + 6$ and $2\scale_0(e) + 2 \geq \scale(e) \geq 2\scale_0(e)$ after the scaling step, $$yz(e) = 2yz_0(e) + 6\geq 2\scale_0(e) + 2 \geq \scale(e)$$.
	Similarly, for each $e\in F_0$, $yz(e) = 2yz_0(e) + 6\leq 2\scale_0(e)+6 \leq \scale(e)+6$, namely $\scale(e)\geq yz(e)-6$.
\end{proof}

\begin{lemma}\label{half}
	There are two properties right after the step of blossom dissolution:
	\begin{enumerate}
		\item For each $(u, v)\notin F_0 \cup \bigcup_{i=1}^l\gamma(B_i)\cup I(B_i)$, $\scale(u, v)\leq 0$.
		\item For each $(u, v)\in \blowedge$, $\scale(u, v)\leq 2\min\{y(u), y(v)\}$.
	\end{enumerate}
\end{lemma}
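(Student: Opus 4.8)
The plan is to carry, throughout the entire blossom‑dissolution block of \textsf{Scaling} (Lines~7--15), a single invariant:
\[
  \scale(e)\ \le\ yz(e)\qquad\text{for every }e\in\blowedge,
\]
where, as usual, $yz(u,v)=y(u)+y(v)+\sum_{B\in\Omega:\,(u,v)\in\gamma(B)\cup I(B)}z(B)$. Immediately after the scaling update this is Lemma~\ref{scale-slack}(1). I would then observe that \textsf{Dissolve} never destroys it: by Lemma~\ref{dissloveedge}(1), dissolving a blossom leaves $yz(e)$ non‑decreasing on every edge $e$ (the loss of $z(B)$ from the sum is compensated by the increase of $y$ on the vertices in, or incident to $I(B)$ of, $B$). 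Hence after the \texttt{while} loop of Lines~7--9 has removed every large blossom and every small root blossom with $z\le\gap$, the invariant still holds; write $\Omega_1$ (all of whose members are small) for the surviving family and $y_1,z_1$ for the duals at that moment.

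Next I would push this through the reweighting, which replaces $\scale(u,v)$ by $\scale(u,v)-y_1(u)-y_1(v)$ and then zeroes $y$, leaving $z$ and $\Omega$ alone. Cancelling $y_1(u)+y_1(v)$ in $\scale(u,v)\le yz_1(u,v)$ gives, right after the reweighting — and, since $\scale$ is not altered again in Lines~13--15, also at the end of the block —
\[
  \scale(u,v)\ \le\ \sum_{B\in\Omega_1:\,(u,v)\in\gamma(B)\cup I(B)} z_1(B)\qquad\text{for all }(u,v)\in\blowedge .
\]
Both parts of the lemma should fall out of this bound.

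For part~2, fix $e=(u,v)$. Every $B\in\Omega_1$ contributing a term above has $u\in B$ when $(u,v)\in\gamma(B)$, and has exactly one of $u,v$ in $B$ when $(u,v)\in I(B)\subseteq\delta(B)$; in either case $B$ satisfies ``$u\in B$, or $(u,w)\in I(B)$ for some $w\in B$'' — exactly the condition under which dissolving $B$ adds $z_1(B)/2$ to $y(u)$. Since $y$ was zeroed just before Lines~13--15 and each surviving blossom is dissolved exactly once there, the final $y(u)$ equals $\tfrac12$ times the sum of $z_1(B)$ over that (possibly larger) set; as $z_1\ge 0$, the displayed bound yields $\scale(u,v)\le 2y(u)$, and symmetrically $\scale(u,v)\le 2y(v)$, hence $\scale(u,v)\le 2\min\{y(u),y(v)\}$. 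For part~1, I would show that $(u,v)\notin F_0\cup\bigcup_{i=1}^l(\gamma(B_i)\cup I(B_i))$ forces $(u,v)\notin\gamma(B)\cup I(B)$ for every $B\in\Omega_1$, so the sum above is empty and $\scale(u,v)\le 0$. By laminarity, $(u,v)\in\gamma(B)$ for $B$ nested in the root blossom $B_i$ forces $(u,v)\in\gamma(B_i)$; and $(u,v)\in I(B)$ together with $(u,v)\notin F_0$ forces $(u,v)=\eta(B)$ (since $I(B)=\delta_{F_0}(B)\oplus\{\eta(B)\}$ and $\delta_{F_0}(B)\subseteq F_0$), whereupon the blossom structure places $\eta(B)$ in $\gamma(B_i)\cup\{\eta(B_i)\}$, and $\eta(B_i)\in I(B_i)$ precisely because $\eta(B_i)=(u,v)\notin F_0$. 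This is exactly where the hypothesis $(u,v)\notin F_0$ enters: for a matched edge, membership in $\delta_{F_0}(B)$ of a deeply nested $B$ need not propagate to its root blossom, which is why $F_0$‑edges are excluded from part~1.

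The analytic work is just the $yz$‑monotonicity of \textsf{Dissolve}, Lemma~\ref{dissloveedge}(1); everything else is bookkeeping, and the only genuinely delicate point is the order of operations — that all large and all small‑$z$ root blossoms vanish \emph{before} the reweighting (so none of them leaves a term in $yz_1$ and hence in the displayed bound), and that $y$ is zeroed \emph{exactly} between the reweighting and the final batch of \textsf{Dissolve} calls (so the final $y(u)$ is literally half the total $z_1$ charged to $u$). I expect the laminar propagation $\gamma(B)\cup I(B)\subseteq\gamma(B_i)\cup I(B_i)$ used in part~1 — keeping the $\eta$/$\delta_{F_0}$ case split honest — to be the fussiest step.
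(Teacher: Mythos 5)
Your proof is correct and follows essentially the same route as the paper's: combine Lemma~\ref{scale-slack}(1) with Lemma~\ref{dissloveedge}(1) to obtain, right after the reweighting, $\scale(u,v)\le\sum z_1(B)$ over surviving small blossoms $B$ with $(u,v)\in\gamma(B)\cup I(B)$; part~2 then follows because the final $y(u)$ is half the sum of $z_1(B)$ over all surviving $B$ that \textsf{Dissolve} charges to $u$ (a superset), and part~1 because that sum is empty. If anything, you are more careful than the paper on part~1: the paper simply asserts the sum vanishes when $(u,v)\notin\bigcup_i\gamma(B_i)\cup I(B_i)$, while your laminar-propagation argument --- $\gamma(B)\cup I(B)\subseteq\gamma(B_i)\cup I(B_i)\cup F_0$ for every $B$ nested inside a surviving root $B_i$, using $\eta(B)\in\gamma(B_i)\cup\{\eta(B_i)\}$ --- explicitly accounts for the non-root blossoms that may still reside inside the $B_i$'s, which is precisely where the hypothesis $(u,v)\notin F_0$ is used.
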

\begin{proof}
	Let $\scale_1, y_1, z_1, \Omega_1$ be the edge weights, duals and blossoms at the beginning of the step of blossom dissolution.
	Let $\Omega_1^\prime\subseteq \Omega_1$ be the set of all blossoms that are dissolved within this step of blossom dissolution (Line 7,8 in the Scaling algorithm~\ref{scaling}) before the reweighting step. By Lemma~\ref{dissloveedge} and Lemma~\ref{scale-slack}, 
	$$\begin{aligned}
	\scale(u, v) &\leq \scale_1(u, v) - y_1(u) - y_1(v) - \sum_{\substack{B\in\Omega_1^\prime \\ (u, v)\in \gamma(B)\cup I(B)}}z_1(B) \\
	&\leq yz_1(u, v) - y_1(u) - y_1(v) - \sum_{\substack{B\in\Omega_1^\prime \\ (u, v)\in \gamma(B)\cup I(B)}}z_1(B)\\
	&= \sum_{\substack{B\in\Omega_1\setminus \Omega_1^\prime \\ (u, v)\in \gamma(B)\cup I(B)}}z_1(B)
	\end{aligned}$$ The last term is zero when $(u, v)\notin \bigcup_{i=1}^l\gamma(B_i)\cup I(B_i)$.
	
	Hence, by the end of the step of blossom dissolution,
	$$\begin{aligned}
	y(u) &= \frac{1}{2}\sum_{\substack{B\in \Omega_1\setminus \Omega_1^\prime \\ \exists(u, w)\in \gamma(B)\cup I(B)}}z_1(B)\geq \frac{1}{2}\sum_{\substack{B\in \Omega_1\setminus \Omega_1^\prime \\ (u, v)\in \gamma(B)\cup I(B)}}z_1(B) \geq \frac{1}{2}\scale(u, v)
	\end{aligned}$$
	By symmetry, we can also prove $y(v)\geq \frac{1}{2}\scale(u, v)$. Then $\scale(u, v)\leq 2\min\{y(u), y(v)\}$.
\end{proof}

Next we study what happens during the step of augmentation within small blossoms. If a matching edge $(u,v)$ is newly added to $F$ in line 17, for some small blossom $B_i$, $(u,v) \in I_{F_0}(B_i)\setminus\{\eta(B_i)\}$. By Lemma~\ref{blowedge}, $(u,v)$ has the form of $(u, e_u)$ where $u \in B$.

\begin{lemma}\label{tight}
	If an matching edge $(u,e_u)$ is newly added to $F$ (in line 17),  $\scale(u,e_u) \geq y(u) + y(e_u) - 6$ right after the step of blossom dissolution.
\end{lemma}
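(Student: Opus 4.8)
The plan is to follow the single edge $(u,e_u)$ through the scaling step and the three sub-steps of blossom dissolution, tracking the ``slack'' $s := yz(u,e_u)-\scale(u,e_u)$. Since $\Omega$ is emptied by the end of dissolution, at that moment $s = y(u)+y(e_u)-\scale(u,e_u)$, so the claim is exactly that $s\le 6$ there; the strategy is to show that $s$ starts at most $6$ and never increases until the one place it could, which we then rule out.

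First I would record two facts about $(u,e_u)$. Since $(u,e_u)\in I_{F_0}(B_j)\setminus\{\eta(B_j)\}$ and $I_{F_0}(B_j)=\delta_{F_0}(B_j)\oplus\{\eta(B_j)\}$, a one-line check (splitting on whether $\eta(B_j)\in F_0$) gives $(u,e_u)\in\delta_{F_0}(B_j)\subseteq F_0$ and $(u,e_u)\ne\eta(B_j)$; by Lemma~\ref{blowedge} the endpoint of $(u,e_u)$ inside $B_j$ is the original vertex $u$, so $e_u\notin B_j$. Then for the bookkeeping: let $\scale_1,y_1,z_1,\Omega_1$ be the state right after the scaling step. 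Assuming (as an invariant maintained by the algorithm) that the state carried over from the previous iteration satisfies approximate complementary slackness, Lemma~\ref{scale-slack} gives $0\le s\le 6$ here. Blossom dissolution then runs: (i) dissolve every large blossom and every small root blossom with $z$-value $\le\gap$; (ii) reweight $\scale(u,v)\leftarrow\scale(u,v)-y(u)-y(v)$ for all edges and set $y\equiv 0$; (iii) dissolve every remaining blossom. Because $(u,e_u)\in F_0$, I would invoke Lemma~\ref{dissloveedge}(2) for the exact effect of a single $\textsf{Dissolve}(B,\cdot)$ on $y(u)+y(e_u)$; combining that with the simultaneous vanishing of the $z(B)$-summand from $yz(u,e_u)$, and using that $(u,e_u)$, being an original--auxiliary edge, can never equal the auxiliary--auxiliary edge $\zeta(B)$, one gets that dissolving $B$ changes $s$ by $\tfrac12 z_1(B)$ if $(u,e_u)=\eta(B)$ and by $0$ otherwise, while step (ii) leaves $s$ unchanged. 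Since every blossom of $\Omega_1$ is dissolved in (i) or (iii), by the end of dissolution
\[
s\;\le\;6\;+\;\tfrac12\!\!\sum_{\substack{B\in\Omega_1\\(u,e_u)=\eta(B)}}\!\!z_1(B),
\]
so it will remain to prove that $(u,e_u)\ne\eta(B)$ for every $B\in\Omega_1$.

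This last step is where the real work lies. Suppose $(u,e_u)=\eta(B)$ for some $B\in\Omega_1$. Then $(u,e_u)\in\delta(B)$, and by Lemma~\ref{blowedge} its $B$-endpoint is the original vertex $u$; since $\eta(B)$ is incident to $\beta(B)$, this forces $\beta(B)=u$ and $e_u\notin B$. As $u\in B\cap B_j$ and $\Omega_1$ is laminar, $B$ and $B_j$ are nested, and I would split into three cases. If $B=B_j$, then $(u,e_u)=\eta(B_j)$, contradicting the above. If $B\subsetneq B_j$, a routine induction on nesting depth---using that, descending one level of a blossom, $\eta$ is either inherited from the first child or equals one of the two closing-walk edges, both of which have both endpoints inside the blossom---shows $\eta(B)$ is either $\eta(B_j)$ or an edge with both endpoints in $B_j$; the former contradicts $(u,e_u)\ne\eta(B_j)$, the latter contradicts $e_u\notin B_j$. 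If $B_j\subsetneq B$, then $u=\beta(B)$ forces $B_j$ to lie on the base chain $B\supsetneq B_0\supsetneq\cdots\ni u$ of $B$, and since every blossom along that chain down to $B_j$ is nontrivial, $\eta$ is passed down unchanged, so $\eta(B_j)=\eta(B)=(u,e_u)$, again a contradiction. Hence no such $B$ exists, the displayed sum is empty, and $s\le 6$.

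I expect the main obstacle to be exactly this last paragraph: it requires nailing down how $\eta(\cdot)$ of a blossom relates to $\eta(\cdot)$ of an enclosing or enclosed blossom---in essence, that $\eta$ either propagates along the base chain or is an edge buried strictly inside a blossom---and combining this with the specific hypothesis that $(u,e_u)$ is a member of $I_{F_0}(B_j)$ distinct from $\eta(B_j)$. Everything else (the edge-type remark handling $\zeta$, and the slack bookkeeping) should be mechanical once Lemmas~\ref{scale-slack} and~\ref{dissloveedge} are in hand.
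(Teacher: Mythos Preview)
Your proof is correct and follows the same skeleton as the paper's: bound the initial slack by $6$ via Lemma~\ref{scale-slack}, use Lemma~\ref{dissloveedge}(2) to track how dissolving each blossom affects $y(u)+y(e_u)$, and conclude by showing that the potentially harmful $\eta$/$\zeta$ contributions vanish. The paper simply asserts ``Then $(u,e_u)\notin\{\zeta(B),\eta(B)\}$ for any $B\in\Omega_1$'' with essentially no justification, whereas you supply a full laminarity case analysis (the $B=B_j$, $B\subsetneq B_j$, $B_j\subsetneq B$ split together with how $\eta$ propagates along the base chain); this is exactly the content the paper skips, and your argument for it is sound.
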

\begin{proof}
	Let $\scale_1, y_1, z_1, \Omega_1$ be the edge weights, duals and blossoms at the beginning of the step of blossom dissolution. Let $\Omega_1^\prime\subseteq \Omega_1$ be the set of all blossoms that are dissolved within this step of blossom dissolution(Line 7,8 in the Scaling algorithm~\ref{scaling}) before the step of reweighting. As $(u,e_u) \in I_{F_0}(B)\setminus\{\eta(B)\}$, where $B \in \Omega_1\setminus \Omega_1^\prime$. Then $(u,e_u) \notin\{\zeta(B),\eta(B)\}$ for any $B \in \Omega_1$. By Lemma~\ref{dissloveedge} and Lemma~\ref{scale-slack},
	$$\begin{aligned}
	\scale(u, e_u) &= \scale_1(u,e_u) - y_1(u) - y_1(e_u) - \sum_{\substack{B\in\Omega_1^\prime \\ (u, e_u)\in \gamma(B)\cup I(B)}}z_1(B) - \frac{1}{2}\sum_{\substack{B\in\Omega_1^\prime \\ (u, e_u)\in \{\zeta(B),\eta(B)\}}}z_1(B)\\
	&\geq yz_1(u, e_u) - y_1(u) - y_1(e_u)  - \sum_{\substack{B\in\Omega_1^\prime \\ (u, e_u)\in \gamma(B)\cup I(B)}}z_1(B) - 6\\
	&= \sum_{\substack{B\in\Omega\setminus \Omega_1^\prime \\ (u, e_u)\in \gamma(B)\cup I(B)}}z_1(B) - 6
	\end{aligned}$$
	
	Also notice that
	$$\begin{aligned}
	y(u) + y(e_u) &= \sum_{\substack{B\in \Omega_1\setminus \Omega_1^\prime \\ (u, e_u)\in \gamma(B)\cup I(B)}}z_1(B) + \frac{1}{2}\sum_{\substack{B\in \Omega_1\setminus \Omega_1^\prime \\ (u, e_u)\in\{\zeta(B),\eta(B)\}}}z_1(B)\leq \scale(u, e_u) + 6
	\end{aligned}$$ which concludes the proof.
\end{proof}

Now we argue that adding $(u, e_u)$ to $F$ and reassigning $y(e_u)\leftarrow \scale(u, e_u) - y(u)$ does not harm the complementary slackness.
\begin{lemma}\label{zero}
	After we have added $(u, e_u)$ to $F$ and reassigned $y(e_u)\leftarrow \scale(u, e_u) - y(u)$, the edge weights $\scale$, the $f$-factor $F$, the duals $y, z$ and the blossoms $\Omega$ satisfy the complementary slackness. Plus, $y(e_u)\geq \frac{1}{2}\scale(u, e_u)-6$.
\end{lemma}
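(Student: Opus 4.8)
The plan is to verify the three conditions of complementary slackness (Definition~\ref{slack}) --- dominance, tightness, and maturity --- for the newly added matching edge $(u,e_u)$, its incident edges, and to check that nothing else is disturbed, then finally to establish the auxiliary inequality $y(e_u)\geq \frac{1}{2}\scale(u,e_u)-6$. First I would note that $e_u$ is an auxiliary vertex with only two incident edges, $(u,e_u)$ and $(e_u,e_v)$, so the reassignment of $y(e_u)$ only affects these two edges, and in particular affects no matching edge other than $(u,e_u)$ itself (since $e_u$ had degree $0$ in $F$ before and degree $1$ after). For the edge $(u,e_u)$, the reassignment $y(e_u)\leftarrow\scale(u,e_u)-y(u)$ forces $y(u)+y(e_u)=\scale(u,e_u)$; since $e_u$ lies in no blossom in $\Omega$ (as it has just been released from $B_i$ when $B_i$ was dissolved, and no blossom containing $e_u$ can survive by Lemma~\ref{blowedge}), we get $yz(u,e_u)=y(u)+y(e_u)=\scale(u,e_u)$, so dominance and tightness hold simultaneously on $(u,e_u)$.

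The second step is the non-matching edge $(e_u,e_v)$: I would compare the new $y(e_u)$ with the old one. By Lemma~\ref{tight}, right after blossom dissolution $\scale(u,e_u)\geq y_{\text{old}}(u)+y_{\text{old}}(e_u)-6$, but that gives a bound in the wrong direction, so instead I would use Lemma~\ref{half}(2), which gives $\scale(u,e_u)\leq 2\min\{y(u),y(e_u)\}$ for the duals at the end of blossom dissolution; hence $y_{\text{new}}(e_u)=\scale(u,e_u)-y(u)\leq y(u)\leq$ ... wait, more carefully: $\scale(u,e_u)\leq 2y(u)$ gives $y_{\text{new}}(e_u)=\scale(u,e_u)-y(u)\leq y(u)$, and $\scale(u,e_u)\leq 2y_{\text{old}}(e_u)$ gives $y_{\text{new}}(e_u)\geq\scale(u,e_u)-y(u)$; the point I actually need is $y_{\text{new}}(e_u)\geq y_{\text{old}}(e_u)$ (or close to it), which follows because $\scale(u,e_u)\leq 2y_{\text{old}}(e_u)$ and $\scale(u,e_u)-y(u)\geq y_{\text{old}}(e_u)$ exactly when $\scale(u,e_u)-y_{\text{old}}(e_u)\geq y(u)$; since $\scale(u,e_u)\leq 0$ for edges outside the relevant family and $(e_u,e_v)$ has weight $0$, dominance $yz(e_u,e_v)=y_{\text{new}}(e_u)+y(e_v)\geq 0=\mu(e_u,e_v)$ should reduce to showing $y_{\text{new}}(e_u)\geq 0$ together with $y(e_v)\geq 0$ --- and $y_{\text{new}}(e_u)=\scale(u,e_u)-y(u)\geq \frac12\scale(u,e_u)-6\geq -6$, so I may need to invoke that all relevant $y$-values stay within a controlled band, or simply observe $\mu(e_u,e_v)=0$ and that by Lemma~\ref{half}(1) $\scale(e_u,e_v)\leq 0$ so dominance there is automatic once $y$-values are nonnegative enough; I would track the exact constants here.

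For maturity, the key observation is that dissolving $B_i$ and adding $I_{F_0}(B_i)\setminus\{\eta(B_i)\}$ to $F$ is precisely the inverse of blossom formation, so the maturity condition for surviving blossoms (those in $\Omega$ after dissolution) is unaffected: no surviving blossom contains $e_u$ or $u$ in a way that changes $|F\cap(\gamma(B)\cup I(B))|$, because the edges newly added to $F$ lie in $\gamma(B_i)\cup\delta(B_i)$ for the dissolved $B_i$ only. Finally, for the claimed bound $y(e_u)\geq\frac12\scale(u,e_u)-6$: from Lemma~\ref{tight}, $\scale(u,e_u)\geq y_{\text{old}}(u)+y_{\text{old}}(e_u)-6$, and from Lemma~\ref{half}(2) applied to the vertex $u$ we have $y_{\text{old}}(u)\geq\frac12\scale(u,e_u)$; combining, $y_{\text{new}}(e_u)=\scale(u,e_u)-y_{\text{old}}(u)\geq\scale(u,e_u)-\big(\scale(u,e_u)-y_{\text{old}}(e_u)+6\big)\cdot(\ldots)$ --- more directly, $\scale(u,e_u)\leq 2y_{\text{old}}(u)$ is false in general, so I would instead argue $y_{\text{new}}(e_u)=\scale(u,e_u)-y_{\text{old}}(u)$ and bound $y_{\text{old}}(u)\leq\frac12\scale(u,e_u)+6$ via Lemma~\ref{tight} rearranged ($y_{\text{old}}(u)\leq\scale(u,e_u)-y_{\text{old}}(e_u)+6$ and $y_{\text{old}}(e_u)\geq\frac12\scale(u,e_u)$ by Lemma~\ref{half}(2) applied at $e_u$), giving $y_{\text{new}}(e_u)\geq\scale(u,e_u)-\frac12\scale(u,e_u)-6=\frac12\scale(u,e_u)-6$.

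\medskip
\noindent\emph{Main obstacle.} The delicate part is bookkeeping the precise additive constants ($\pm 6$) through the chain Lemma~\ref{scale-slack} $\to$ Lemma~\ref{half} $\to$ Lemma~\ref{tight}, and making sure that the reassignment of $y(e_u)$ does not break dominance on the single non-matching edge $(e_u,e_v)$ --- this requires showing $y_{\text{new}}(e_u)$ does not drop below $y_{\text{old}}(e_u)$ by more than what the zero weight of $(e_u,e_v)$ and the nonnegativity of $y(e_v)$ can absorb, which is exactly where Lemma~\ref{half}(2) (giving $\scale(u,e_u)\leq 2y_{\text{old}}(e_u)$, hence $y_{\text{new}}(e_u)=\scale(u,e_u)-y_{\text{old}}(u)$ stays controlled) must be combined with the already-established tightness direction.
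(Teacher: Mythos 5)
Your derivation of the bound $y(e_u)\geq \frac{1}{2}\scale(u,e_u)-6$ is correct and matches the paper's: you combine tightness ($\scale(u,e_u)\geq y_2(u)+y_2(e_u)-6$ from Lemma~\ref{tight}) with $y_2(e_u)\geq\frac12\scale(u,e_u)$ from Lemma~\ref{half}(2), to conclude $y_{\mathrm{new}}(e_u)=\scale(u,e_u)-y_2(u)\geq y_2(e_u)-6\geq\frac12\scale(u,e_u)-6$. (Incidentally, your aside that ``$\scale(u,e_u)\leq 2y(u)$ is false in general'' is itself mistaken --- Lemma~\ref{half}(2) gives exactly $\scale(u,e_u)\leq 2\min\{y_2(u),y_2(e_u)\}$ since $y(u)$ has not been reassigned --- but you abandon that branch anyway.)

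The genuine gap is in establishing dominance on the edge $(e_u,e_v)$. You correctly identify that you need $y_{\mathrm{new}}(e_u)\geq 0$ and $y(e_v)\geq 0$ together with $\scale(e_u,e_v)\leq 0$, but your chain only yields $y_{\mathrm{new}}(e_u)\geq -6$, and you leave the remaining $6$ of slack to an unspecified ``controlled band'' argument. The paper closes this precisely by observing that $B_i$ was \emph{not} dissolved in the dissolution loop, which forces $z_1(B_i)>\gap=12$; since $(u,e_u)\in\gamma(B_i)\cup I(B_i)$, the dissolution formula gives $y_2(e_u)\geq\frac12 z_1(B_i)\geq 6$, and hence $y_{\mathrm{new}}(e_u)\geq y_2(e_u)-6\geq 0$. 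The analogous bound gives $y(e_v)\geq 0$. Without invoking the survival threshold $z_1(B_i)\geq 12$, your argument cannot recover the nonnegativity needed for dominance on $(e_u,e_v)$, so this step is missing rather than merely underwritten.
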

\begin{proof}
	Let $y_2, z_2, \Omega_2$ be the duals and blossoms right after the step of blossom dissolution. By Lemma~\ref{half}, complementary slackness is already satisfied before reassignment of $y(e_u)$. Such reassignment only affects the complementary slackness condition on edge $(e_u, e_v)$, so we only need to verify $y(e_u) + y(e_v)\geq \scale(e_u, e_v)$ after the reassignment.
	
	On the one hand, by Lemma~\ref{half}, $\scale(e_u, e_v)\leq 0$. By Lemma~\ref{tight}, $y(e_u) = \scale(u, e_u) - y_2(u)\geq y_2(e_u) - 6$, namely $y(e_u)$ decreases by at most $6$. By Lemma~\ref{half}, we conclude $y(e_u)\geq \frac{1}{2}\scale(u, e_u)-6$. Moreover, 
	
	$$y(e_u) = \scale(u, e_u) - y_2(u)\geq y_2(e_u) - 6\geq \frac{1}{2}\sum_{\substack{B\in \Omega_1\setminus \Omega_1^\prime\\ (u, e_u)\in \gamma(B)\cup I(B)}}z_1(B) - 6\geq \frac{1}{2}z_1(B_i) - 6\geq 0$$ The last inequality is because $z_1(B_i)\geq 12$ since we did not dissolve $B_i$ during the step of blossom dissolution. Similarly, we can also prove $y(e_v)\geq 0$. Hence, $y(e_u) + y(e_v)\geq 0\geq \scale(e_u, e_v)$.
\end{proof}

From Lemma~\ref{pq-search}, the $y$-values of unsaturated vertices outside current $U$ cannot increase, so we can always search from unsaturated vertices with largest $y$-values. The most important step is to analyze the behavior of Edmonds search within the subgraph $H_i$.
\begin{lemma}\label{subgraph}
	Within the while-loop, for any $(u,e_u)$ newly added to $F$ and $u\in B_i$ reachable via alternating walks from $U$, $y(u)\geq Y_1$. Plus, $y(e_u)\geq 0$ at any moment.
\end{lemma}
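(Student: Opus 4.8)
The plan is to establish both assertions jointly, by induction on the elementary steps performed while the while-loop of lines~19--24 processes $B_i$. By Lemma~\ref{pq-search}, each call to \textsf{PQ-Edmonds} on $H_i$ is a run of blossom formations, \textsf{AdjustDuals} steps and blossom dissolutions that terminates either with an augmentation along an eligible augmenting path rooted in $U$, or after $Y_1-Y_2$ dual adjustments, and it never raises the $y$-value of an unsaturated vertex outside $U$. We maintain the invariant: \textbf{(a)} every auxiliary vertex occurring in $H_i$ has $y\ge 0$, and \textbf{(b)} every original vertex $u\in B_i$ that is currently reachable from $U$ by an even-length eligible alternating walk inside $H_i$ has $y(u)$ equal to the common $y$-value of the roots of the current search forest; this common value coincides with $Y_1$ at the start of every \textsf{PQ-Edmonds} call of the loop and is only ever decreased during the call. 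Since an edge $(u,e_u)$ with $u\in B_i$ enters $F$ only through an augmentation, and (as argued below) an augmentation occurs only before the first dual adjustment of the call in question, at that instant $u$ --- being reachable from $U$ --- still sits at value $Y_1$, which gives the first statement; $y(e_u)\ge 0$ at every moment is exactly (a).

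For the base case --- the moment the loop is entered --- we have $\gamma(B_i)\cap F=\emptyset$, the only matched edges meeting $B_i$ being the boundary edges of line~17, which are tight by Lemma~\ref{zero}. Since a \textsf{Dissolve} only adds non-negative amounts to $y$, and the scaling step leaves every auxiliary vertex with an odd value $\ge 3$, part~(a) holds at entry by Lemma~\ref{zero} (using $z(B_i)\ge 12$) for the boundary auxiliary vertices and trivially for the internal ones. Part~(b) holds because every sub-blossom of $B_i$ distributes a non-negative amount onto its members, so $U$ is precisely the set of original vertices of $B_i$ sitting in the deepest still-present nest, and since $F$ contains no edge inside $\gamma(B_i)$ there is no matched edge with which to prolong an alternating walk, so the only original vertices reachable from $U$ by an even eligible alternating walk are those of $U$ itself.

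For the inductive step one checks the four kinds of step. A \textsf{Dissolve} only raises $y$-values, so (a) is preserved and (b) is not harmed. An \textsf{AdjustDuals} decrements outer and increments inner $y$-values by $1$: for (b), the roots of $U$ and every outer original vertex reachable from them are decremented simultaneously, so the equality survives, and the root value is not driven below $Y_1$ before any relevant edge is added because the search halts the instant an eligible augmenting path from $U$ appears; for (a), an auxiliary vertex of $H_i$ can be outer only when it was reached through a matched edge, and then Lemma~\ref{half}(2) applied to that edge keeps its $y$-value well above $0$, with room to absorb the few decrements it can still receive. An augmentation leaves $y$ unchanged, and (b) is re-established for the rebuilt forest exactly as in the base case; between successive \textsf{PQ-Edmonds} calls of the same loop iteration nothing but $F$ changes, so the invariant carries over. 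It remains to justify the claim that augmentations happen before the first dual adjustment: as long as $B_i$ still carries at least two units of deficiency, the eligible subgraph inside $H_i$ described by Lemma~\ref{half}, together with the matched edges produced by earlier augmentations, already contains an augmenting path rooted in $U$, so \textsf{PQ-Edmonds} augments at its first step; once the deficiency inside $B_i$ is at most one unit, the call performs only dual adjustments, adds nothing to $F$, and the statement is vacuous for it.

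The main obstacle is controlling blossoms created and destroyed inside \textsf{PQ-Edmonds} together with the gadget structure of $H_i$: one must show that contracting a newly formed root blossom meeting $U$ keeps the contracted node outer and demotes no previously outer original vertex, that an inner-to-outer relabeling forced by a blossom formation strikes only a vertex whose $y$-value already equals the current root value, and that the ``liveness'' used in the last argument --- an eligible augmenting path from $U$ being present whenever surplus deficiency remains in $B_i$ --- survives after $F$ inside $B_i$ has been partly rebuilt by earlier augmentations. That bookkeeping is where essentially all the work lies; assertion~(a) is comparatively routine once Lemmas~\ref{zero} and~\ref{half} are in hand.
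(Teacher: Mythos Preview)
Your argument contains a genuine gap. The entire first assertion rests on your claim that ``augmentations happen before the first dual adjustment'': you assert that whenever $B_i$ carries at least two units of deficiency, the eligible subgraph already contains an augmenting path from $U$, so \textsf{PQ-Edmonds} augments immediately and the relevant original vertex still has $y$-value exactly $Y_1$. This claim is neither justified nor true. Dual adjustments exist precisely because eligible augmenting paths need not be present at the outset; nothing in Lemma~\ref{half} guarantees that an augmenting path of \emph{eligible} edges from $U$ is available before any adjustment. In a typical execution, \textsf{PQ-Edmonds} will perform several dual adjustments before an augmenting path materialises, and at the moment of augmentation the $y$-value of the roots in $U$ has already dropped below~$Y_1$. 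Your invariant~(b) --- that every reachable original vertex has $y$-value \emph{equal} to the current root value --- is also stronger than what is claimed and stronger than what you can maintain once vertices enter the forest at different times.

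The paper's argument avoids this issue entirely by a direct application of Lemma~\ref{half}(2). When a vertex $v\in B_i$ first becomes reachable from $U$ via an eligible edge $(u,v)$, eligibility gives $\scale(u,v)\ge y(u)+y(v)\ge Y_1+y(v)$ (using inductively that $y(u)\ge Y_1$), while Lemma~\ref{half} gives $y(v)\ge\frac{1}{2}\scale(u,v)$ since $y(v)$ has not yet been modified. Combining these yields $y(v)\ge Y_1$ the first time $v$ is reached; thereafter $y(v)$ cannot drop below $Y_1$ because every dual adjustment decrements the root values in $U$ at least as much as any other outer vertex. The second statement is handled the same way via Lemma~\ref{zero}: one obtains $y(e_u)\ge Y_1-6$ when $e_u$ is first reached, and since the while-loop stops once the maximum unsaturated $y$-value reaches~$6$, at most $Y_1-6$ further decrements can occur. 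You should replace the ``augment-before-adjust'' dichotomy by this inequality, which is the actual mechanism linking Lemma~\ref{half} to the claim.
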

\begin{proof}
	We prove that for any vertex $u$, $y(u)\geq Y_1$ holds when the first time $u$ becomes reachable from $U$; later on $y(u)$ would always be at least $Y_1$ as dual adjustments always put the most decreases on duals of vertices in $U$. Suppose at some point a vertex $v$ first becomes reachable from $U$ via an eligible edge $(u, v)\in \blowedge[B_i]$. On the one hand, by Definition~\ref{approx-elig} of eligibility $\scale(u, v) \geq y(u) + y(v)\geq Y_1 + y(v)$, where the inequality holds by induction that $y(u)\geq Y_1$ as $u$ was already reachable from $U$; on the other hand, since $y(v)$ was not changed before, by Lemma~\ref{half}, $y(v)\geq \frac{1}{2}\scale(u,v)\geq \frac{1}{2}Y_1 + \frac{1}{2}y(v)$ at the time, and hence $y(v)\geq Y_1$.
	
	For the second statement, we can argue similarly with Lemma~\ref{zero} that $y(e_u)\geq Y_1 - 6$ the first time it becomes reachable from $U$. Notice that there can be at most $Y_1 - 6$ dual adjustments later on, $y(e_u)$ is at least $0$ in the end. (See line 21.)
\end{proof}

Then we can conclude the correctness of the algorithm
\begin{lemma}
	The \textsf{Scaling} algorithm~\ref{scaling} returns a maximum weight perfect $f$-factor in $G$.
\end{lemma}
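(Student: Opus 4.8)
The plan is to verify that the \textsf{Scaling} algorithm terminates with a perfect $f$-factor, and that this $f$-factor is optimal, by tracking the complementary-slackness invariants through one full iteration of the outer loop and then applying Lemma~\ref{opt} at the very end. The argument splits into three parts: (i) an invariant that is maintained across scaling iterations; (ii) termination of every inner phase; (iii) optimality after the final weight adjustment and \textsf{PQ-Edmonds} clean-up.

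First I would set up the scaling invariant. The claim is that at the \emph{beginning} of each iteration of the outer \textbf{for}-loop, the tuple $(F_0, \Omega, y, z, \scale)$ satisfies approximate complementary slackness (Definition~\ref{approx-slack}). This is vacuous before the first iteration since everything is zero. For the inductive step I would chain the lemmas already proved: Lemma~\ref{scale-slack} shows the scaling step produces $\scale(e)\le yz(e)$ on all edges and $\scale(e)\ge yz(e)-6$ on $F_0$; Lemma~\ref{half} then shows that after blossom dissolution and reweighting, $\scale(u,v)\le 2\min\{y(u),y(v)\}$ and $\scale(u,v)\le 0$ off $F_0\cup\bigcup_i(\gamma(B_i)\cup I(B_i))$; Lemmas~\ref{tight} and~\ref{zero} show that the ``augmentation within small blossoms'' step (adding $I_{F_0}(B_i)\setminus\{\eta(B_i)\}$ to $F$ and repairing $y(e_u)$) restores \emph{exact} complementary slackness on the relevant edges without breaking it elsewhere. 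At that point $F$, $y$, $z$ satisfy complementary slackness with respect to $\scale$, with all $F$-vertex $y$-values of equal parity (they are all $0$ or all even after the reweighting and the explicit reassignments). Lemma~\ref{preserve0} then certifies that each call to \textsf{PQ-Edmonds} inside the $H_i$ loop preserves complementary slackness, and Lemma~\ref{preserve1} certifies that each of the $\ceil{Cn^{2/3}}+6$ calls to \textsf{EdmondsSearch} preserves \emph{approximate} complementary slackness (the transition from exact to approximate is free, since exact implies approximate). Hence at the end of the iteration the invariant holds again for the next round.

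Second, termination of the inner phases. The $H_i$ \textbf{while}-loop decreases, in each pass, either the deficiency of $F$ inside $B_i$ or the quantity $Y_1$ (the largest $y$-value among unsaturated vertices of $B_i$) by a concrete amount; Lemma~\ref{subgraph} guarantees that $y(e_u)\ge 0$ throughout, so no auxiliary dual runs away, and the loop halts once all unsaturated $y$-values in $B_i$ drop to at most $6$. The $\ceil{Cn^{2/3}}+6$ bounded searches each perform exactly one dual adjustment, driving the unsaturated duals down to $-\ceil{Cn^{2/3}}$ by the parity/uniformity bookkeeping. Finally, the weight-adjustment loop lowers $\mu$ on the finitely many still-violating edges so that exact dominance holds everywhere, and the concluding \textsf{PQ-Edmonds} calls with $D=\infty$ each reduce the total deficiency by one whenever an augmenting path from an unsaturated vertex exists; since the original LP is feasible (a perfect $f$-factor exists by hypothesis) such a path exists as long as the deficiency is positive, so the loop terminates with a \emph{perfect} $f$-factor $F$.

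Third, optimality. When the outer loop ends we have exhausted all $\ceil{\log(2f(V)W)\rceil$ bits of $2f(V)\mu(e)$, so $\scale(e) = 2f(V)\mu(e)$ (up to the accumulated even shifts), and the invariant gives approximate complementary slackness for $(F,\Omega,y,z)$ against $\scale$; the weight-adjustment step only \emph{decreases} edge weights on violating edges, which cannot destroy the tightness/maturity conditions and makes dominance exact, hence approximate complementary slackness still holds, and it is preserved by the final \textsf{PQ-Edmonds} loop via Lemma~\ref{preserve1}. Applying Lemma~\ref{opt} with the scaled weights gives $\scale(F)\ge \scale(F^*)-f(\blowvertex)$ for the scaled-optimal $F^*$; since $\scale = 2f(V)\mu$ (on the blowup graph, up to the global additive constants that cancel in the comparison) and $f(\blowvertex)=O(f(V))$, while the optimum value scales by the factor $2f(V)$, the additive slack of $f(\blowvertex)$ is strictly less than one unit of $\mu$-weight, so $F$ is in fact a maximum-weight perfect $f$-factor on $\blowgraph$. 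By Lemma~\ref{equiv} this corresponds to a maximum-weight perfect $f$-factor on $G$.

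The main obstacle is the bookkeeping of part~(i): one has to check that the \emph{parity and uniformity} of the $y$-values on $F$-vertices (the precondition of both \textsf{EdmondsSearch} and \textsf{PQ-Edmonds}) is genuinely restored after the scaling doubling, the blossom dissolutions with their $z(B)/2$ shifts, the reweighting $\scale(u,v)\leftarrow\scale(u,v)-y(u)-y(v)$, and the explicit reassignments $y(e_u)\leftarrow \scale(u,e_u)-y(u)$ — and that the ``small blossom with $z(B)\le\gap$'' dissolution threshold ($\gap=12$) is exactly what keeps $y(e_u)\ge 0$ in Lemma~\ref{zero}. Everything else is either a direct citation of the Edmonds-search lemmas above or the standard scaling telescoping argument; this invariant-chasing is where the real work lies.
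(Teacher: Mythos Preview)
Your overall architecture matches the paper's proof: maintain (approximate) complementary slackness through each scaling iteration by chaining Lemmas~\ref{scale-slack}, \ref{half}, \ref{tight}, \ref{zero}, \ref{subgraph}, \ref{preserve0}, \ref{preserve1}, then after the final weight adjustment apply Lemma~\ref{opt} and use that all scaled weights are multiples of $2f(\blowvertex)$ to kill the additive slack. Two places in your sketch do not quite line up with what is actually required, though.

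First, inside the $H_i$ loop you invoke Lemma~\ref{preserve0} and stop. That lemma only controls edges \emph{inside} $H_i$; you still have to check dominance on every edge $(u,v)$ with $u\in H_i$ and $v$ outside all $H_j$, since the $y$-value of $u$ moves while $y(v)$ stays put. The paper handles this explicitly: for such an edge $\scale(u,v)\le 0$ and $\scale(u,v)\le 2y(v)$ by Lemma~\ref{half}, while $y(u)\ge 0$ throughout by Lemma~\ref{subgraph}, so $y(u)+y(v)\ge \scale(u,v)$ survives. Without this cross-boundary check your invariant in part~(i) is not actually established.

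Second, for the final \textsf{PQ-Edmonds} loop you cite Lemma~\ref{preserve1}; that lemma is for \textsf{EdmondsSearch} under approximate eligibility, not for \textsf{PQ-Edmonds}. The correct route is the paper's: after the weight adjustment the new weights $\mu'$ satisfy \emph{exact} complementary slackness, Lemma~\ref{preserve0} says \textsf{PQ-Edmonds} preserves exact complementary slackness with respect to $\mu'$, and then one deduces approximate complementary slackness with respect to the original $\mu$ from $\mu(e)-\mu'(e)\in[0,2]$. Tracking only approximate slackness through this phase, as you do, does not obviously work because \textsf{PQ-Edmonds} is run with the exact eligibility rule.
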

\begin{proof}
	First we claim that approximate complementary slackness is maintained until the step of weight adjustment. By Lemma~\ref{half}, the tightness of complementary slackness is satisfied after the step of blossom dissolution. For each edge $e$ newly added to $F$, $yz(e) = \mu(e)$ and the dominance of complementary slackness is satisfied. For the edges in $H_i$, the \textsf{PQ-Edmonds} preserve complementary slackness by Lemma~\ref{preserve0}. For the edge not adjoining to any $H_i$, the duals does not change. For the edge $(u,v)$ which $u \in H_i$ and $v$ not belongs to any $H_j$, we have $\mu(u,v) \le 0$, $\mu(u,v) \le 2y(v)$ by Lemma~\ref{half} and $y(u) \ge 0$ by Lemma~\ref{subgraph}. The complementary slackness is maintained after the step of augmentation within small blossom. Since complementary slackness is stronger than approximate complementary slackness and the \textsf{EdmondsSearch} algorithm preserves approximate complementary slackness by Lemma~\ref{preserve1}, approximate complementary slackness is maintained at the end of each iteration.

	Let $\mu, \mu^\prime$ be the edge weight respectively before and after the step of weight adjustment on line 30-32. As $y,z,\mu,\Omega$ satisfy approximate complementary slackness,  $y,z,\mu^\prime,\Omega$ satisfy complementary slackness, we know for each edge $e$, $\mu(e) - \mu'(e) \in [0,2]$. Since \textsf{PQ-Edmonds} algorithm preserves complementary slackness by Lemma~\ref{preserve0}, complementary slackness is maintained with respect to edge weights $\mu^\prime$ after the Algorithm~\ref{scaling}. Now, again by $\mu(e) - \mu'(e) \in [0,2], \forall e\in \blowedge$, we know, $y, z, F, \Omega$ still satisfies approximate complementary slackness with respect to $\mu$ after Algorithm~\ref{scaling} is completed.
	
	After the step of PQ-deficiency reduction, the total deficiency becomes zero. Then, according to Lemma~\ref{opt}, $\mu(F^*) - \mu(F)\leq  f(\blowvertex)$. Since for every edge $e\in \blowedge$, $\mu(e)$ is an integral multiple of $2f(\blowvertex)$, therefore it must be $\mu(F) = \mu(F^*)$.  Hence $F$ is a maximum weight perfect $f$-factor of $\blowgraph$.
\end{proof}

\newcommand{\old}{\text{old}}
\newcommand{\walks}{\mathcal{P}}

\section{Running Time Analysis}\label{sec:time}
Next we analyze the running time of the Scaling algorithm~\ref{scaling}. The following lemma constitutes the main technicalities of our analysis. With the assumption that Lemma~\ref{bottleneck} is true, we can finish the proof of Theorem~\ref{nmain}.
\begin{lemma}\label{bottleneck}
	Assume $F_0$ is an arbitrary perfect $f$-factor and let $F_t$ denote the $f$-factor at the end of the $t$-th scaling iteration. For any $t \ge 1$, $F_{t-1}\oplus F_t$ contains at most $\frac{C}{2} n^{2/3}$ edge disjoint augmenting walks in $\blowgraph$, where $C$ is a large constant. 
\end{lemma}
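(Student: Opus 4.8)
The plan is to follow the two-part analysis sketched in the technical overview: handle small blossoms by a counting argument on the edges they cover, handle large blossoms by a potential-function bound on $\sum_{\text{large }B}z(B)$, and then run a pigeonhole argument on the sequence of $y$-values along the surviving odd-indexed vertices of the augmenting walks. Concretely, suppose for contradiction that $F_{t-1}\oplus F_t$ contains more than $\frac{C}{2}n^{2/3}$ edge-disjoint augmenting walks in $\blowgraph$. Each such walk is an alternating sequence $(u_1,u_2),(u_2,u_3),\dots,(u_{2s-1},u_{2s})$ with the odd edges in $F_{t-1}$, the even edges in $F_t$, and all internal $u_i$ saturated by $F_t$. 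Because the \textsf{EdmondsSearch} calls drive every unsaturated vertex's $y$-value down to $-\lceil Cn^{2/3}\rceil$, the two endpoints satisfy $y(u_1)=y(u_{2s})=-\lceil Cn^{2/3}\rceil$, while approximate complementary slackness (Definition~\ref{approx-slack}) gives $yz(u_{2i-1},u_{2i})\le\scale(\cdot)$ on $F_{t-1}$-edges after weight adjustment and $yz(u_{2i},u_{2i+1})\ge\scale(\cdot)-2$ on $F_t$-edges. Unwinding $yz$ through the laminar blossom family yields, for each consecutive odd pair, an inequality of the form $y(u_{2i+1})-y(u_{2i-1})\le 2+(\text{blossom contributions of }z)$.

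Next I would separate the blossom contributions. For small blossoms: the total number of edges of $\blowgraph$ lying inside small root blossoms is at most $n\cdot n^{1/3}=n^{4/3}$ (each vertex is in at most one small root blossom of size $<n^{1/3}$, and $\blowgraph$ is essentially a simple graph after blow-up), so since there are $>\frac{C}{2}n^{2/3}$ edge-disjoint walks, all but $O(n^{2/3})$ of them meet fewer than $O(n^{2/3})$ such edges; delete the offending vertices from the odd-index sequence of a typical walk, losing only $O(n^{2/3})$ terms. For large blossoms: there are at most $n^{2/3}$ large root blossoms at any time (each has $\ge n^{1/3}$ original vertices), each \textsf{EdmondsSearch} raises $z$ by $2$ on each, and there are $O(n^{2/3})$ such calls per iteration, so $\sum_{\text{large }B\in\Omega}z(B)=O(n^{4/3})$ throughout iteration $t$; hence the large-blossom $z$-mass charged to a typical walk (charging each walk the sum of $z(B)$ over large $B$ whose boundary it crosses, using edge-disjointness and $|\delta(B)|$ bounds) is $O(n^{2/3})$ on all but $O(n^{2/3})$ walks. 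So for a typical walk, after deleting $O(n^{2/3})$ bad odd-indices, the surviving subsequence $y(u_{i_1}),y(u_{i_2}),\dots$ has consecutive gaps bounded by $2$ plus a total large-blossom excess of $O(n^{2/3})$, yet must climb from $-\lceil Cn^{2/3}\rceil$ to $\ge +\lceil Cn^{2/3}\rceil - O(n^{2/3})$ (the symmetric bound at the far endpoint). Choosing $C$ large enough, a constant fraction of the $\Omega(n^{2/3})$ surviving gaps must be strictly positive and hence in $\{1,2\}$, forcing (as in the no-blossom case) an edge of that walk in $V_q\times V_{-q}$ for $\Omega(n^{2/3})$ distinct integers $q$, where $V_x=\{u : |y(u)-x|\le 1\}$.

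Finally I would apply pigeonhole across the $q$'s and across the walks. Summing $|V_q|$ over all integers $q$ counts each vertex $O(1)$ times, so $\sum_q|V_q\cup V_{-q}|=O(n)$; since each typical walk forces $\Omega(n^{2/3})$ distinct $q$'s, there is an integer $q^*$ with $|V_{q^*}\cup V_{-q^*}|=O(n^{1/3})$ that is ``hit'' by $\Omega(n^{2/3})$ of the edge-disjoint typical walks. But each such walk uses a distinct edge with one endpoint in $V_{q^*}$ and the other in $V_{-q^*}$, and in a simple graph (the blow-up graph is simple) the number of such edges is at most $|V_{q^*}\cup V_{-q^*}|^2=O(n^{2/3})$; tuning the constant $C$ to beat the hidden constants gives the contradiction. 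The main obstacle I anticipate is the bookkeeping for the blossom contributions — precisely tracking how $z(B,I)$ terms for non-root and root, light and heavy blossoms enter the telescoped inequality $y(u_{2i+1})-y(u_{2i-1})\le\cdots$ after the weight-adjustment and \textsf{Dissolve} steps (Lemmas~\ref{dissloveedge}--\ref{subgraph}), and making sure that the ``charge each walk the $z$-mass of the large blossoms it crosses'' scheme is valid given that a single blossom boundary can be crossed by many walks but each walk crosses it $O(|\delta(B)|)$ times and edge-disjointness limits the double counting; getting the ``all but $O(n^{2/3})$ walks are typical'' quantifier chain right is where the real care is needed.
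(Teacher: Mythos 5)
Your proposal captures the paper's high-level strategy correctly: after the deficiency-reduction phase, endpoints of each walk have $y\approx -Cn^{2/3}$; telescope approximate complementary slackness along consecutive odd vertices; absorb small-blossom noise by a total-edges bound of $O(n^{4/3})$; absorb large-blossom noise by a total-$z$-mass bound of $O(n^{4/3})$; and finish with the pigeonhole over windows $V_q\times V_{-q}$ and simple-graph edge counting. Phases 1--6 of the paper do exactly this, with constants $\pm 24$ (not $\pm 1$) in the windows because the telescoped gap per step is $8$ rather than $2$ once the blossom bookkeeping is done.

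However, there is a genuine gap in your large-blossom step. You write that you will charge each walk ``the sum of $z(B)$ over large $B$ whose boundary it crosses, using edge-disjointness and $|\delta(B)|$ bounds,'' and conclude the average charge is $O(n^{2/3})$. That averaging does not go through: a large blossom $B$ satisfies $|\delta(B)| = \Omega(n^{4/3})$ in $\blowgraph$ in the worst case (by Lemma~\ref{blowedge}, $\delta(B)$ contains one edge per original-vertex/incident-edge pair), so even with edge-disjointness a single $z(B)$ can be charged to $\Omega(n^{2/3})$ different walks, and the total charge $\sum_B z(B)\cdot|\delta(B)|$ can be far larger than $\sum_B z(B)=O(n^{4/3})$. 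What actually makes the argument work in the paper is two extra ideas you do not use: first, the modified dual $\dy(u)=y(u)+\frac12\sum_{X\in\Omega^{\mathrm{large}}:\,\exists v,(u,v)\in\gamma(X)\cup I(X)}z(X)$, which absorbs all ``interior'' large-blossom $z$-mass so that in the telescoped bound (Lemmas~\ref{odd}--\ref{even1}) the only surviving large-blossom terms are $\frac12 z(X)$ on edges $(u_i,u_{i+1})\in\{\eta(X),\zeta(X)\}$; and second, Lemma~\ref{unique}, which uses the blow-up structure ($\eta(X),\zeta(X)$ share an auxiliary vertex with $f=1$) to show that each $X$'s pair $\{\eta(X),\zeta(X)\}$ can meet at most one walk. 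Only with both of these does the total large-blossom contribution over all walks collapse to $\sum_X z(X)=O(n^{4/3})$, which then averages to $O(n^{2/3})$ per walk. Without them, your charging bound is too weak by more than a polynomial factor. A secondary, smaller issue: the far endpoint does not in general climb to $+Cn^{2/3}$ in the presence of blossoms; the paper instead produces a nearby good vertex $u_{2l}$ with $\dy(u_{2l})\ge 0$ (Phase~4) and measures the climb from $-Cn^{2/3}$ to $0$ along the suffix, which suffices but changes the bookkeeping.
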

\begin{proof}[Proof of Theorem~\ref{nmain}]
	First let us try to analyze the running time of the $t$-th iteration, where $ t \geq 1$. Clearly the scaling step and the blossom dissolution step only take linear time. By Lemma~\ref{boundedrunningtime}, the deficiency reduction step takes $\tilde{O}(mn^{2/3})$ in total. So the only technical part is the running time of the augmentations within small blossoms.

	When we add edges in $\delta_{F_0}(B_i)\setminus\{\eta(B_i)\}$ to $F$, the over all deficiency of vertices in $B_i$ is thus at most $1+3\binom{|B_i\cap V|}{2} < 1.5n^{2/3}$. By Lemma~\ref{pq-search}, each instance of the \textsf{PQ-Edmonds} algorithm takes $O(m\log n)$ time. After this instance is complete, we claim either (1) the overall deficiency is reduced by one, or (2) the largest $y$ value of unsaturated vertices in $B_i$ is equal to $Y_2$. In fact, if no augmenting walk is found, then on the one hand, the duals of all vertices in $U$ has decreased by $Y_1 - Y_2$, and so their current dual is equal to $Y_2$; all the rest unsaturated vertices have not changed their duals, as the recover stage only modifies duals of matched auxiliary vertices which must be saturated ones. So in this case, the maximum dual value of unsaturated vertices has decreased to $Y_2$.
	
	The former case (1) could happen at most $1.5n^{2/3}$ times, while the latter case (2) could happen at most $|B_i|\leq n^{1/3} + 2\binom{|B_i\cap V|}{2} < n^{2/3}$ times since every time this case happens we add at least one more unsaturated vertex to $U$. In total, the \textsf{PQ-Edmonds} algorithm is invoked for at most $1.5n^{2/3} + n^{2/3} = O(n^{2/3})$ times, and thus the augmentations within small blossoms cost $\tilde{O}(mn^{2/3})$ time.

	Now turn to the PQ-deficiency reduction step. First we claim that the total deficiency of $F_t$ is at most $C n^{2/3}t$ for any $t\geq 1$. By Lemma~\ref{bottleneck}, $F_{1}$ contains at most $\frac{C}{2} n^{2/3}$ edge disjoint augmenting walks in $\blowgraph$ and the result is obviously true. For $t > 1$, assume the total deficiency of $F_{t-1}$ is at most $C n^{2/3}(t-1)$. Since $F_{t-1}\oplus F_t$ contains at most $\frac{C}{2} n^{2/3}$ edge disjoint augmenting walks in $\blowgraph$, the total deficiency of $F_t$ is at most $C n^{2/3}t$. 

	Since the total deficiency by then is at most $\tilde{O}(n^{2/3}\log W)$, we repeatedly apply the PQ-Edmonds algorithm at most $\tilde{O}(n^{2/3}\log W)$ times. By Lemma~\ref{pq-search}, the total running time of the PQ-deficiency reduction step is $\tilde{O}(mn^{2/3}\log W)$.

	Overall, the running time of our scaling algorithm is bounded by $\tilde{O}(mn^{2/3}\log W)$.
\end{proof}

The rest of this section is devoted to the proof of Lemma~\ref{bottleneck}. We prove it by an induction on $t\geq 0$. Suppose right before the $t$-th scaling iteration, we already have computed $F_{t-1}$ as an $f$-factor from the previous scaling iteration. Assume the total deficiency of $F_{t-1}$ is at most $Cn^{2/3}(t-1)$. To find a contradiction, suppose that the total deficiency of $F_t$ at the end of the $t$-th scaling iteration is more than $Cn^{2/3}t$. Then we try to prove the impossibility of $F_{t-1}\oplus F_t$ containing more than $\frac{C}{2}n^{2/3}$ edge disjoint augmenting walks in $\blowgraph$.

One technical issue the induction is when $t =1$, we do not have $F_0$ before the first scaling iteration. Fortunately, we can safely assume $F_0$ is an arbitrary perfect $f$-factor. Note we do not need to explicitly compute $F_0$, but only use it in the analysis.  
%This assumption is reasonable because at the beginning of the first iteration $y, z = 0, \Omega = \emptyset$, and so $F_0$ together with $y, z, \Omega$ satisfy the complementary slackness condition. %More importantly, since $\Omega = \emptyset$, the algorithm would skip all the blossom dissolution steps and directly execute the deficiency reduction step which is independent of $F_0$, and so the algorithm never uses any information from the imaginary perfect $f$-factor $F_0$.

For the rest of this section, for convenience, with a slight abuse of notations, define $F_0 = F_{t-1}$ and $F = F_t$. When talking about augmenting and alternating walks, we always mean augmenting walks in $F_0\oplus F$.

\subsection{Some basic tools}
\begin{defn}\label{def-diff}
	Let $\hhat{F}, \hhat{\Omega}, \hhat{z}$ denote any $f$-factor together with a compatible set of blossoms as well as their duals, and let $\rho$ be an arbitrary alternating walk. For any blossom $X\in\hhat{\Omega}$, define the following quantity:
	$$\diff(\rho, X, \hhat{F}) \overset{\mathrm{def}}{=} |\rho\cap \hhat{F}\cap (\gamma(X)\cup I(X))| - |\rho\cap (\gamma(X)\cup I(X))\setminus \hhat{F}|$$
	For any subset of blossoms $S\subseteq \hhat{\Omega}$, define 
	$\diff(\rho, S, \hhat{z}, \hhat{F}) \overset{\mathrm{def}}{=} \sum_{X\in S}\hhat{z}(X)\cdot \diff(\rho, X, \hhat{F})$; basically, $\diff(\rho, S, \hhat{z}, \hhat{F})$ is a weighted summation of $\diff(\rho, X, \hhat{F})$ over all $X\in S$.
\end{defn}

\begin{lemma}\label{diff0}
	Let $\hhat{F}, \hhat{\Omega}$ denote an arbitrary $f$-factor together with a compatible set of blossoms. For any $X\in\hhat{\Omega}$, any alternating walk $\rho$ with length at least $2$. Then, $\diff(\rho, X, \hhat{F})\geq -1$, and if equality holds then either (1) $\eta(X)\in\hhat{F}$ and $\eta(X)\in \rho$, or (2) $\gamma(X)\cup \{\eta(X) \}$ contains $\rho$ entirely.
\end{lemma}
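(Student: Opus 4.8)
The plan is to analyze the structure of the walk $\rho$ relative to the blossom $X$ by breaking $\rho$ into maximal sub-walks that lie entirely inside $\gamma(X)\cup\{\eta(X)\}$ versus those that leave $X$. Recall $I(X) = \delta_F(X)\oplus\{\eta(X)\}$, so an edge of $\delta(X)$ contributes to $\diff(\rho,X,\hhat F)$ only when it is $\eta(X)$ (if $\eta(X)\notin\hhat F$, it counts $-1$ when present and unmatched; if $\eta(X)\in\hhat F$, it counts $+1$ when present), or when it is a matched edge of $\delta(X)$ other than $\eta(X)$ (such an edge lies in $I(X)$ and, being in $\hhat F$, contributes $+1$). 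Internal edges of $\gamma(X)$ always lie in $\gamma(X)\cup I(X)$ regardless of match status, so each contributes $+1$ if matched and $-1$ if unmatched. Thus $\diff(\rho,X,\hhat F)$ counts (matched edges of $\rho$ touching $X$) minus (unmatched edges of $\rho$ inside $\gamma(X)$, plus possibly $\eta(X)$ if it is unmatched and on $\rho$).

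First I would handle case (2): if $\gamma(X)\cup\{\eta(X)\}$ contains $\rho$ entirely, then $\rho$ is an alternating walk of length $\ge 2$ all of whose edges touch $X$; since $\rho$ strictly alternates between matched and unmatched edges, the count of matched minus unmatched edges along $\rho$ is at least $-1$ (with $-1$ attained exactly when $\rho$ has odd length and starts and ends with an unmatched edge). So assume $\rho$ is not contained in $\gamma(X)\cup\{\eta(X)\}$. Then $\rho$ decomposes into maximal segments; consider any maximal sub-walk $\sigma$ of $\rho$ whose every edge lies in $\gamma(X)\cup\{\eta(X)\}$ but which is bordered (on at least one side) by an edge of $\rho$ not touching $X$ — equivalently $\sigma$ starts or ends at a vertex $v$ of $\rho$ that is incident in $\rho$ to an edge leaving $X$. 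The key local fact is: at such a boundary vertex $v\in X$, the edge of $\sigma$ incident to $v$ and the edge of $\rho$ leaving $X$ that is incident to $v$ have opposite match-types (since $\rho$ alternates at $v$), and moreover if $v=\beta(X)$ then $\eta(X)$ governs which type is "allowed" — this is exactly the blossom alternation/base structure encoded in the definition of blossom and in the $P_0,P_1$ lemma. Using this, I would show each such boundary segment $\sigma$ has $\#\text{matched} - \#\text{unmatched} \ge 0$ except possibly the single segment (if any) that contains $\eta(X)$ as an unmatched edge on $\rho$, which can cost an extra $-1$; and any segment of $\rho$ entirely inside $\gamma(X)$ with both endpoints interior to it (no boundary with an exiting edge) — there are none here since all edges touching $X$ appear in such maximal segments — contributes the same way.

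Summing over all maximal segments that touch $X$: if $\eta(X)\notin\hhat F$, every segment contributes $\ge 0$ and the only possible $-1$ comes from $\eta(X)$ appearing on $\rho$ as an unmatched edge, giving $\diff \ge -1$ with equality forcing $\eta(X)\in\rho\setminus\hhat F$ — but wait, the statement's equality case (1) requires $\eta(X)\in\hhat F$, so I would instead argue that when $\eta(X)\notin\hhat F$ the bound is actually $\diff\ge 0$ in the non-contained case, and the genuine $-1$ equality outside case (2) can only arise when $\eta(X)\in\hhat F$ and $\eta(X)\in\rho$: here $\eta(X)$ is a matched edge on $\rho$ leaving $X$ not counted in $\diff$ while its partner matched-or-unmatched structure at $\beta(X)$ is forced, and a parity/telescoping count of matched minus unmatched over the in-$X$ portion yields exactly $-1$. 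The main obstacle is precisely this careful boundary bookkeeping at $\beta(X)$ — correctly using the definitions of $\eta(X)$, $I(X)$, heavy/light type, and the alternation property to certify that every maximal in-$X$ segment not forced into the exceptional configuration contributes nonnegatively, and pinning down exactly which configurations realize equality. I expect to lean on the alternating-walk structure lemma (the $P_0(v),P_1(v)$ lemma) and the blossom definition's base/alternation clauses to make the local sign arguments rigorous, and to do the global count by a telescoping sum of $\pm 1$'s along $\rho$ grouped by entry/exit from $X$.
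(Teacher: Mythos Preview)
Your high-level strategy---break $\rho$ into maximal segments relative to the blossom and use alternation to bound each segment's contribution---matches the paper. But two choices make your proposal harder than necessary, and leave the crucial step unresolved.

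First, you decompose by $\gamma(X)\cup\{\eta(X)\}$. The paper instead decomposes by $\gamma(X)\cup I(X)$, which is exactly the set over which $\diff$ is defined. With your choice, matched edges of $\delta(X)\setminus\{\eta(X)\}$ sit outside your segments yet still contribute $+1$ to $\diff$, forcing extra bookkeeping. With the paper's choice, every edge of a maximal sub-walk $\rho'$ contributes $\pm 1$ and edges outside contribute $0$, so each $\rho'$ has $\diff(\rho',X,\hhat F)\in\{-1,0,1\}$ by pure alternation, with $-1$ only when $\rho'$ has odd length and both terminal edges are unmatched.

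Second, you plan to invoke the $P_0(v),P_1(v)$ lemma and the heavy/light distinction. None of this is needed. The paper's boundary analysis uses only the definition of $I(X)$: if a maximal sub-walk $\rho'\subseteq\gamma(X)\cup I(X)$ is extended in $\rho$ by an edge $e_3$, then $e_3\in\hhat F$ (alternation, since the terminal edge of $\rho'$ is unmatched) and $e_3\notin\gamma(X)\cup I(X)$ (maximality). But any matched edge incident to a vertex of $X$ lies in $\gamma(X)\cup\delta_{\hhat F}(X)\subseteq\gamma(X)\cup I(X)\cup\{\eta(X)\}$, so $e_3=\eta(X)$ and $\eta(X)\in\hhat F$. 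Since $\eta(X)$ can extend at most one sub-walk, at most one $\rho'$ has $\diff=-1$; this is exactly equality case (1). Your uncertainty about why the $\eta(X)\notin\hhat F$ scenario cannot produce $-1$ in the non-contained case dissolves: if $\eta(X)\notin\hhat F$ then no matched extending edge can fail to lie in $\gamma(X)\cup I(X)$, so every extendable sub-walk has $\diff\ge 0$. The only remaining way to get $-1$ is when $\rho$ lies entirely inside $\gamma(X)\cup I(X)$ with both ends unmatched---equality case (2).

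In short: switch to the $\gamma(X)\cup I(X)$ decomposition and drop the blossom-structure machinery; the argument then becomes a two-line parity count plus the observation that a matched edge leaving $\gamma(X)\cup I(X)$ from inside $X$ must be $\eta(X)$.
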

\begin{proof}
	If $\gamma(X)\cup I(X)$ contains $\rho$ entirely, then $\diff(\rho, X, \hhat{F})$ belongs to $\{-1, 0, 1\}$, because $\rho$ alternates between $\hhat{F}\cap (\gamma(X)\cup I(X))$ and $(\gamma(X)\cup I(X))\setminus \hhat{F}$.  When it is $-1$, both of its ending edges are not in $\hhat{F}$, and thus $\gamma(X)\cup \{\eta(X) \}$ contains $\rho$, which corresponds to condition (2).

	For the rest of the proof, suppose $\gamma(X)\cup I(X)$ does not contain $\rho$ entirely. Divide $\rho$ into maximal consecutive sub-walks that lie entirely within $\gamma(X)\cup I(X)$, it suffices to prove lower bounds for each such sub-walk $\rho^\prime$; we do not need to worry about edges that are outside of $\gamma(X)\cup I(X)$ since they do not contribute to $\diff(\rho, X, \hhat{F})$.
	
	If $\rho^\prime$ has even length, then clearly $\diff(\rho^\prime, X, \hhat{F}) = 0$, because $\rho^\prime$ alternates between $\hhat{F}\cap (\gamma(X)\cup I(X))$ and $(\gamma(X)\cup I(X))\setminus \hhat{F}$.  Otherwise, $\diff(\rho^\prime, X, \hhat{F})$ could only be negative when both the starting and ending edges are not in $\hhat{F}$. Let $e_1, e_2$ be the two ending edges of this sub-walk. Clearly, $e_1, e_2$ cannot both be ending edges of $\rho$ as $\rho$ is not contained within $\gamma(X)\cup I(X)$.
	
	Suppose $e_2$ is not $\rho$'s ending edge, then $\rho$ extends $\rho^\prime$ at $e_2$ with $e_3\in \hhat{F}$. By maximality of $\rho^\prime$, $e_3\notin \gamma(X)\cup I(X)$, and thus either $e_2$ or $e_3$ must be equal to $\eta(X)$ and it is a matching edge. Furthermore, a sub-walk can have a negative value of $\diff(\rho^\prime, X, \hhat{F})$, namely $-1$, only when both of its ending edges are not in $\hhat{F}$ and is extended by $\eta(X)$. Such a sub-walk must be unique as $\eta(X)\in\hhat{X}$ cannot extend two different sub-walks within $\gamma(X)\cup I(X)$. To conclude, at most one of the sub-walk $\rho^\prime$ has $\diff(\rho^\prime, X, \hhat{F}) = -1$, while other sub-walks have non-negative values, and therefore $\diff(\rho, X, \hhat{F}) \geq -1$. When equality holds it must be $\eta(X)\in \rho$ and $\eta(X)$ is a matching edge, which corresponds to condition (1).
\end{proof}

We also need a sufficient condition to ensure $\diff(\rho, X, \hhat{F})\geq 0$.

\begin{lemma}\label{diff}
	Let $\hhat{F}, \hhat{\Omega}$ denote some $f$-factor together with a compatible set of blossoms. For any $X\in\hhat{\Omega}$, any alternating walk $\rho$ with length at least $2$, if both two ending edges are not from $\gamma(X)\cup I(X)\setminus \hhat{F}$, then $\diff(\rho, X, \hhat{F})\geq 0$.
\end{lemma}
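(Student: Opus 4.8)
The plan is to reduce to Lemma~\ref{diff0}, which already establishes the lower bound $\diff(\rho,X,\hhat{F})\geq -1$ together with a precise description of when equality is attained. So it suffices to rule out the case $\diff(\rho,X,\hhat{F})=-1$ under the extra hypothesis that neither ending edge of $\rho$ lies in $(\gamma(X)\cup I(X))\setminus\hhat{F}$.

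First I would recall the two scenarios from Lemma~\ref{diff0} under which $\diff(\rho,X,\hhat{F})=-1$: either (1) $\eta(X)\in\hhat{F}$ and $\eta(X)\in\rho$, or (2) $\gamma(X)\cup\{\eta(X)\}$ contains $\rho$ entirely. I would dispose of case (2) first: if $\rho$ is entirely contained in $\gamma(X)\cup\{\eta(X)\}$, then $\rho$ alternates between matched and unmatched edges of $\gamma(X)\cup I(X)$, and a value of $-1$ forces \emph{both} ending edges of $\rho$ to be unmatched edges lying in $\gamma(X)\cup\{\eta(X)\}\subseteq \gamma(X)\cup I(X)$; since these ending edges are not $\eta(X)$ (which is matched), they lie in $(\gamma(X)\cup I(X))\setminus\hhat{F}$, contradicting the hypothesis. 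Hence case (2) cannot occur.

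For case (1), I would re-examine the decomposition of $\rho$ into maximal sub-walks lying inside $\gamma(X)\cup I(X)$ used in the proof of Lemma~\ref{diff0}. There it is shown that at most one such sub-walk $\rho'$ can contribute $-1$, and this happens only when both ending edges of $\rho'$ are unmatched and $\rho'$ is extended (on at least one side, within $\rho$) by the matched edge $\eta(X)$ — i.e., $\rho'$ is a \emph{proper} sub-walk, not all of $\rho$. But if $\rho'$ is a proper sub-walk of $\rho$, then an ending edge of $\rho'$ that is \emph{not} also an ending edge of $\rho$ gets extended inside $\rho$ by an edge outside $\gamma(X)\cup I(X)$, which by maximality and the blossom structure must be reached through $\eta(X)$; the analysis shows the only remaining way to get $\diff=-1$ is if the relevant unmatched ending edge of $\rho'$ is in fact an ending edge of $\rho$. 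That edge lies in $(\gamma(X)\cup I(X))\setminus\hhat{F}$, again contradicting the hypothesis. So no sub-walk contributes $-1$, every sub-walk contributes a non-negative amount, and therefore $\diff(\rho,X,\hhat{F})\geq 0$.

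The main obstacle I anticipate is bookkeeping: I must be careful, in case (1), to track whether the unique $-1$-contributing sub-walk $\rho'$ shares one or both of its ending edges with $\rho$, and to check that in \emph{every} such configuration at least one unmatched edge of $(\gamma(X)\cup I(X))\setminus\hhat{F}$ is forced to be an ending edge of $\rho$. A clean way to organize this is to argue the contrapositive at the level of a single minimal sub-walk: a maximal sub-walk $\rho'\subseteq\gamma(X)\cup I(X)$ has $\diff(\rho',X,\hhat{F})=-1$ only if both its ending edges are unmatched, and each such unmatched ending edge of $\rho'$ is either an ending edge of $\rho$ (forbidden by hypothesis) or is immediately followed in $\rho$ by $\eta(X)\in\hhat{F}$ — but the latter is impossible on \emph{both} sides since $\eta(X)$ is a single edge incident to $\beta(X)$ and cannot simultaneously extend $\rho'$ at two distinct endpoints unless $\rho'$ is a closed walk through $\beta(X)$, a degenerate case one checks directly. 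Once this single-sub-walk statement is in place, summing over the decomposition gives $\diff(\rho,X,\hhat{F})\geq 0$ immediately.
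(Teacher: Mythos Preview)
Your proposal is correct and follows essentially the same route as the paper: decompose $\rho$ into maximal sub-walks inside $\gamma(X)\cup I(X)$ and show no sub-walk can contribute $-1$ because both of its unmatched ends would have to be extended by $\eta(X)$, which is impossible. The only difference is organizational---you route through Lemma~\ref{diff0} to pin down $\eta(X)\in\hhat{F}$ before doing the sub-walk analysis, whereas the paper redoes the three-way case split on $\eta(X)$ directly; one small slip is your parenthetical ``$\eta(X)$ (which is matched)'' in case~(2), since Lemma~\ref{diff0}(2) does not assert that, but the conclusion there goes through regardless because an unmatched ending edge in $\gamma(X)\cup\{\eta(X)\}$ always lies in $(\gamma(X)\cup I(X))\setminus\hhat{F}$.
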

\begin{proof}
	Similar to the previous proof, divide $\rho$ into maximal consecutive sub-walks that lie entirely within $\gamma(X)\cup I(X)$, it suffices to for each such sub-walk $\rho^\prime$ that $\diff(\rho\prime, X, \hhat{F})\geq 0$. In fact, if $\rho^\prime$ has even length, then clearly $\diff(\rho^\prime, X, \hhat{F}) = 0$. Otherwise, $\diff(\rho^\prime, X, \hhat{F})$ could only be negative when both the starting and ending edges are not in $\hhat{F}$. Let $e_1, e_2$ be the two ending edges of this sub-walk.

	Since both $e_1, e_2$ cannot be $\rho$'s ending edges, $\rho$ must extend $\rho^\prime$ from both of them. Assume $\rho$ extends $\rho^\prime$ at $e_1$ with an edge $e_0\in \hhat{F}$, and at $e_2$ with an edge $e_3\in \hhat{F}$. By maximality of $\rho^\prime$, $e_i\notin \gamma(X)\cup I(X), i\in \{0, 3\}$. Consider the following cases.
	\begin{itemize}
		\item $\eta(X)\neq\hhat{F}$. Suppose $\eta(X) = (u, \beta(X))$, then the only possibility that $e_0\notin \gamma(X)\cup I(X)$ is $e_1 = \eta(X)$ and $e_0\cap e_1 = \{u \}$. By symmetry $e_2 = \eta(X)$ and $e_2\cap e_3 = \{u\}$. Hence, $\rho^\prime = \eta(X)$ is a single edge, and $\rho$ extends $\rho^\prime$ only from one endpoint, namely $u$, of $\eta(X)$, which is impossible.
		\item $\eta(X) \in \hhat{F}$. Suppose $\eta(X) = (u, \beta(X))$. In this case, the only possibility that $e_0\notin \gamma(X)\cup I(X)$ is $e_0 = \eta(X)$ and $e_0\cap e_1 = \{\beta(X) \}$. By symmetry $e_3 = \eta(X)$ and $e_2\cap e_3 = \{\beta(X) \}$. Hence, $\rho$ repeats itself at $e_0 = e_3 = \eta(X)$ which is impossible.
		
		\item $\eta(X) = \NULL$. In this case, $e_0, e_3\in \hhat{X}$ always belong to $\gamma(X)\cup I(X)$, which contradicts maximality of $\rho^\prime$.
	\end{itemize}
\end{proof}

Let $\scale_\old, y_\old, z_\old, \Omega_\old$ denote the edge weights, duals, and blossoms at the beginning of the blossom dissolution step, and let $\Omega_\old^\text{large}$ be the set of all blossoms in $\Omega_\old$ that were dissolved in the blossom dissolution phase before the reweighting step. Let $\Omega^\text{large}$ be the set of all large blossoms in $\Omega$.

\begin{lemma}\label{unique}
	For any blossom $X\in \Omega\cup \Omega_\old$, there exists at most one augmenting walk $\rho$ from $F_0\oplus F$ such that $\rho\cap \{\eta(X), \zeta(X) \}\neq \emptyset$.
\end{lemma}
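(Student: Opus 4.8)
The plan is to exploit the structural fact that $\eta(X)$ and $\zeta(X)$ are both \emph{matching edges}, i.e.\ edges of $F_0$ (or of $F$, depending on whether $X\in\Omega$ or $X\in\Omega_\old$), together with a degree argument on the auxiliary vertices of the blowup graph. First I would recall from the definition of a blossom that $\eta(X)$, when non-null, is an edge of $\delta(\beta(X))\cap\delta(X)$, and by Lemma~\ref{blowedge} it has the form $(u,e_u)$ with $u$ an original vertex in $X$ and $e_u$ an auxiliary vertex; correspondingly $\zeta(X)=(e_u,e_v)$ is the unique "middle" edge of the gadget for that original edge $e$. The key observation is that $e_u$ and $e_v$ are auxiliary vertices with $f(e_u)=f(e_v)=1$, so in the $f$-factor each of them is incident to exactly one matching edge. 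Since $\zeta(X)$ is a matching edge and $\eta(X)=(u,e_u)$ is a matching edge, both $\zeta(X)$ and $\eta(X)$ are incident to $e_u$; as $f(e_u)=1$ they cannot both be matched simultaneously in the same $f$-factor — so exactly one of $\{\eta(X),\zeta(X)\}$ lies in the relevant matching, and in fact we should think of them as "the matched edge at $e_u$", which is unique.

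Next I would pin down which $f$-factor is relevant. If $X\in\Omega_\old$ then $\eta(X)$ is a matching edge of $F_0$ and hence $\zeta(X)\notin F_0$; if $X\in\Omega$ then $\eta(X)\in F$ and $\zeta(X)\notin F$ (using the maturity/compatibility of blossoms with the current $f$-factor). In either case, the single matched edge incident to $e_u$ is determined: it is $\eta(X)$ in one matching and would be $\zeta(X)$ in the \emph{other} matching — but note that for the augmenting walk argument we are comparing against $F_0\oplus F$. The cleanest route: an augmenting walk $\rho$ in $F_0\oplus F$ that touches $e_u$ must use the two gadget edges incident to $e_u$ in the symmetric difference, and because $e_u$ has capacity $1$ and degree $2$ in $\blowgraph$, at most one augmenting walk of an edge-disjoint family can pass through $e_u$ at all. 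Since both $\eta(X)$ and $\zeta(X)$ are incident to the common auxiliary vertex $e_u$, any augmenting walk $\rho$ with $\rho\cap\{\eta(X),\zeta(X)\}\neq\emptyset$ must visit $e_u$, and therefore there can be at most one such $\rho$ among the edge-disjoint augmenting walks of $F_0\oplus F$.

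I would then write this up by: (1) invoking Lemma~\ref{blowedge} to get $\eta(X)=(u,e_u)$ and hence $\zeta(X)=(e_u,e_v)$; (2) noting $\deg_\blowgraph(e_u)=2$ with the two incident edges being exactly $(u,e_u)=\eta(X)$ and $(e_u,e_v)=\zeta(X)$; (3) observing that an augmenting walk is a walk in $\blowgraph$ that, whenever it uses one edge at a degree-$2$ vertex, must use the other — so any $\rho$ meeting $\{\eta(X),\zeta(X)\}$ passes through $e_u$ using \emph{both} $\eta(X)$ and $\zeta(X)$; (4) concluding that since the augmenting walks in $F_0\oplus F$ under consideration are edge-disjoint (they share no edge of $\blowgraph$), and both $\eta(X)$ and $\zeta(X)$ would have to belong to any such $\rho$, there is at most one of them. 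The only subtlety to handle carefully is the case $\eta(X)=\NULL$, in which $\zeta(X)$ is undefined, so the set $\{\eta(X),\zeta(X)\}$ is empty or a singleton and the statement is vacuous or follows a fortiori; I would dispatch that at the start.

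The main obstacle I anticipate is not the degree argument itself but making precise the claim "an augmenting walk passing through a degree-$2$ auxiliary vertex must traverse both incident gadget edges." An augmenting walk, unlike an augmenting path, may revisit vertices; so I need to argue that even a walk cannot enter $e_u$ via one gadget edge and leave via the \emph{same} edge (which would contradict the alternation condition, since consecutive edges of an alternating walk must alternate between $F_0\oplus F$ membership and non-membership, and an edge cannot alternate with itself), and that it cannot "turn around" — i.e.\ the walk at $e_u$ genuinely uses the two distinct edges $\eta(X)$ and $\zeta(X)$. This is a short but essential local check on the definition of alternating walk; once it is in place, edge-disjointness of the family immediately yields uniqueness.
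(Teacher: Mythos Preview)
Your degree-two argument at $e_u$ is the right structural observation, and it is essentially what the paper uses. But step (3) as you state it is false: an augmenting walk using one of the two edges at $e_u$ need \emph{not} use the other, because $e_u$ may simply be an \emph{endpoint} of the walk. Nothing in the definition of an augmenting walk forces an interior-vertex condition at $e_u$; the walk terminates at unsaturated vertices, and $e_u$ (with $f(e_u)=1$) can very well be unsaturated in $F$. Your ``main obstacle'' paragraph worries about turning around or repeating an edge, but that is a non-issue (consecutive edges of a walk are distinct by definition); the actual obstacle is this endpoint case, which you do not address.

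The paper closes exactly this gap. Suppose two edge-disjoint augmenting walks $\rho_1,\rho_2$ each meet $\{\eta(X),\zeta(X)\}$. By edge-disjointness one contains $\eta(X)=(u,e_u)$ and the other contains $\zeta(X)=(e_u,e_v)$. Since $e_u$ has degree two and the ``other'' edge at $e_u$ is already used by the other walk, each of $\rho_1,\rho_2$ must terminate at $e_u$. Now invoke the definition of augmenting walk: terminal edges are \emph{not} in $F$, and since all edges of the walks lie in $F_0\oplus F$, these terminal edges lie in $F_0$. Hence both $\eta(X)$ and $\zeta(X)$ belong to $F_0$, contradicting $f(e_u)=1$. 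That single extra sentence is what your write-up is missing; once you insert it (and drop the incorrect aside that $\eta(X)$ is always a matching edge --- it need not be, depending on whether the blossom is light or heavy), your proof matches the paper's.
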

\begin{proof}
	Suppose otherwise there are two different augmenting walks $\rho_1, \rho_2$ that intersects $\{\eta(X),\zeta(X) \}$, then $\eta(X), \zeta(X)\neq \NULL$. Suppose $\eta(X) = (u, e_u)$ and $\zeta(X) = (e_u, e_v)$. Since $\rho_1, \rho_2$ are edge disjoint and they contain $\eta(X)$ and $\zeta(X)$ respectively, $e_u$ must be a common ending point of both augmenting walks. Hence $(u, e_u), (e_u, e_v)\in F_0$, which is impossible since $f(e_u) = 1$.
\end{proof}

\begin{lemma}\label{z-bound}
	$\sum_{B\in\Omega^\text{large}}z(B)\leq 2Cn^{4/3} +14n^{2/3}$ for a large enough constant $C$.
\end{lemma}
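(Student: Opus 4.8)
The plan is to follow the informal argument sketched in the technical overview: control the growth of $\sum_{B\in\Omega^\text{large}}z(B)$ over a single scaling iteration by showing that (i) this sum is $0$ at the moment the deficiency reduction step begins, and (ii) each of the $\lceil Cn^{2/3}\rceil+6$ calls to \textsf{EdmondsSearch} in that step raises it by at most $2n^{2/3}$.

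\textbf{Step 1 (no large blossom survives into deficiency reduction).} I would first argue that, at the moment the deficiency reduction step of iteration $t$ starts, $\Omega$ contains no large blossom at all. The blossom dissolution step runs \textsf{Dissolve} on every large blossom, and the subsequent loop dissolves every remaining blossom too, so $\Omega=\emptyset$ just before the augmentation within small blossoms. In that phase the only new blossoms are those formed by \textsf{PQ-Edmonds} running inside the subgraphs $H_i=\blowgraph[B_i]\cup(\text{endpoints of }I(B_i)\setminus\{\eta(B_i)\})$. By Lemma~\ref{blowedge} every edge of $\delta(B_i)$ has the form $(u,e_u)$ with $u\in B_i$ and $e_u$ auxiliary, so the vertices added to $H_i$ are auxiliary and $|H_i\cap V|=|B_i\cap V|<\size$; hence any blossom formed inside $H_i$ is a vertex subset of some $H_i$ and is therefore small. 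Consequently $\Omega^\text{large}=\emptyset$, i.e.\ $\sum_{B\in\Omega^\text{large}}z(B)=0$, when the deficiency reduction step begins.

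\textbf{Step 2 (each \textsf{EdmondsSearch} call adds at most $2n^{2/3}$).} Inside one call to \textsf{EdmondsSearch} only \textsf{AdjustDuals} can increase any $z$-value — the recover step modifies only $y$, and blossom dissolution only removes (root) blossoms whose dual has already dropped to $0$ — and it performs exactly one dual adjustment, adding $2$ to $z(B)$ only for non-singleton \emph{root} blossoms $B$ that are classified outer. Root blossoms are pairwise vertex-disjoint, so at most $|V|/\size=n^{2/3}$ of them can be large at any instant; thus at most $n^{2/3}$ of the $+2$ increments land on large blossoms. I would then check that the only remaining way a large blossom could contribute to an increase is ruled out: a blossom enters $\Omega^\text{large}$ either by being newly created (its $z$ starts at $0$) or by re-emerging as a root after its parent is dissolved (in which case it was already a large blossom with an unchanged $z$-value), so no positive $z$ is gained "for free." Hence the net increase of $\sum_{B\in\Omega^\text{large}}z(B)$ per call is at most $2n^{2/3}$.

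Combining the two steps, at the end of iteration $t$ we obtain $\sum_{B\in\Omega^\text{large}}z(B)\le 2n^{2/3}\bigl(\lceil Cn^{2/3}\rceil+6\bigr)\le 2n^{2/3}(Cn^{2/3}+7)=2Cn^{4/3}+14n^{2/3}$, which is the claim. The hard part is not any computation but the bookkeeping of Steps 1 and 2: one must be sure the algorithm genuinely empties $\Omega$ before the augmentation within small blossoms, that every $H_i$ stays small (this is exactly where Lemma~\ref{blowedge} is invoked), and that the way \textsf{EdmondsSearch} and \textsf{AdjustDuals} manipulate $(\Omega,z)$ never spawns a large blossom already carrying positive $z$. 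Once those points are pinned down, the count of large root blossoms and the final arithmetic are immediate.
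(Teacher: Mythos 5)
Your proof is correct and takes essentially the same approach as the paper's one-line argument (at most $n^{2/3}$ large root blossoms since they are vertex-disjoint and each has $\geq n^{1/3}$ original vertices, hence each dual adjustment adds at most $2n^{2/3}$, over $\lceil Cn^{2/3}\rceil+6$ calls). Your Step 1, which explicitly verifies that $\sum_{B\in\Omega^\text{large}}z(B)=0$ when the deficiency-reduction phase begins — because the dissolution phase empties $\Omega$ entirely and any blossom created by \textsf{PQ-Edmonds} inside $H_i$ is necessarily small — is a detail the paper leaves implicit, and it is a worthwhile addition.
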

\begin{proof}
	This is because, the total sum of duals of large blossom increases by at most $2n^{2/3}$ after each dual adjustment in the deficiency reduction phase, and thus the total sum at the end of the current scaling iteration is at most $2n^{2/3}\cdot (\ceil{Cn^{2/3}}+6)<2Cn^{4/3} + 14n^{2/3}$.
\end{proof}

\begin{lemma}\label{z1-bound}
	$\sum_{B\in\Omega_\old^\text{large}}z_\old(B)\leq 5Cn^{4/3}$ for a large enough constant $C$.
\end{lemma}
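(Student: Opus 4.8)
The plan is to bound $\sum_{B\in\Omega_\old^{\text{large}}}z_\old(B)$ by relating the duals $z_\old$ at the beginning of the blossom dissolution step of iteration $t$ to the large-blossom duals $z$ at the end of iteration $t-1$, and then invoking Lemma~\ref{z-bound} applied to iteration $t-1$. Recall that the transition from iteration $t-1$ to iteration $t$ doubles all $z$-values ($z(B)\leftarrow 2z(B)$) before the dissolution step runs. So if $\mathcal{Z}_{t-1}$ denotes $\sum_{B\in\Omega^{\text{large}}}z(B)$ at the end of iteration $t-1$ — which by Lemma~\ref{z-bound} is at most $2Cn^{4/3}+14n^{2/3}$ — then immediately after the scaling step of iteration $t$ the total dual mass carried by those same (large) blossoms is at most $2\mathcal{Z}_{t-1}\le 4Cn^{4/3}+28n^{2/3}$.

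First I would make precise which blossoms belong to $\Omega_\old^{\text{large}}$: by definition these are the blossoms in $\Omega_\old$ that get dissolved before the reweighting step, which (by the Scaling algorithm, line 7) consists of \emph{all} large blossoms together with possibly some small root blossoms whose $z$-value is at most $\gap=12$. The large blossoms among them are exactly the large blossoms inherited from iteration $t-1$ (with doubled duals), so their contribution is at most $2\mathcal{Z}_{t-1}$ as above. The small blossoms that are dissolved in this phase each contribute at most $\gap = 12$ to the sum; since there are at most $|\blowvertex| = O(m)$ root blossoms at any time, a naive count would be too weak, so I would instead observe that a \emph{small} blossom $B$ has $|B\cap V|<n^{1/3}$, but more usefully that small root blossoms are vertex-disjoint, and each nontrivial one contains at least two original vertices, so there are at most $n/2$ of them — still potentially too large. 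The correct observation is that we only dissolve a small root blossom if $z(B)\le 12$, and the number of \emph{all} root blossoms (large or small) is bounded by $f(\blowvertex)$ but more sharply, the blossoms form a laminar family so the number of nontrivial root blossoms is at most $|\blowvertex|$; however since we only need the bound $5Cn^{4/3}$ and $2\mathcal{Z}_{t-1}$ already gives $4Cn^{4/3}+28n^{2/3}$, it suffices to show the small-blossom contribution is at most $Cn^{4/3}-28n^{2/3}$, which holds for large $C$ as long as the number of dissolved small root blossoms is $O(n^{4/3})$ — and indeed it is $O(n)$ since small root blossoms are vertex-disjoint.

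The main obstacle I anticipate is being careful about \emph{which} iteration's version of Lemma~\ref{z-bound} to invoke and ensuring the constant bookkeeping closes: Lemma~\ref{z-bound} as stated bounds $\sum_{B\in\Omega^{\text{large}}}z(B)$ at the end of a scaling iteration, and I must apply it to iteration $t-1$ (with the induction on $t$ supplying that iteration $t-1$ completed with this bound), then account for the doubling and for the extra small blossoms, and verify $2(2Cn^{4/3}+14n^{2/3}) + O(n) = 4Cn^{4/3}+O(n^{2/3}) \le 5Cn^{4/3}$ for $C$ large enough. A secondary subtlety is the base case $t=1$: before the first iteration $z=0$, so $\Omega_\old = \emptyset$ after scaling and the sum is trivially $0$, handled separately. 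Aside from tracking these constants, the argument is essentially a one-line consequence of Lemma~\ref{z-bound} plus the structure of the scaling step.
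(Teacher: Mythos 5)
Your proposal reproduces the paper's argument: apply Lemma~\ref{z-bound} at the end of iteration $t-1$ to get $2Cn^{4/3}+14n^{2/3}$, account for the factor-two scaling step to bound the large-blossom contribution by $4Cn^{4/3}+28n^{2/3}$, add the dissolved small blossoms (each contributing at most $\gap=12$), and absorb everything into $5Cn^{4/3}$ for large $C$. One small fix to your count of the dissolved small blossoms: they are not pairwise vertex-disjoint, since the dissolution loop recurses and therefore nested small root blossoms with small dual are also placed in $\Omega_\old^{\text{large}}$; however, they do form a sub-laminar family of nontrivial blossoms each containing at least two original vertices, so their number is at most $n-1$ and your $O(n)$ conclusion (and the final constant bookkeeping) stands.
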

\begin{proof}
	We first argue that $\sum_{\text{large }B\in\Omega_\old}z_\old(B)\leq 4Cn^{4/3}$; recall that $\Omega_\old^\text{large}$ not only contains all large blossoms in $\Omega_\old$, but some small blossoms with small dual values as well. This is because, in the previous iteration, by Lemma~\ref{z-bound} the total sum at the end of the previous scaling iteration is at most $2Cn^{4/3} + 14n^{2/3}$. Therefore, after the scaling phase, the total sum of $z_\old(\cdot)$ of large blossoms in $\Omega_\old$ is at most $4Cn^{4/3} + 28n^{2/3}$. On the other hand, by definition of $\Omega_\old^\text{large}$, other than large blossoms, $\Omega_\old^\text{large}$ also contains small blossom $B$ such that $z_\old(B)\leq \gap$, and thus the total sum of duals of these blossoms is at most $12n$. Finally, $\sum_{B\in\Omega_\old^\text{large}}z_\old(B)\leq 4Cn^{4/3} + 28n^{2/3} + 12n \leq 5Cn^{4/3}$ for a large enough constant $C$.
\end{proof}

\subsection{The main proof}

\textbf{General strategy}: Let $\walks$ be a set of augmenting walks which is initialized to the $F_0\oplus F$. Assume $|\walks| \geq \frac{C}{2}n^{2/3}$. To reach a contradiction, our analysis of $\walks$ will consist of several phases; in each phase the set $|\walks|$ is pruned according to some criteria but we will still be guaranteed a lower bound on $|\walks|$.

\subsection*{Phase 1}

Instead of directly working with duals $y$, define variables as follows:
$$\dy(u) = y(u) + \frac{1}{2}\sum_{\substack{X\in\Omega^\text{large} \\ \exists v, (u, v)\in \gamma(X)\cup I(X)}}z(X)$$

\begin{defn}
	Call an unsaturated vertex $u$ \emph{unaffected} if $\dy(u) = y(u)$. An augmenting walk is \emph{unaffected} if both of its unsaturated endpoints are unaffected.
\end{defn}

The next lemma claims that most of the unsaturated vertices are unaffected.

\begin{lemma}\label{unsat}
	There are at least $(C-1)n^{2/3}$ unaffected vertices.
\end{lemma}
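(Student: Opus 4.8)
The plan is to bound the total deficiency carried by \emph{affected} unsaturated vertices by $n^{2/3}$, and then subtract: since we are in the middle of deriving a contradiction, the total deficiency of $F=F_t$ exceeds $Cn^{2/3}t\ge Cn^{2/3}$, so the deficiency $\sum_u\bigl(f(u)-\deg_F(u)\bigr)$ over unaffected unsaturated $u$ is more than $(C-1)n^{2/3}$. (We read ``number of unaffected vertices'' as this deficiency sum, as is conventional.)

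First I would unwind the definition of $\dy$: an unsaturated vertex $u$ is affected precisely when some large blossom $X\in\Omega^\text{large}$ with $z(X)>0$ has an edge incident to $u$ inside $\gamma(X)\cup I(X)$. Writing $I(X)=\delta_F(X)\oplus\{\eta(X)\}$ and invoking Lemma~\ref{blowedge} — every edge of $\delta(X)$ joins an original vertex of $X$ to an auxiliary vertex outside $X$ — I would sort such a $u$ into one of two kinds. Kind (I): $u$ lies inside $X$; this absorbs all $\gamma(X)$-edges, and also the $\delta_F(X)$-edges, whose outside auxiliary endpoint is already \emph{saturated} (it is matched and has $f$-value $1$). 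Kind (II): $u$ is an auxiliary vertex $e_w$ with $\eta(X)=(w,e_w)\notin F$, $w\in X$, $e_w\notin X$, and $\deg_F(e_w)=0$. A kind-(I) vertex lies in the unique root large blossom $R\supseteq X$; since $R$ is mature it holds at most one unsaturated vertex, its base, with deficiency at most $1$. Root large blossoms are pairwise disjoint and each contains at least $n^{1/3}$ original vertices, so there are at most $n^{2/3}$ of them; hence kind (I) contributes at most $n^{2/3}$ units of deficiency.

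For kind (II) the crucial structural point is that $\eta$ is inherited along a blossom's base spine. If $X$ is large and $\eta(X)\notin\gamma(R)$ for its root large blossom $R$, then $X=R$ or $X$ lies on $R$'s base spine, and in either case $\eta(X)=\eta(R)$: a non-base sub-blossom has its $\eta$ among the circuit edges of its parent, hence strictly inside $R$, whereas the inheritance rule ``$\eta(B)=\eta(B_0)$ when $B_0$ is nontrivial'' propagates $\eta(R)$ down the base spine. Therefore each root large blossom $R$ exposes at most one crossing edge $\eta(R)\in\delta(R)$, hence charges at most one auxiliary vertex $e_w\notin R$; and by Lemma~\ref{blowedge} this $e_w$ is charged to only one $R$, namely the one containing its unique original neighbour $w$. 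Finally, kinds (I) and (II) are mutually exclusive per $R$: if the base of $R$ has deficiency $1$ then $R$ is light and $\eta(R)=\NULL$, so nothing is exposed. Thus every root large blossom accounts for at most one unit of affected deficiency, the affected deficiency is at most $n^{2/3}$, and the lemma follows.

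The main obstacle is exactly this $\eta$-bookkeeping for kind (II): showing that a root large blossom — which may contain many nested large blossoms — cannot expose several distinct crossing edges leading to several distinct unsaturated auxiliary vertices, and that an interior unsaturated vertex and an exterior exposed one are never charged to the same $R$. Both are resolved by chasing the blossom definition's rules for $\eta$ (circuit membership for non-base children, base inheritance, and the light-with-$\NULL$-$\eta$ condition in the definition of maturity) together with Lemma~\ref{blowedge}.
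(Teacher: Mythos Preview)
Your proof is correct and follows essentially the same approach as the paper's: classify each affected unsaturated vertex as either lying inside a large blossom (hence it is the base of its root large blossom, with deficiency $1$) or as the exterior auxiliary endpoint of some $\eta(X)\notin F$, observe the two cases are mutually exclusive per root large blossom, and bound by the number of root large blossoms. Your treatment is in fact more careful than the paper's on the final counting step: the paper establishes that each large $X$ affects at most one unsaturated vertex and then passes directly to the bound $n^{2/3}$ by citing the number of \emph{root} large blossoms, leaving implicit why nested large blossoms cannot expose additional affected vertices; your base-spine $\eta$-inheritance argument fills exactly this gap.
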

\begin{proof}
	We claim every large blossom can affect at most one unsaturated vertex. For each $X\in\Omega^\text{large}$, $z(X)$ could contribute to at most one unsaturated vertex $u\in X$, since $X$ contains at most one unsaturated vertex. Next we only need to worry about vertices $u\notin X$ that are affected by $X$. Then, there exists $v\in X$ such that $(u, v)\in I(X)$. Since $v$ belongs to $X$ which is a nontrivial blossom, $v$ must be an original vertex in $V$, and hence $u$ is an auxiliary vertex. Consider two cases.
	
	\begin{itemize}
		\item $(u, v) \in F$. In this case, since $u$ is an auxiliary vertex, $f(u) = 1$. As $(u, v)\in F$, $u$ is already saturated.
		
		\item $(u, v)\notin F$. In this case, as $(u, v)$ belongs to $I(X)$, it must be $(u, v) = \eta(X)$ and $\eta(X)\notin F$. Hence such $u$  which equals to an endpoint  of $\eta(X)$ is unique.
	\end{itemize}
	
	To summarize, $X$ can affect at most one unsaturated vertex outside of $X$; when this happens, it should be $\eta(X)\neq \NULL$, and thus $X$ does not affect any unsaturated vertex within $X$ since all of them are saturated. As there are at most $n^{2/3}$ root blossoms in $\Omega^\text{large}$, there are at most $n^{2/3}$ affected unsaturated vertices, and therefore there are at least $(C-1)n^{2/3}$ unaffected unsaturated vertices.
\end{proof}
\begin{corollary}
	There are at least $(\frac{1}{2}C-1)n^{2/3}$ unaffected augmenting walks.
\end{corollary}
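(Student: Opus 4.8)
The plan is to derive the corollary from Lemma~\ref{unsat} and the standing assumption $|\walks|\ge \frac{C}{2}n^{2/3}$ by a short charging argument over the unsaturated endpoints of augmenting walks.

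First I would extract from the proof of Lemma~\ref{unsat} the sharper statement that there are at most $n^{2/3}$ affected unsaturated vertices (at most one per root blossom of $\Omega^\text{large}$, and there are at most $n^{2/3}$ such blossoms) and, crucially, that every affected unsaturated vertex has deficiency at most $1$ in $F$. Indeed, the case analysis in that proof shows such a vertex is either the base $\beta(X)$ of some large mature blossom $X$ — whose deficiency is at most $1$ by the maturity condition — or an auxiliary vertex $u$ with $f(u)=1$, whose deficiency is trivially at most $1$.

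Then I would charge augmenting walks to their endpoints: since $F_0$ is a perfect $f$-factor, in the edge-disjoint collection $\walks$ of augmenting walks in $F_0\oplus F$ a vertex of $F$-deficiency $d$ is an endpoint of at most $d$ of them (so a deficiency-$\le 1$ vertex anchors at most one walk and, in particular, cannot be the repeated endpoint of a closed augmenting walk). Consequently at most $n^{2/3}$ walks of $\walks$ have an affected endpoint; discarding these leaves at least $\frac{C}{2}n^{2/3}-n^{2/3}=(\tfrac12 C-1)n^{2/3}$ walks, each of whose two unsaturated endpoints is unaffected, i.e., the required number of unaffected augmenting walks. The only mildly delicate points are the deficiency-at-most-one dichotomy (immediate from the argument already carried out inside the proof of Lemma~\ref{unsat}) and the bookkeeping that a deficiency-$d$ vertex anchors at most $d$ augmenting walks of $F_0\oplus F$, which relies on $F_0$ being perfect; I do not expect either to be a genuine obstacle.
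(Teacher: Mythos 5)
Your argument is correct and is essentially the route the paper intends (the paper states the corollary without proof, immediately after Lemma~\ref{unsat}, so the only thing to supply is exactly the charging you give). You rightly identify the one non-trivial step the paper leaves implicit: it is not enough to know there are at most $n^{2/3}$ affected unsaturated vertices, one also needs that each such vertex anchors at most one walk of $F_0\oplus F$, which you obtain by observing from the case analysis in the proof of Lemma~\ref{unsat} that every affected unsaturated vertex is either the base of a (necessarily large) mature root blossom or an auxiliary vertex, and in either case has $F$-deficiency at most $1$; combined with $F_0$ being perfect, this gives the at-most-one-walk bound and hence at most $n^{2/3}$ affected walks. I would only tighten the phrasing slightly: the unsaturated vertex inside a large blossom $X$ is the base of the \emph{root} blossom containing $X$ (which is itself large), not necessarily of $X$; this does not affect the conclusion.
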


For the rest of this section we will only be looking at unaffected augmenting walks. Namely, remove all affected augmenting walks from $\walks$. By the above corollary, we still have $|\walks|\geq (\frac{1}{2}C - 1)n^{2/3}$.

\subsection*{Phase 2}

\begin{lemma}\label{odd}
	Consider any alternating walk $\rho = \langle u_1, u_2, \cdots, u_{2s+1}\rangle$ starting with an edge not in $F$. Then, 
	$$\begin{aligned}
	\dy(u_{2s+1}) - \dy(u_1)&\leq 8s + \frac{1}{2}\sum_{i=1}^s\sum_{\substack{X\in\Omega_\old^\text{large} \\ (u_{2i-1}, u_{2i}) \in\{ \eta(X), \zeta(X)\}}}z_\old(X) + \frac{1}{2}\sum_{i=1}^s\sum_{\substack{X\in\Omega^\text{large} \\ (u_{2i}, u_{2i+1}) \in \{\eta(X),\zeta(X) \}}}z(X)\\
	&- \diff(\rho, \Omega\setminus \Omega^\text{large}, z, F)-\diff(\rho, \Omega_\old\setminus\Omega_\old^\text{large}, z_\old, F_0)
	\end{aligned}$$
\end{lemma}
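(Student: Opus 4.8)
The plan is to telescope $\dy(u_{2s+1})-\dy(u_1)=\sum_{i=1}^{s}\big(\dy(u_{2i+1})-\dy(u_{2i-1})\big)$ and estimate each summand separately. Throughout, write $a_i=(u_{2i-1},u_{2i})$, which lies in $F_0\setminus F$, and $b_i=(u_{2i},u_{2i+1})$, which lies in $F\setminus F_0$; thus the $a_i$ are exactly the unmatched edges of $\rho$ and the $b_i$ exactly the matched ones, and $\rho$ sits inside $F_0\oplus F$. The starting remark is that $\dy$ is precisely the dual function one obtains by running \textsf{Dissolve} on every $X\in\Omega^\text{large}$: each such dissolution adds $z(X)/2$ to $y(u)$ for exactly those $u$ incident to an edge of $\gamma(X)\cup I(X)$, which is the defining formula for $\dy$. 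Hence Lemma~\ref{dissloveedge}, applied to the (conceptual) dissolution of all current large blossoms, gives for the matched edge $b_i\in F$ the identity $\dy(u_{2i})+\dy(u_{2i+1})+Z(b_i)=yz(b_i)+\tfrac12\sum_{X\in\Omega^\text{large},\,b_i\in\{\eta(X),\zeta(X)\}}z(X)$ (using part~(2) with the compatible factor $F$), where $Z(e):=\sum_{B\in\Omega\setminus\Omega^\text{large},\,e\in\gamma(B)\cup I(B)}z(B)$, and for the unmatched edge $a_i\notin F$ the one-sided bound $\dy(u_{2i-1})+\dy(u_{2i})+Z(a_i)\ge yz(a_i)$ (part~(1)). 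Feeding in the approximate complementary slackness maintained at the end of the $t$-th iteration — tightness $yz(b_i)\le\scale(b_i)$ and dominance $yz(a_i)\ge\scale(a_i)-2$ — yields, for each $i$,
\[
\dy(u_{2i+1})-\dy(u_{2i-1})\ \le\ \scale(b_i)-\scale(a_i)+2-\big(Z(b_i)-Z(a_i)\big)+\tfrac12\!\!\sum_{X\in\Omega^\text{large},\,b_i\in\{\eta(X),\zeta(X)\}}\!\!z(X).
\]

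The next step disposes of the surviving weights $\scale(a_i),\scale(b_i)$ by tracing them back to the start of the iteration. Since $\scale$ is frozen after the blossom-dissolution step (later phases touch only $y,z$) and the only weight change there is the reweighting $\scale(e)\leftarrow\scale_\old(e)-y_1(u)-y_1(v)$, with $y_1$ the dual just before reweighting (i.e.\ after dissolving $\Omega_\old^\text{large}$), we have $\scale(e)=\scale_\old(e)-y_1(u)-y_1(v)$ for every $e=(u,v)$. Now combine Lemma~\ref{scale-slack} (which propagates the approximate slackness of $F_0$ through the scaling step into $\scale_\old(e)\le yz_\old(e)$ for all $e$ and $\scale_\old(e)\ge yz_\old(e)-6$ for $e\in F_0$) with Lemma~\ref{dissloveedge} for the dissolution of $\Omega_\old^\text{large}$ — part~(1) for $b_i\notin F_0$, part~(2) for $a_i\in F_0$. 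In both computations the $y_\old$-terms and the \emph{large} $z_\old$-terms cancel, leaving the clean estimates
\[
\scale(b_i)\ \le\ Z_\old(b_i),\qquad \scale(a_i)\ \ge\ Z_\old(a_i)-6-\tfrac12\!\!\sum_{X\in\Omega_\old^\text{large},\,a_i\in\{\eta(X),\zeta(X)\}}\!\!z_\old(X),
\]
where $Z_\old(e):=\sum_{B\in\Omega_\old\setminus\Omega_\old^\text{large},\,e\in\gamma(B)\cup I_{F_0}(B)}z_\old(B)$.

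Substituting these two bounds into the per-triple inequality and summing over $i=1,\dots,s$ proves the lemma. The left side telescopes to $\dy(u_{2s+1})-\dy(u_1)$; the constants contribute $(2+6)s=8s$; the two $\eta/\zeta$-sums are verbatim the two positive terms in the statement. Finally the four ``$Z$-type'' sums reassemble into the two $\diff$-quantities: since $\{b_i\}=\rho\cap F$, $\{a_i\}=\rho\setminus F$ and all edges of $\rho$ are distinct, one gets $\sum_i\big(Z(b_i)-Z(a_i)\big)=\diff(\rho,\Omega\setminus\Omega^\text{large},z,F)$; and since $\{a_i\}=\rho\cap F_0$, $\{b_i\}=\rho\setminus F_0$, one gets $\sum_i\big(Z_\old(b_i)-Z_\old(a_i)\big)=-\diff(\rho,\Omega_\old\setminus\Omega_\old^\text{large},z_\old,F_0)$. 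Collecting terms reproduces exactly the claimed inequality.

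The step I expect to require the most care is the bookkeeping around the sets $I(\cdot)$ and the distinguished edges $\eta(\cdot),\zeta(\cdot)$: one must check that dissolving a laminar family one blossom at a time leaves $\gamma(X),I(X),\eta(X),\zeta(X)$ of the not-yet-dissolved blossoms untouched, so that Lemma~\ref{dissloveedge} composes additively over $\Omega^\text{large}$ and over $\Omega_\old^\text{large}$, and one must keep straight that $I(\cdot)$ is read with respect to $F$ for blossoms of $\Omega$ but with respect to $F_0$ for blossoms of $\Omega_\old$. The structural fact driving Lemma~\ref{dissloveedge}(2) — that an auxiliary vertex has $f=1$, hence meets at most one matched edge, which is what confines the extra $\tfrac12 z$ contribution to $\eta(X)$ or $\zeta(X)$ — is already packaged inside that lemma, so invoking it as a black box is the cleanest route and avoids redoing the case analysis here.
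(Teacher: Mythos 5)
Your proposal is correct and follows essentially the same route as the paper's proof: telescope over the $s$ sub-triples, invoke the current iteration's approximate complementary slackness to bound each $\dy(u_{2i+1})-\dy(u_{2i-1})$ in terms of $\scale(b_i)-\scale(a_i)$ plus large-blossom $\eta/\zeta$ terms and small-blossom correction terms, then trace $\scale(b_i),\scale(a_i)$ back through the reweighting and the old iteration's slackness via Lemma~\ref{scale-slack}, and finally collapse the four indicator sums into the two $\diff$ quantities. The one presentational difference is that you package the $\dy$ algebra as a black-box invocation of Lemma~\ref{dissloveedge} by observing that $\dy$ is the dual obtained by conceptually dissolving all of $\Omega^{\text{large}}$ with compatible factor $F$, whereas the paper expands the definition of $\dy$ inline each time; this is a cleaner framing but mathematically identical, and your remark that the iterative dissolution composes additively (because $\gamma(X), I(X), \eta(X), \zeta(X), z(X)$ of not-yet-dissolved blossoms are unaffected) is exactly the point that makes the black-box use legitimate.
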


This lemma tries to analyze the difference between $\dy(u_{2s+1})$ and $\dy(u_1)$. Speaking on a high level, the two summation terms in the middle $$\frac{1}{2}\sum_{i=1}^s\sum_{\substack{X\in\Omega_\old^\text{large} \\ (u_{2i-1}, u_{2i}) \in\{\eta(X),\zeta(X) \}}}z_\old(X) + \frac{1}{2}\sum_{i=1}^s\sum_{\substack{X\in\Omega^\text{large} \\ (u_{2i}, u_{2i+1}) \in\{\eta(X),\zeta(X) \} }}z(X)$$ will be small on average since the total sum of duals of large blossoms are bounded by $O(Cn^{4/3})$. For the last two terms, $\diff(\rho, \Omega\setminus \Omega^\text{large}, z, F)+\diff(\rho, \Omega_\old\setminus\Omega_\old^\text{large}, z_\old, F_0)$, their influences are also very limited as the total number of edges in small blossoms is bounded by $O(n^{4/3})$. Therefore, when $s$ is large, $\dy(u_{2s+1}) - \dy(u_1)$ roughly grows linearly with $s$.

\begin{proof}[Proof of Lemma~\ref{odd}]
	Consider any index $1\leq i\leq s$ and study $\dy(u_{2i+1}) - \dy(u_{2i-1})$. By the dominance condition from approximate complementary slackness we have 
	$$y(u_{2i-1}) + y(u_{2i}) + \sum_{\substack{X\in \Omega \\ (u_{2i-1}, u_{2i})\in \gamma(X)\cup I(X)}}z(X)\geq \scale(u_{2i-1}, u_{2i}) - 2$$
	
	Plugging in the definition of $\dy$, it leads to
	$$\dy(u_{2i-1}) + \dy(u_{2i}) + \sum_{\substack{X\in\Omega\setminus\Omega^\text{large} \\ (u_{2i-1}, u_{2i})\in\gamma(X)\cup I(X)}}z(X) \geq \scale(u_{2i-1}, u_{2i})-2$$
	
	By the tightness condition and $(u_{2i}, u_{2i+1})\in F$, 
	$$y(u_{2i}) + y(u_{2i+1}) + \sum_{\substack{X\in\Omega \\ (u_{2i}, u_{2i+1})\in \gamma(X)\cup I(X)}}z(X)\leq \scale(u_{2i}, u_{2i+1})$$
	
	Plugging in the definition of $\dy$ and $(u_{2i}, u_{2i+1})\in F$, we always have
	$$\dy(u_{2i}) + \dy(u_{2i+1}) = y(u_{2i}) + y(u_{2i+1}) + \sum_{\substack{X\in\Omega^\text{large} \\ (u_{2i}, u_{2i+1})\in\gamma(X)\cup I(X)}}z(X) + \frac{1}{2}\sum_{\substack{X\in\Omega^\text{large} \\ (u_{2i}, u_{2i+1}) \in\{\eta(X),\zeta(X) \}}}z(X)$$ and therefore
	$$\dy(u_{2i}) + \dy(u_{2i+1}) + \sum_{\substack{X\in\Omega\setminus\Omega^\text{large} \\ (u_{2i}, u_{2i+1})\in\gamma(X)\cup I(X)}}z(X)\leq \scale(u_{2i}, u_{2i+1}) + \frac{1}{2}\sum_{\substack{X\in\Omega^\text{large} \\ (u_{2i}, u_{2i+1}) \in\{\eta(X),\zeta(X) \}}}z(X)$$
	
	Taking a subtraction we have
	$$\begin{aligned}
	\dy(u_{2i+1}) - \dy(u_{2i-1})&\leq 2+\scale(u_{2i}, u_{2i+1}) - \scale(u_{2i-1}, u_{2i}) +\frac{1}{2}\sum_{\substack{X\in\Omega^\text{large}\\ (u_{2i}, u_{2i+1}) \in\{\eta(X),\zeta(X) \}}}z(X)\\
	&+ \sum_{\substack{X\in \Omega\setminus\Omega^\text{large}\\ (u_{2i-1}, u_{2i})\in \gamma(X)\cup I(X)}}z(X) - \sum_{\substack{X\in\Omega\setminus\Omega^\text{large} \\ (u_{2i}, u_{2i+1})\in \gamma(X)\cup I(X)}}z(X)
	\end{aligned}$$
	
	By a summation over all $1\leq i\leq s$, it follows
	$$\begin{aligned}
	\dy(u_{2s+1}) - \dy(u_1)&\leq 2s + \sum_{i = 1}^s \left(\scale(u_{2i}, u_{2i+1})-\scale(u_{2i-1}, u_{2i}) \right)+ \frac{1}{2}\sum_{i=1}^s\sum_{\substack{X\in\Omega^\text{large} \\ (u_{2i}, u_{2i+1}) \in\{\eta(X),\zeta(X) \}}}z(X)\\
	&+ \sum_{i=1}^s\left(\sum_{\substack{X\in \Omega\setminus\Omega^\text{large} \\ (u_{2i-1}, u_{2i})\in \gamma(X)\cup I(X)}}z(X) - \sum_{\substack{X\in\Omega\setminus\Omega^\text{large} \\ (u_{2i}, u_{2i+1})\in \gamma(X)\cup I(X)}}z(X) \right)\\
	&= 2s + \frac{1}{2}\sum_{i=1}^s\sum_{\substack{X\in\Omega^\text{large} \\ (u_{2i}, u_{2i+1}) \in\{\eta(X),\zeta(X) \}}}z(X)\\
	&+\sum_{i = 1}^s (\scale(u_{2i}, u_{2i+1})-\scale(u_{2i-1}, u_{2i})) - \diff(\rho, \Omega\setminus \Omega^\text{large}, z, F)
	\end{aligned}$$
	
	Next we set out to analyze the first summation $\sum_{i = 1}^s (\scale(u_{2i}, u_{2i+1})-\scale(u_{2i-1}, u_{2i}))$. As with Lemma~\ref{half}, recall $\scale_\old, y_\old, z_\old, \Omega_\old$ denote the edge weights, duals, and blossoms at the beginning of the blossom dissolution step, and let $\Omega_\old^\text{large}$ be the set of all blossoms in $\Omega_\old$ that were dissolved in the blossom dissolution step in line 7-8. Then we have the following:
	$$\begin{aligned}
	\scale(u_{2i}, u_{2i+1}) 
	%&= \scale_\old(u_{2i}, u_{2i+1}) - y_\old(u_{2i}) - y_\old(u_{2i+1}) - \sum_{\substack{X\in\Omega_\old^\text{large} \\ (u_{2i}, u_{2i+1})\in\gamma(X)\cup I(X)}}z_\old(X)\\
	%&- \frac{1}{2}\sum_{\substack{X\in\Omega_\old^\text{large}\\ (u_{2i}, u_{2i+1})\in\{\zeta(X),\eta(X)\}}}z_\old(X)\\
	&\leq yz_\old(u_{2i}, u_{2i+1}) - y_\old(u_{2i}) - y_\old(u_{2i+1}) -\sum_{\substack{X\in\Omega_\old^\text{large} \\ (u_{2i}, u_{2i+1})\in\gamma(X)\cup I(X)}}z_\old(X)\\
	&= \sum_{\substack{X\in\Omega_\old\setminus\Omega_\old^\text{large} \\ (u_{2i}, u_{2i+1})\in\gamma(X)\cup I(X)}}z_\old(X)
	\end{aligned}$$
	The first inequality is by Lemma~\ref{scale-slack}. %and ignoring the last summation term $$\frac{1}{2}\sum_{X\in\Omega_\old^\text{large}, (u_{2i}, u_{2i+1})\in\{\zeta(X),\eta(X)\}}z_\old(X)$$
	
	Now, as $(u_{2i-1}, u_{2i})\in F_0$, again by Lemma~\ref{scale-slack}, 
	$$\begin{aligned}
	\scale(u_{2i-1}, u_{2i}) &= \scale_\old(u_{2i-1}, u_{2i}) - y_\old(u_{2i-1}) - y_\old(u_{2i})\\
	&- \sum_{\substack{X\in\Omega_\old^\text{large} \\ (u_{2i-1}, u_{2i})\in\gamma(X)\cup I(X)}}z_\old(X) - \frac{1}{2}\sum_{\substack{X\in\Omega_\old^\text{large} \\ (u_{2i-1}, u_{2i})\in\{\eta(X),\zeta(X) \}}}z_\old(X)\\
	&\geq yz_\old(u_{2i-1}, u_{2i}) -6- y_\old(u_{2i-1}) - y_\old(u_{2i})\\
	&- \sum_{\substack{X\in\Omega_\old^\text{large} \\ (u_{2i-1}, u_{2i})\in\gamma(X)\cup I(X)}}z_\old(X)-\frac{1}{2}\sum_{\substack{X\in\Omega_\old^\text{large} \\ (u_{2i-1}, u_{2i})\in\{\eta(X),\zeta(X) \}}}z_\old(X)\\
	&= -6 +  \sum_{\substack{X\in\Omega_\old\setminus\Omega_\old^\text{large} \\ (u_{2i-1}, u_{2i})\in\gamma(X)\cup I(X)}}z_\old(X)-\frac{1}{2}\sum_{\substack{X\in\Omega_\old^\text{large} \\ (u_{2i-1}, u_{2i})\in\{\eta(X),\zeta(X) \} }}z_\old(X)
	\end{aligned}$$
	
	Taking a summation we have,
	$$\begin{aligned}
	&\sum_{i = 1}^s \left(\scale(u_{2i}, u_{2i+1})-\scale(u_{2i-1}, u_{2i})\right)\leq 6s + \frac{1}{2}\sum_{i=1}^s\sum_{\substack{X\in\Omega_\old^\text{large} \\ (u_{2i-1}, u_{2i}) \in\{\eta(X),\zeta(X) \}}}z_\old(X)\\
	&+\sum_{i=1}^s\left(\sum_{\substack{X\in\Omega_\old\setminus\Omega_\old^\text{large} \\ (u_{2i}, u_{2i+1})\in\gamma(X)\cup I(X)}}z_\old(X)-\sum_{\substack{X\in\Omega_\old\setminus\Omega_\old^\text{large} \\ (u_{2i-1}, u_{2i})\in\gamma(X)\cup I(X)}}z_\old(X)\right)\\
	&= 6s + \frac{1}{2}\sum_{i=1}^s\sum_{\substack{X\in\Omega_\old^\text{large} \\ (u_{2i-1}, u_{2i}) \in\{\eta(X),\zeta(X) \}}}z_\old(X) - \diff(\rho, \Omega_\old\setminus\Omega_\old^\text{large}, z_\old, F_0)
	\end{aligned}$$ which concludes the proof.	
\end{proof}

Besides the relation between $\dy(u_{2s+1})$ and $\dy(u_1)$, we also need some relations between $\dy(u_{2s})$ and $\dy(u_1)$, which is stated as two lemmas coming next.

\begin{lemma}\label{even}
	Consider any alternating walk $\rho = \langle u_1, u_2, \cdots, u_{2s}\rangle$ starting with an edge not in $F$. Then, 
	$$\begin{aligned}\dy(u_{2s})&\geq -\dy(u_1)-8s-2 -\frac{1}{2}\sum_{i=1}^s\sum_{\substack{X\in\Omega_\old^\text{large} \\ (u_{2i-1}, u_{2i})\in\{\eta(X),\zeta(X) \}}}z_\old(X) -\frac{1}{2}\sum_{i=1}^{s-1}\sum_{\substack{X\in\Omega^\text{large} \\ (u_{2i}, u_{2i+1})\in\{\eta(X),\zeta(X) \}}}z(X) \\
	&+\diff(\rho, \Omega_\old\setminus \Omega_\old^\text{large}, z_\old, F_0)
	+\diff(\rho, \Omega\setminus \Omega^\text{large}, z, F)
	\end{aligned}$$
\end{lemma}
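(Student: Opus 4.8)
The plan is to bootstrap from Lemma~\ref{odd}. Since $\rho=\langle u_1,\dots,u_{2s}\rangle$ is an alternating walk in $F_0\oplus F$ that starts with an edge not in $F$, its edges alternate so that $(u_{2i-1},u_{2i})\in F_0\setminus F$ for $1\le i\le s$ and $(u_{2i},u_{2i+1})\in F\setminus F_0$ for $1\le i\le s-1$; in particular the last edge $(u_{2s-1},u_{2s})$ lies in $F_0$. I would apply Lemma~\ref{odd} to the prefix $\rho'=\langle u_1,\dots,u_{2s-1}\rangle$, which is an odd-length alternating walk on $2s-2$ edges starting with an edge not in $F$, to obtain (with $s$ replaced by $s-1$) an \emph{upper} bound on $\dy(u_{2s-1})-\dy(u_1)$, and then handle the single remaining edge $(u_{2s-1},u_{2s})$ on its own.

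For that edge, the dominance condition of approximate complementary slackness gives $y(u_{2s-1})+y(u_{2s})+\sum_{X\in\Omega,\,(u_{2s-1},u_{2s})\in\gamma(X)\cup I(X)}z(X)\ge\scale(u_{2s-1},u_{2s})-2$, and feeding in the definition of $\dy$ exactly as in the proof of Lemma~\ref{odd} turns this into
$$\dy(u_{2s-1})+\dy(u_{2s})\ \ge\ \scale(u_{2s-1},u_{2s})-2-\!\!\!\sum_{\substack{X\in\Omega\setminus\Omega^\text{large}\\ (u_{2s-1},u_{2s})\in\gamma(X)\cup I(X)}}\!\!\!z(X).$$
Adding this to the bound from Lemma~\ref{odd} already yields a lower bound on $\dy(u_{2s})$ of almost the claimed form, with only three cosmetic discrepancies: its $\diff$ terms run over $\rho'$ instead of $\rho$; it carries an extra summand $\scale(u_{2s-1},u_{2s})$; and its $z_\old$ block is $\sum_{i=1}^{s-1}$ with constant $8(s-1)$ rather than $\sum_{i=1}^{s}$ with constant $8s$.

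I would reconcile these one at a time. Appending the edge $(u_{2s-1},u_{2s})\notin F$ to $\rho'$ decreases $\diff(\rho',\Omega\setminus\Omega^\text{large},z,F)$ by exactly $\sum_{X\in\Omega\setminus\Omega^\text{large},\,(u_{2s-1},u_{2s})\in\gamma(X)\cup I(X)}z(X)$, which cancels the stray sum produced by the dominance step and upgrades that term to $\diff(\rho,\Omega\setminus\Omega^\text{large},z,F)$; and since $(u_{2s-1},u_{2s})\in F_0$, appending it \emph{increases} $\diff(\rho',\Omega_\old\setminus\Omega_\old^\text{large},z_\old,F_0)$ by $\sum_{X\in\Omega_\old\setminus\Omega_\old^\text{large},\,(u_{2s-1},u_{2s})\in\gamma(X)\cup I(X)}z_\old(X)$. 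Next, because $(u_{2s-1},u_{2s})\in F_0$, the very estimate used for $F_0$-edges in the proof of Lemma~\ref{odd} --- Lemma~\ref{scale-slack} applied to $\scale_\old$ together with the reweighting effect recorded in Lemma~\ref{dissloveedge} --- gives
$$\scale(u_{2s-1},u_{2s})\ \ge\ -6+\!\!\!\sum_{\substack{X\in\Omega_\old\setminus\Omega_\old^\text{large}\\ (u_{2s-1},u_{2s})\in\gamma(X)\cup I(X)}}\!\!\!z_\old(X)\ -\ \frac{1}{2}\!\!\!\sum_{\substack{X\in\Omega_\old^\text{large}\\ (u_{2s-1},u_{2s})\in\{\eta(X),\zeta(X)\}}}\!\!\!z_\old(X).$$
Substituting this, the $\Omega_\old\setminus\Omega_\old^\text{large}$ sums cancel against the $\diff$ increase just described, the $-\frac{1}{2}\sum_{X\in\Omega_\old^\text{large}}$ piece is precisely the missing $i=s$ summand of the $z_\old$ block, and the residual constant $-6-2-8(s-1)=-8s$ is at least the claimed $-8s-2$, with two units to spare.

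The step I expect to be most error-prone is this final-edge bookkeeping: tracking how appending one edge shifts each of the two $\diff$ quantities --- with opposite signs according to whether it is the non-$F$ edge or the $F_0$ edge --- and confirming that the lower bound on its own $\scale$ value, which is nothing but the $F_0$-edge estimate already proved for Lemma~\ref{odd}, slots the $i=s$ term of the $z_\old$-sum into place without breaking the constants. Conceptually nothing new is needed: the even-length statement is the odd-length statement plus one more application of dominance.
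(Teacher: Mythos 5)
Your proposal matches the paper's proof almost line for line: both apply Lemma~\ref{odd} to the odd-length prefix $\langle u_1,\dots,u_{2s-1}\rangle$ (with $s$ replaced by $s-1$), handle the final $F_0$-edge $(u_{2s-1},u_{2s})$ separately via the dominance inequality together with the $F_0$-edge lower bound on $\scale$ from Lemma~\ref{scale-slack}, and then merge the single-edge $\diff$ contributions with the prefix $\diff$ terms to upgrade them to $\diff(\rho,\cdot,\cdot,\cdot)$. Your observation that the argument actually gives $-8s$ rather than the stated $-8s-2$ is also correct; the paper is simply a bit loose in the constant.
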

\begin{proof}
	Similar to the derivation in the previous lemma, we have:
	$$\dy(u_{2s-1}) + \dy(u_{2s}) + \sum_{\substack{X\in\Omega\setminus\Omega^\text{large} \\ (u_{2s-1}, u_{2s})\in\gamma(X)\cup I(X)}}z(X) \geq \scale(u_{2s-1}, u_{2s})-2$$

	$$\scale(u_{2s-1}, u_{2s})\geq -6 +  \sum_{\substack{X\in\Omega_\old\setminus\Omega_\old^\text{large} \\ (u_{2s-1}, u_{2s})\in\gamma(X)\cup I(X)}}z_\old(X)-\frac{1}{2}\sum_{\substack{X\in\Omega_\old^\text{large} \\ (u_{2s-1}, u_{2s})\in\{\eta(X),\zeta(X) \}}}z_\old(X)$$
	
	Combining the two inequalities and by Definition~\ref{def-diff}, 
	$$\begin{aligned}
	\dy(u_{2s})&\geq -\dy(u_{2s-1})-8 -\frac{1}{2}\sum_{\substack{X\in\Omega_\old^\text{large} \\ (u_{2s-1}, u_{2s})\in\{\eta(X),\zeta(X) \}}}z_\old(X)\\
	&+\diff((u_{2s-1}, u_{2s}), \Omega_\old\setminus \Omega_\old^\text{large}, z_\old, F_0)
	+\diff((u_{2s-1}, u_{2s}), \Omega\setminus\Omega^\text{large}, z, F)
	\end{aligned}$$
	
	Plugging Lemma~\ref{odd}, we have
	$$\begin{aligned}\dy(u_{2s})&\geq -\dy(u_1)-8s-2 -\frac{1}{2}\sum_{i=1}^s\sum_{\substack{X\in\Omega_\old^\text{large} \\ (u_{2i-1}, u_{2i})\in\{\eta(X),\zeta(X) \}}}z_\old(X) -\frac{1}{2}\sum_{i=1}^{s-1}\sum_{\substack{X\in\Omega^\text{large} \\ (u_{2i}, u_{2i+1})\in\{\eta(X),\zeta(X) \}}}z(X) \\
	&+\diff(\rho, \Omega_\old\setminus \Omega_\old^\text{large}, z_\old, F_0)
	+\diff(\rho, \Omega\setminus \Omega^\text{large}, z, F)
	\end{aligned}$$
\end{proof}

\begin{lemma}\label{even1}
	Consider any alternating walk $\rho = \langle u_1, u_2, \cdots, u_{2s}\rangle$ starting with an edge in $F$. Then, 
	$$\begin{aligned}\dy(u_{2s})&\leq -\dy(u_1)+8s +\frac{1}{2}\sum_{i=1}^{s-1}\sum_{\substack{X\in\Omega_\old^\text{large} \\ (u_{2i}, u_{2i+1}) \in\{ \eta(X), \zeta(X)\} }}z_\old(X) + \frac{1}{2}\sum_{i=1}^s\sum_{\substack{X\in\Omega^\text{large}\\ (u_{2i-1}, u_{2i}) \in\{ \eta(X), \zeta(X)\}}}z(X) \\
	&-\diff(\rho, \Omega_\old\setminus \Omega_\old^\text{large}, z_\old, F_0)
	-\diff(\rho, \Omega\setminus \Omega^\text{large}, z, F)
	\end{aligned}$$
\end{lemma}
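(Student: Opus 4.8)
The plan is to mirror the proof of Lemma~\ref{even}, but anchoring at the \emph{first} edge of $\rho$ instead of the last. Write $\rho = \langle u_1, u_2, \ldots, u_{2s}\rangle$; since $\rho$ starts with an edge in $F$ and alternates, $(u_{2i-1},u_{2i})\in F$ and $(u_{2i},u_{2i+1})\notin F$ (hence $\in F_0$) for every valid $i$. First I would peel off the matching edge $(u_1,u_2)$. By the tightness part of approximate complementary slackness, $y(u_1)+y(u_2)+\sum_{X\in\Omega,\,(u_1,u_2)\in\gamma(X)\cup I(X)}z(X)\le\scale(u_1,u_2)$; converting $y$ to $\dy$ via the matching-edge identity already used in the proof of Lemma~\ref{odd}, and bounding $\scale(u_1,u_2)\le\sum_{X\in\Omega_\old\setminus\Omega_\old^\text{large},\,(u_1,u_2)\in\gamma(X)\cup I(X)}z_\old(X)$ (this follows from Lemma~\ref{scale-slack} together with Lemma~\ref{dissloveedge}, exactly as in the proof of Lemma~\ref{odd}, using $(u_1,u_2)\notin F_0$), I obtain
$$\dy(u_1)+\dy(u_2)\le \frac{1}{2}\sum_{\substack{X\in\Omega^\text{large}\\ (u_1,u_2)\in\{\eta(X),\zeta(X)\}}}z(X)+\sum_{\substack{X\in\Omega_\old\setminus\Omega_\old^\text{large}\\ (u_1,u_2)\in\gamma(X)\cup I(X)}}z_\old(X)-\sum_{\substack{X\in\Omega\setminus\Omega^\text{large}\\ (u_1,u_2)\in\gamma(X)\cup I(X)}}z(X).$$

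Then I would apply Lemma~\ref{odd} to the suffix $\sigma=\langle u_2,u_3,\ldots,u_{2s}\rangle$, which is an alternating walk on $2s-1=2(s-1)+1$ vertices whose initial edge $(u_2,u_3)$ is not in $F$; with $s$ replaced by $s-1$ it bounds $\dy(u_{2s})-\dy(u_2)$ by $8(s-1)$, plus $\tfrac{1}{2}\sum_{i=1}^{s-1}\sum_{X\in\Omega_\old^\text{large},\,(u_{2i},u_{2i+1})\in\{\eta(X),\zeta(X)\}}z_\old(X)$, plus $\tfrac{1}{2}\sum_{i=2}^{s}\sum_{X\in\Omega^\text{large},\,(u_{2i-1},u_{2i})\in\{\eta(X),\zeta(X)\}}z(X)$, minus $\diff(\sigma,\Omega\setminus\Omega^\text{large},z,F)+\diff(\sigma,\Omega_\old\setminus\Omega_\old^\text{large},z_\old,F_0)$. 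Adding this to the displayed inequality and solving for $\dy(u_{2s})$: the stray $\tfrac{1}{2}\sum z(X)$ over $(u_1,u_2)$ supplies the missing $i=1$ term so that the $\Omega^\text{large}$ sum runs over $i=1,\ldots,s$ as in the statement; and the two remaining single-edge sums cancel against the discrepancy $\diff(\rho,\cdot)-\diff(\sigma,\cdot)$, because appending the edge $(u_1,u_2)\in F\setminus F_0$ to $\sigma$ raises $\diff(\cdot,X,F)$ and lowers $\diff(\cdot,X,F_0)$ each by the indicator $[(u_1,u_2)\in\gamma(X)\cup I(X)]$, directly from Definition~\ref{def-diff}. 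Finally $8(s-1)\le 8s$ gives the claimed bound.

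The bookkeeping of the $\eta(X)/\zeta(X)$ correction terms and the $\diff$ re-aggregation is routine; the one step deserving care is the $\dy$--$y$ identity on the matching edge $(u_1,u_2)$ — that a large blossom $X$ with $(u_1,u_2)\notin\gamma(X)\cup I(X)$ contributes a half-term to $\dy(u_1)+\dy(u_2)$ precisely when $(u_1,u_2)\in\{\eta(X),\zeta(X)\}$ — but this is the same identity already established in the proofs of Lemma~\ref{odd} and Lemma~\ref{dissloveedge} and may be quoted verbatim. The degenerate case $s=1$, where $\sigma$ is a single vertex and Lemma~\ref{odd} is vacuous, is covered directly by the first step.
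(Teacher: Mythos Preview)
Your proof is correct and follows essentially the same approach as the paper's: strip off one terminal matching edge of $\rho$ using tightness and the $\scale$ bound, apply Lemma~\ref{odd} to the remaining odd-length alternating walk, and reassemble the $\diff$ terms. The only difference is cosmetic: the paper peels off the \emph{last} edge $(u_{2s-1},u_{2s})$ and then invokes Lemma~\ref{odd} on the \emph{reversed} walk $\langle u_{2s-1},u_{2s-2},\ldots,u_1\rangle$, whereas you peel off the \emph{first} edge $(u_1,u_2)$ and invoke Lemma~\ref{odd} on the suffix $\langle u_2,\ldots,u_{2s}\rangle$ without reversal. Your version avoids the index reversal, which makes the bookkeeping marginally cleaner, but the two arguments are mirror images of one another.
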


To avoid possible confusions, we emphasize the different between this lemma and Lemma~\ref{even}: here the alternating walk $\rho$ starts with a matching edge from $F$ while in the previous lemma the alternating walk $\rho$ starts with a non-matching edge from $\blowedge\setminus F$.

\begin{proof}[Proof of Lemma~\ref{even1}]
	Since $(u_{2s-1}, u_{2s})\in F$, borrowing the derivation from the Lemma~\ref{odd}, 
	$$\dy(u_{2s-1}) + \dy(u_{2s}) + \sum_{\substack{X\in\Omega\setminus\Omega^\text{large} \\ (u_{2s-1}, u_{2s})\in\gamma(X)\cup I(X)}}z(X) \leq \scale(u_{2s-1}, u_{2s}) + \frac{1}{2}\sum_{\substack{X\in\Omega^\text{large} \\ (u_{2s-1}, u_{2s}) \in\{\eta(X),\zeta(X) \}}}z(X)$$
	
	$$\scale(u_{2s-1}, u_{2s})\leq  \sum_{\substack{X\in\Omega_\old\setminus\Omega_\old^\text{large} \\ (u_{2s-1}, u_{2s})\in\gamma(X)\cup I(X)}}z_\old(X)$$
	
	Combining the two inequalities and by Definition~\ref{def-diff}, 
	$$\begin{aligned}
	\dy(u_{2s})&\leq -\dy(u_{2s-1}) +\frac{1}{2}\sum_{\substack{X\in\Omega^\text{large} \\ (u_{2s-1}, u_{2s})\in\{\eta(X),\zeta(X) \}}}z(X)\\
	&-\diff((u_{2s-1}, u_{2s}), \Omega_\old\setminus \Omega_\old^\text{large}, z_\old, F_0)
	-\diff((u_{2s-1}, u_{2s}), \Omega\setminus\Omega^\text{large}, z, F)
	\end{aligned}$$
	
	Plugging Lemma~\ref{odd} on the alternating walk $\langle u_{2s-1}, u_{2s-2}, \cdots, u_1\rangle$ which starts with non-matching edge $(u_{2s-1}, u_{2s-2})\notin F$, we have
	$$\begin{aligned}\dy(u_{2s})&\leq -\dy(u_1)+8s +\frac{1}{2}\sum_{i=1}^{s-1}\sum_{\substack{X\in\Omega_\old^\text{large} \\ (u_{2i}, u_{2i+1}) \in\{\eta(X),\zeta(X) \}}}z_\old(X) + \frac{1}{2}\sum_{i=1}^s\sum_{\substack{X\in\Omega^\text{large} \\ (u_{2i-1}, u_{2i})\in\{\eta(X),\zeta(X) \}}}z(X) \\
	&-\diff(\rho, \Omega_\old\setminus \Omega_\old^\text{large}, z_\old, F_0)
	-\diff(\rho, \Omega\setminus \Omega^\text{large}, z, F)
	\end{aligned}$$
\end{proof}

What we do next is to argue that most of the augmenting walk in $F\oplus F_0$ has (unweighted) length $\Omega(Cn^{2/3})$.

\begin{lemma}\label{length}
	Consider any augmenting walk $\rho = \langle u_1, u_2, \cdots, u_{2s}\rangle$ such that $u_1, u_{2s}$ are both unaffected. Then, 
	$$s \geq \frac{1}{4}Cn^{2/3} - \frac{1}{4} - \frac{1}{16}\sum_{i=1}^s\sum_{\substack{X\in\Omega_\old^\text{large} \\ (u_{2i-1}, u_{2i})\in\{\eta(X),\zeta(X) \}}}z_\old(X) - \frac{1}{16}\sum_{i=1}^{s-1}\sum_{\substack{X\in\Omega^\text{large} \\ (u_{2i}, u_{2i+1})\in\{\eta(X),\zeta(X) \}}}z(X)$$
\end{lemma}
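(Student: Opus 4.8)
The plan is to track the dual value $\dy$ along the augmenting walk $\rho$ from one unsaturated endpoint to the other, and to exploit the fact that both endpoints are unaffected, hence have $\dy$-value equal to their $y$-value, which after the deficiency reduction phase must be $-\lceil Cn^{2/3}\rceil$ (this is the ``target'' dual the EdmondsSearch loop drives every unsaturated vertex to). So $\dy(u_1) = \dy(u_{2s}) = -\lceil Cn^{2/3}\rceil$, and in particular $\dy(u_{2s}) + \dy(u_1) = -2\lceil Cn^{2/3}\rceil$.

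First I would apply Lemma~\ref{even} to the alternating walk $\rho = \langle u_1,\dots,u_{2s}\rangle$, which starts with a non-matching edge since $\rho$ is an augmenting walk. This gives a lower bound on $\dy(u_{2s})$ in terms of $-\dy(u_1)$, $8s+2$, the two large-blossom $\eta/\zeta$ sums, and the two $\diff$ terms $\diff(\rho,\Omega\setminus\Omega^\text{large},z,F)$ and $\diff(\rho,\Omega_\old\setminus\Omega_\old^\text{large},z_\old,F_0)$. Rearranging, and substituting $\dy(u_1) = \dy(u_{2s}) = -\lceil Cn^{2/3}\rceil$:
\begin{align*}
-\lceil Cn^{2/3}\rceil &\geq \lceil Cn^{2/3}\rceil - 8s - 2 - \frac{1}{2}\sum_{i=1}^s\sum_{\substack{X\in\Omega_\old^\text{large} \\ (u_{2i-1}, u_{2i})\in\{\eta(X),\zeta(X)\}}}z_\old(X) - \frac{1}{2}\sum_{i=1}^{s-1}\sum_{\substack{X\in\Omega^\text{large} \\ (u_{2i}, u_{2i+1})\in\{\eta(X),\zeta(X)\}}}z(X) \\
&\quad + \diff(\rho, \Omega_\old\setminus\Omega_\old^\text{large}, z_\old, F_0) + \diff(\rho, \Omega\setminus\Omega^\text{large}, z, F).
\end{align*}
Hence $8s + 2 \geq 2\lceil Cn^{2/3}\rceil - \frac{1}{2}(\text{blossom sums}) + (\text{diff terms})$, and dividing by $8$ yields the claimed bound provided the two $\diff$ terms are non-negative: then $s \geq \frac{1}{4}Cn^{2/3} - \frac{1}{4} - \frac{1}{16}(\text{blossom sums})$, which is exactly the statement (using $\lceil Cn^{2/3}\rceil \geq Cn^{2/3}$).

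The main obstacle — and the step requiring the most care — is verifying that $\diff(\rho, \Omega\setminus\Omega^\text{large}, z, F) \geq 0$ and $\diff(\rho, \Omega_\old\setminus\Omega_\old^\text{large}, z_\old, F_0) \geq 0$. For this I would invoke Lemma~\ref{diff}, which says $\diff(\rho,X,\hhat F)\geq 0$ whenever neither of $\rho$'s two ending edges lies in $(\gamma(X)\cup I(X))\setminus\hhat F$. The two ending edges of the augmenting walk are $(u_1,u_2)$ and $(u_{2s-1},u_{2s})$, both non-matching (not in $F$), incident to unsaturated vertices $u_1$ and $u_{2s}$. I need to argue that for every small root blossom $X\in\Omega\setminus\Omega^\text{large}$, neither endpoint $u_1$ nor $u_{2s}$ can be the ``wrong'' endpoint of an edge in $(\gamma(X)\cup I(X))\setminus F$ — the key point being that a blossom's vertices other than its base are saturated, so an unsaturated endpoint of such an edge would have to be the base $\beta(X)$, and the deficiency/maturity structure of blossoms together with the fact that $u_1,u_{2s}$ are unaffected (i.e.\ not influenced by any large blossom, and here we also need that they are not covered by the relevant small blossom in a problematic way) rules this out; alternatively one shows the ending edge, being the terminal edge of an augmenting path through a light blossom base, has the type forced by Lemma~2 / the augmenting-path definition. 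The same reasoning applied to $\Omega_\old\setminus\Omega_\old^\text{large}$ with $F_0$ — noting $u_1, u_{2s}$ are unsaturated in $F_0$ as well since an unsaturated endpoint of an augmenting walk in $F_0\oplus F$ is deficient in both factors — handles the second $\diff$ term. Once these non-negativity facts are in hand, the rest is the one-line rearrangement above; I would present the non-negativity argument first as a short claim, then do the arithmetic.
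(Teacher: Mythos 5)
Your overall strategy is the same as the paper's: apply Lemma~\ref{even} to $\rho$ with $\dy(u_1)=\dy(u_{2s})=-\lceil Cn^{2/3}\rceil$, rearrange, and reduce the claim to non-negativity of the two $\diff$ terms. The arithmetic is fine. But the non-negativity argument — which you correctly flag as the heart of the matter — has a real gap.

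For $\diff(\rho,\Omega\setminus\Omega^\text{large},z,F)$: you assert that neither ending edge of $\rho$ can lie in $\gamma(X)\cup I(X)\setminus F$ for a small blossom $X$, and try to push the whole thing through Lemma~\ref{diff}. That is false. A mature blossom $X$ may have unsaturated base $\beta(X)$ with $\eta(X)=\NULL$; if $u_1=\beta(X)$ and $u_2\in X$, then the ending edge $(u_1,u_2)\in\gamma(X)\setminus F$, so Lemma~\ref{diff} does not apply. (Being ``unaffected'' does not help here: that condition only says $u_1$ is not influenced by any \emph{large} blossom, and $X$ is a small one.) The paper handles this by falling back to Lemma~\ref{diff0}: it shows $\diff(\rho,X,F)\geq -1$ and then rules out $\diff=-1$ in each subcase — if $u_1\in X$ then $\eta(X)=\NULL$, so by Lemma~\ref{diff0} equality forces $\rho\subseteq\gamma(X)$, impossible since a blossom cannot contain both unsaturated endpoints $u_1\neq u_{2s}$; if $u_1\notin X$ and $(u_1,u_2)=\eta(X)\notin F$, equality again forces $\rho\subseteq\gamma(X)\cup\{\eta(X)\}$ and hence $u_{2s}\in X$, contradicting $\eta(X)\neq\NULL$. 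You need this case analysis; the ``never the wrong endpoint'' claim you lean on does not hold.

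For $\diff(\rho,\Omega_\old\setminus\Omega_\old^\text{large},z_\old,F_0)$: your claim that $u_1,u_{2s}$ are deficient in $F_0$ as well is also not true — an endpoint of a maximal augmenting walk in $F_0\oplus F$ is unsaturated in $F$ (the ending edge is an $F_0$-edge not in $F$), but it may well be saturated in $F_0$. Fortunately, the correct argument here is much simpler than what you attempt: since $\rho\subseteq F_0\oplus F$ and both ending edges of an augmenting walk lie outside $F$, those ending edges are in $F_0$, hence trivially not in $\gamma(X)\cup I(X)\setminus F_0$, so Lemma~\ref{diff} applies directly. No statement about deficiency in $F_0$ is needed or available.
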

\begin{proof}
	Plugging in Lemma~\ref{even} and by $\dy(u_{2s}) = \dy(u_1) = y(u_1) \leq -Cn^{2/3}$, we have
	$$\begin{aligned}s&\geq \frac{1}{4}Cn^{2/3} - \frac{1}{4} - \frac{1}{16}\sum_{i=1}^s\sum_{\substack{X\in\Omega_\old^\text{large}\\ (u_{2i-1}, u_{2i})\in\{\eta(X),\zeta(X) \}}}z_\old(X) - \frac{1}{16}\sum_{i=1}^{s-1}\sum_{\substack{X\in\Omega^\text{large} \\ (u_{2i}, u_{2i+1})\in\{\eta(X),\zeta(X) \}}}z(X)\\
	&+ \frac{1}{8}\diff(\rho, \Omega_\old\setminus \Omega_\old^\text{large}, z_\old, F_0)
	+\frac{1}{8}\diff(\rho, \Omega\setminus\Omega^\text{large}, z, F)\end{aligned}$$ Then it suffices to prove both $\diff(\rho, \Omega_\old\setminus \Omega_\old^\text{large}, z_\old, F_0)$ and $\diff(\rho, \Omega, z, F)$ are non-negative. For the first term, for any $X\in\Omega_\old\setminus\Omega_\old^{\text{large}}$, since both ending edges of $\rho$ belong to $F_0$ thus not contained in $\gamma(X)\cup I(X)\setminus F_0$, by Lemma~\ref{diff}, $\diff(\rho, X, F_0)\geq 0$. 
	
	As for the second term, consider any blossom $X\in\Omega\setminus \Omega^{\text{large}}$. If $\gamma(X)\cup I(X)$ does not contain any of the two ending edges $(u_1, u_2), (u_{2s-1}, u_{2s})$, then using Lemma~\ref{diff} we know $\diff(\rho, X, F)\geq 0$. Otherwise, assume it contains $(u_1, u_2)$. If $u_1\in X$, then because deficiency of $u_1$ is $1$, we know $\eta(X) = \NULL$, and thus by Lemma~\ref{diff0}, the only possibility for $\diff(\rho, X, F)=-1$ is $\gamma(X)\cup \{\eta(X) \} = \gamma(X)$ contains $\rho$ entirely; this is impossible since any blossom $X$ cannot contain two different unsaturated vertices $u_1, u_{2s}$.
	
	Now consider the case where $u_1\notin X$ but $(u_1, u_2) \in I(X)$. In this case, it must be $(u_1, u_2) = \eta(X)$ as $(u_1, u_2)$ is not matched by $F$, and hence $\eta(X)\notin F$. By Lemma~\ref{diff0}, the only possibility left for $\diff(\rho, X, F) = -1$ is $\gamma(X)\cup \{\eta(X) \}$ contains the entire augmenting walk $\rho$. So in particular, $u_{2s}$ belongs to $X\cup \{u_1 \}$. As $u_{2s}\neq u_1$, it must be $u_{2s}\in X$. However, this is again not possible, since any $X$ containing an unsaturated vertex must satisfy $\eta(X) = \NULL$. To conclude, we can still claim $\diff(\rho, X, F)\geq 0$.
\end{proof}

\begin{corollary}\label{long}
	There are at least $(\frac{1}{4}C-1)n^{2/3}$ augmenting walks in $\walks$ whose length is at least $\frac{1}{3}Cn^{2/3}$.
\end{corollary}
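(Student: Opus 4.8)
The plan is a charging argument. Call a walk $\rho=\langle u_1,\dots,u_{2s}\rangle$ in $\walks$ \emph{short} if its length $2s-1$ is less than $\tfrac13 Cn^{2/3}$ (so $s$ is, up to lower order terms, at most $\tfrac16 Cn^{2/3}$) and \emph{long} otherwise. I will show that the number of short walks is only $O(n^{2/3})$; discarding them from $\walks$, which by the corollary ending Phase~1 still has at least $(\tfrac12 C-1)n^{2/3}$ members, then leaves at least $(\tfrac14 C-1)n^{2/3}$ long walks once $C$ is a large enough constant.

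First I would feed the length estimate of Lemma~\ref{length} into this dichotomy. Abbreviate by $B(\rho)$ the two ``large-blossom'' sums appearing on the right-hand side of Lemma~\ref{length},
$$B(\rho)=\sum_{i=1}^{s}\ \sum_{\substack{X\in\Omega_\old^{\text{large}}\\ (u_{2i-1},u_{2i})\in\{\eta(X),\zeta(X)\}}}z_\old(X)\ +\ \sum_{i=1}^{s-1}\ \sum_{\substack{X\in\Omega^{\text{large}}\\ (u_{2i},u_{2i+1})\in\{\eta(X),\zeta(X)\}}}z(X),$$
so that Lemma~\ref{length} reads $s\ge \tfrac14 Cn^{2/3}-\tfrac14-\tfrac1{16}B(\rho)$. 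If $\rho$ is short then $\tfrac16 Cn^{2/3}\gtrsim s\ge \tfrac14 Cn^{2/3}-\tfrac14-\tfrac1{16}B(\rho)$, which rearranges to $B(\rho)\ge \tfrac43 Cn^{2/3}-O(1)\ge Cn^{2/3}$ for $C$ large enough. Thus every short walk ``consumes'' at least $Cn^{2/3}$ units of large-blossom dual weight.

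Next I would bound $\sum_{\rho\in\walks}B(\rho)$ from above, and here is the crux: no large blossom's dual value gets charged more than once. Fix a blossom $X$. In $B(\rho)$ the summand $z_\old(X)$ (resp.\ $z(X)$) is added only when some edge of $\rho$ lying in $F_0\setminus F$ (resp.\ $F\setminus F_0$) equals $\eta(X)$ or $\zeta(X)$. Since $\eta(X)=(u,e_u)$ and $\zeta(X)=(e_u,e_v)$ share the auxiliary vertex $e_u$, and $f(e_u)=1$, at most one of $\eta(X),\zeta(X)$ lies in $F_0$ and at most one lies in $F$; hence inside a single walk $X$ contributes at most once to the first sum and at most once to the second, and by Lemma~\ref{unique} at most one augmenting walk of $F_0\oplus F$ meets $\{\eta(X),\zeta(X)\}$ at all. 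Consequently each of $z_\old(X)$ and $z(X)$ is counted at most once over all walks in $\walks$, so by Lemmas~\ref{z-bound} and~\ref{z1-bound}
$$\sum_{\rho\in\walks}B(\rho)\ \le\ \sum_{X\in\Omega_\old^{\text{large}}}z_\old(X)+\sum_{X\in\Omega^{\text{large}}}z(X)\ \le\ 5Cn^{4/3}+2Cn^{4/3}+14n^{2/3}\ \le\ 8Cn^{4/3}.$$
Combining the last two displays, the number of short walks is at most $8Cn^{4/3}/(Cn^{2/3})=8n^{2/3}$, hence $\walks$ retains at least $(\tfrac12 C-1)n^{2/3}-8n^{2/3}=(\tfrac12 C-9)n^{2/3}\ge(\tfrac14 C-1)n^{2/3}$ long walks provided $C\ge 32$.

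The step I expect to be the main obstacle is the upper bound $\sum_{\rho\in\walks}B(\rho)\le 8Cn^{4/3}$: one must carefully justify that each large blossom's $z$- or $z_\old$-weight is charged to at most one walk, which is precisely where Lemma~\ref{unique} has to be combined with the $f(e_u)=1$ structure of the blowup graph (in particular that $\eta(X)$ and $\zeta(X)$ cannot both be matched in the same $f$-factor). Everything else is routine arithmetic in the constant $C$, up to the off-by-one bookkeeping relating $s$ to the walk length $2s-1$.
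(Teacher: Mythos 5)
Your proof is correct and takes essentially the same approach as the paper: bound $\sum_{\rho\in\walks}B(\rho)$ by $O(Cn^{4/3})$ using Lemma~\ref{unique} together with Lemmas~\ref{z-bound} and~\ref{z1-bound}, then combine a Markov-type pruning with Lemma~\ref{length}. You merely run the Markov step in the contrapositive direction (first show a short walk forces $B(\rho)\gtrsim Cn^{2/3}$, then count), whereas the paper first thresholds $B(\rho)$ at roughly $28n^{2/3}$ and then deduces the length bound; both yield the claimed $(\tfrac14 C-1)n^{2/3}$ long walks.
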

\begin{proof}
	By Lemma~\ref{unique}, Lemma~\ref{z-bound} and Lemma~\ref{z1-bound}
	$$\begin{aligned}
	&\sum_{\rho\in\walks}\sum_{i = 1}^{s-1}\sum_{\substack{X\in\Omega^\text{large} \\ (u_{2i}, u_{2i+1})\in\{\eta(X),\zeta(X) \}}}z(X)\leq 2Cn^{4/3} + 14n^{2/3}\\
	&\sum_{\rho\in\walks}\sum_{i = 1}^s\sum_{\substack{X\in\Omega_\old^\text{large} \\ (u_{2i-1}, u_{2i})\in\{\eta(X),\zeta(X) \}}}z_\old(X)\leq 5Cn^{4/3}\end{aligned}$$
	
	Therefore, for at most $\frac{1}{4}Cn^{2/3}$ unaffected augmenting walks $\rho = \langle u_1, u_2, \cdots, u_{2s}\rangle$, the following summation could be larger:
	
	$$\begin{aligned}
	&\sum_{i=1}^s\sum_{\substack{X\in\Omega_\old^\text{large} \\ (u_{2i-1}, u_{2i})\in\{\eta(X),\zeta(X) \}}}z_\old(X) + \sum_{i=1}^{s-1}\sum_{\substack{X\in\Omega^\text{large} \\ (u_{2i}, u_{2i+1})\in\{\eta(X),\zeta(X) \}}}z(X)\\
	&\geq \frac{2Cn^{4/3} + 14n^{2/3} + 5Cn^{4/3}}{\frac{1}{4}Cn^{2/3}} = 28n^{2/3} + 20\end{aligned}$$
	
	For the rest of $(\frac{1}{4}C-1)n^{2/3}$ unaffected augmenting walks, by Lemma~\ref{length}, the length of this augmenting walk $2s-1\geq \frac{1}{2}Cn^{2/3} - \frac{3}{2} - \frac{1}{16}(28n^{2/3} + 20) = (\frac{1}{2}C - \frac{7}{4})n^{2/3} - \frac{11}{4}\geq \frac{1}{3}Cn^{2/3}$ for a large enough constant $C$.
\end{proof}

Here we do a second round of pruning by this lemma; i.e., remove from $\walks$ all augmenting walks whose length is less than $\frac{1}{3}Cn^{2/3}$. So $|\walks|\geq (\frac{1}{4}C-1)n^{2/3}$.

\subsection*{Phase 3}

Next we try to upper bound the influence of edges that are from small blossoms. 
\begin{defn}
	For an augmenting walk $\rho = \langle u_1, u_2, \cdots, u_{2s}\rangle$, an edge $(u_i, u_{i+1}) \in \rho$ is called \textbf{bad}, if it belongs to some $\gamma(X)\cup I(X)\setminus F, X\in\Omega\setminus\Omega^\text{large}$, or it belongs to some $\gamma(X)\cup I(X)\setminus F_0, X\in\Omega_\old\setminus\Omega_\old^\text{large}$. Call a vertex $u_i$ \textbf{good} if it is not incident on any bad edges on this walk.
\end{defn}

\begin{lemma}
	In the blowup graph $\blowgraph$, the total number of bad edges is at most $3n^{4/3}$.
\end{lemma}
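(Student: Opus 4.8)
The plan is to forget about the augmenting walks almost entirely and instead bound the number of edges of $\blowgraph$ lying in $\gamma(X)\cup I(X)\setminus F$ for some small blossom $X\in\Omega\setminus\Omega^\text{large}$, together with those in $\gamma(X)\cup I(X)\setminus F_0$ for some small $X\in\Omega_\old\setminus\Omega_\old^\text{large}$; since augmenting walks are edge-disjoint, this is an upper bound on the number of bad edges. I would show that each of these two laminar families accounts for at most $\tfrac32 n^{4/3}$ such edges, so that the total is at most $3n^{4/3}$.

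First I would dispose of the $I(X)$ terms. Since $I(X)=\delta_F(X)\oplus\{\eta(X)\}$ and $\delta_F(X)\subseteq F$, a short case check on whether $\eta(X)$ is $\NULL$, matched, or unmatched gives $I(X)\setminus F\subseteq\{\eta(X)\}$, so each small blossom contributes at most one bad edge of this type. Moreover, because an auxiliary vertex has degree $2$, any blossom containing such a vertex contains both of its neighbours, so $\delta(e_u)\cap\delta(X)=\emptyset$ whenever $e_u\in X$ (cf.\ Lemma~\ref{blowedge}); hence $\eta(X)\neq\NULL$ forces the base $\beta(X)$ to be an original vertex and $\eta(X)\in\delta(\beta(X))$. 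Finally, the inductive definition of blossoms forces $\eta$ to be constant along any chain of nested blossoms with a common base (it can change only at the innermost member of the chain, whose base sub-blossom is a singleton), so all blossoms based at a fixed vertex share the same $\eta$-edge. Thus each family contributes at most $|V|=n$ bad edges through its $I(X)$'s, which will turn out to be a lower-order term.

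The main estimate concerns $\bigcup_X \gamma(X)$ over the small blossoms $X$ of one family. Here I would use laminarity: a blossom that is maximal among the small blossoms of the family cannot be properly contained in another such blossom, so any two of these maximal blossoms are disjoint as vertex sets, and since $\gamma(\cdot)$ is monotone under inclusion, $\bigcup_{\text{small }X}\gamma(X)=\bigcup_{\text{maximal small }X}\gamma(X)$. For a single small blossom $X$, every edge of $\gamma(X)$ is one of the three blowup edges of an original edge $(u,v)$ with $u,v\in X$ (using again that an internal auxiliary vertex drags both of its original endpoints into $X$, cf.\ Lemma~\ref{blowedge}), and since $G$ is simple there are at most $\binom{|X\cap V|}{2}$ such original edges, so $|\gamma(X)|\le 3\binom{|X\cap V|}{2}$. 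Writing $k_j=|X_j\cap V|$ for the maximal small blossoms $X_j$, disjointness gives $\sum_j k_j\le n$ while $k_j<n^{1/3}$, and so
\[
\sum_j 3\binom{k_j}{2}=\tfrac32\sum_j\bigl(k_j^2-k_j\bigr)\le\tfrac32(n^{1/3}-1)\sum_j k_j\le\tfrac32\bigl(n^{4/3}-n\bigr).
\]

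Combining the two contributions for one family gives at most $\tfrac32(n^{4/3}-n)+n\le\tfrac32 n^{4/3}$, and summing over $\Omega\setminus\Omega^\text{large}$ and $\Omega_\old\setminus\Omega_\old^\text{large}$ yields the claimed bound $3n^{4/3}$. I expect the delicate point to be the $\eta$-bookkeeping of the second paragraph — one must be certain that the $\eta$-edges of nested small blossoms do not proliferate — since this is precisely what keeps the $I(X)$ contribution at $O(n)$ and lets it be absorbed by the $-\tfrac32 n$ slack coming from the $\binom{k_j}{2}$ estimate; everything else is a routine combination of laminarity with the structure of the blowup graph.
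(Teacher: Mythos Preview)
Your proof is correct, but it organizes the counting differently from the paper. The paper argues \emph{per original vertex}: for each $u\in V$ and each family, it takes the maximal small blossom $X$ containing $u$, observes that every bad edge incident on $u$ from that family lies in $\gamma(X)\cup\{\eta(X)\}$ (this implicitly uses that $\eta$-edges of sub-blossoms of $X$ fall into $\gamma(X)\cup\{\eta(X)\}$), and bounds the count by the degree of $u$ in $\blowgraph[X]$ plus one, i.e.\ $\le n^{1/3}$. Summing over $u\in V$ and the two families gives $2n^{4/3}$ bad edges incident on original vertices, and a separate observation that auxiliary--auxiliary bad edges number at most half of these yields $3n^{4/3}$.

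You instead argue \emph{per maximal small blossom}: you split off the $I(X)\setminus F$ contribution (bounded by the number of distinct $\eta$-edges, which your nested-$\eta$ argument shows is at most $|V|$ per family), and then bound $\bigl|\bigcup_X\gamma(X)\bigr|$ via $\sum_j 3\binom{k_j}{2}\le\tfrac32(n^{4/3}-n)$ using disjointness of the maximal small blossoms and simplicity of $G$. The $-\tfrac32 n$ slack absorbs the $\eta$-term, and doubling gives $3n^{4/3}$. Your approach makes the $\eta$-bookkeeping explicit (and you rightly flag it as the delicate point), whereas the paper's per-vertex count hides it inside the observation that all bad edges incident on $u$ lie in $\gamma(X)\cup\{\eta(X)\}$ for the single maximal $X$. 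Both routes ultimately rest on the same structural facts---laminarity, the blowup degree-$2$ property, and the recursion $\eta(B)=\eta(B_0)$---but package them differently; yours is a touch more modular, the paper's a touch more direct.
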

\begin{proof}
	Clearly, bad edges are always in $\gamma(X)\cup\{\eta(X) \}$ as $\gamma(X)\cup I(X)\setminus F\subseteq \gamma(X)\cup \{\eta(X) \}$, for some $X\in(\Omega\setminus \Omega^\text{large})\cup(\Omega_\old\setminus \Omega_\old^\text{large})$. So we only need to bound the total number of edges in $\gamma(X)\cup\{\eta(X) \}$. 
	
	First we claim that the total number of edges in $\gamma(X)\cup \{\eta(X) \}$ connecting two auxiliary vertices $e_u, e_v$ is at most half of the number of edges in $\gamma(X)\cup \{\eta(X) \}$ incident on original vertices. This is because, for any $(e_u, e_v)\in \gamma(X)\cup \{\eta(X) \}$, by the structure of the blowup graph it must be $u, v\in X$, and hence $(u, e_u), (v, e_v)\in \gamma(X)$. Thus we only need to upper bound the number of bad edges of the latter form by a total amount of $2n^{4/3}$.
	
	For each original vertex $u\in V$ in graph $\blowgraph$, let $X$ be the maximal blossom from $\Omega\setminus\Omega^\text{large}$ or $\Omega_\old\setminus\Omega_\old^\text{large}$ that contains $u$. Its degree in $\blowgraph[X]$ is at most $n^{1/3}-1$ since $X$ is always a small blossom. Therefore, including $\eta(X)$, there are at most $n^{1/3}$ bad edges incident on $u$. As there are at most two different maximal $X$'s that contain $u$, one from $\Omega\setminus\Omega^\text{large}$ and one from $\Omega_\old\setminus \Omega_\old^\text{large}$, the total number of bad edges incident on $u$ is at most $2n^{1/3}$. Ranging over all different $u\in V$ finishes the proof.
\end{proof}

\begin{corollary}\label{not-so-bad}
	There are at most $3n^{2/3}$ augmenting walks contain more than $n^{2/3}$ bad edges.
\end{corollary}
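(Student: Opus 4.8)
The plan is a direct double-counting argument exploiting edge-disjointness. Recall that the augmenting walks in $\walks$ are pairwise edge-disjoint walks extracted from $F_0\oplus F$, so each edge of $\blowgraph$ — in particular each bad edge — lies on at most one walk in $\walks$. First I would invoke the preceding lemma, which bounds the total number of bad edges in $\blowgraph$ by $3n^{4/3}$.

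Next, suppose for contradiction that strictly more than $3n^{2/3}$ augmenting walks in $\walks$ each contain more than $n^{2/3}$ bad edges. Since these walks are edge-disjoint, the sets of bad edges they contain are disjoint, so summing over these walks gives strictly more than $3n^{2/3}\cdot n^{2/3} = 3n^{4/3}$ distinct bad edges in $\blowgraph$, contradicting the bound from the previous lemma. Hence at most $3n^{2/3}$ augmenting walks contain more than $n^{2/3}$ bad edges.

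I expect no real obstacle here; the only point requiring a moment of care is verifying that "bad" is a property of an edge together with the walk it sits on, but since the walks are edge-disjoint the per-walk bad edges are automatically globally distinct, so the naive summation is valid. A final remark: after this corollary one would prune $\walks$ once more, discarding the at most $3n^{2/3}$ walks with more than $n^{2/3}$ bad edges, retaining $|\walks|\ge(\tfrac14 C - 1)n^{2/3} - 3n^{2/3} = (\tfrac14 C - 4)n^{2/3}$, which is still $\Omega(Cn^{2/3})$ for $C$ large; this keeps the bookkeeping of the main proof on track.
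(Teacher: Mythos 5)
Your proof is correct and matches the argument the paper clearly intends (the corollary is stated without an explicit proof): bound the total number of bad edges by $3n^{4/3}$ via the preceding lemma, then use the edge-disjointness of the augmenting walks in $\walks$ to conclude via a direct counting argument. The observation that edge-disjointness makes the per-walk bad-edge sets globally disjoint is exactly the point that makes the pigeonhole step valid.
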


For the rest, we are only interested in unaffected augmenting walks containing at most $n^{2/3}$ bad edges. By Corollary~\ref{long} and Corollary~\ref{not-so-bad}, there are at least $(\frac{1}{4}C-1)n^{2/3} - 3n^{2/3} > (\frac{1}{4}C-4)Cn^{2/3}$ many of them. Remove from $\walks$ all augmenting walks that do not satisfy this property. So far we have $|\walks|\geq (\frac{1}{4}C-1)Cn^{2/3}$.

\subsection*{Phase 4}

Next we argue that most of these augmenting walks contain a vertex whose $y$ dual is non-negative. More specifically, take an arbitrary augmenting walk $\rho = \langle u_1, u_2, \cdots, u_{2s}\rangle$. Find the smallest $l$ such that $u_{2l}$ is a good vertex. Since this walk contains at most $n^{2/3}$ bad edges, we are ensured that $l\leq 2n^{2/3}$. Let $\rho_l = \langle u_1, u_2, \cdots, u_{2l}\rangle$ be the prefix alternating walk, then by Lemma~\ref{even}, 
$$\begin{aligned}
\dy(u_{2l})
& \geq Cn^{2/3}-8l-2 -\frac{1}{2}\sum_{i=1}^l\sum_{\substack{X\in\Omega_\old^\text{large}\\ (u_{2i-1}, u_{2i})\in\{\eta(X),\zeta(X) \}}}z_\old(X) -\frac{1}{2}\sum_{i=1}^{l-1}\sum_{\substack{X\in\Omega^\text{large} \\ (u_{2i}, u_{2i+1})\in\{\eta(X),\zeta(X) \}}}z(X)\\
&+\diff(\rho_l, \Omega_\old\setminus \Omega_\old^\text{large}, z_\old, F_0) +\diff(\rho_l, \Omega\setminus\Omega^\text{large}, z, F)\\
& \geq (C-16)n^{2/3}-2 -\frac{1}{2}\sum_{i=1}^l\sum_{\substack{X\in\Omega_\old^\text{large} \\ (u_{2i-1}, u_{2i})\in\{\eta(X),\zeta(X) \}}}z_\old(X)-\frac{1}{2}\sum_{i=1}^{l-1}\sum_{\substack{X\in\Omega^\text{large} \\ (u_{2i}, u_{2i+1})\in\{\eta(X),\zeta(X) \}}}z(X) \\
&+\diff(\rho_l, \Omega_\old\setminus \Omega_\old^\text{large}, z_\old, F_0)
+\diff(\rho_l, \Omega\setminus\Omega^\text{large}, z, F)
\end{aligned}$$

We first argue that the last two terms satisfy $\diff(\rho_l, \Omega_\old\setminus \Omega_\old^\text{large}, z_\old, F_0) \ge 0$ and $\diff(\rho_l, \Omega\setminus\Omega^\text{large}, z, F)\geq0$.
\begin{itemize}
	\item For any $X\in\Omega_\old\setminus\Omega_\old^\text{large}$, since $\rho_l$ starts and ends with edges in $F_0$, by Lemma~\ref{diff}, $\diff(\rho_l, X, F_0)\geq 0$. Therefore, $\diff(\rho_l, \Omega_\old\setminus \Omega_\old^\text{large}, z_\old, F_0)\geq 0$.
	
	\item Consider any blossom $X\in\Omega\setminus\Omega^\text{large}$. As $(u_{2l-1}, u_{2l})$ is not a bad edge, it does not belong to $\gamma(X)\cup \{\eta(X) \}$ since it is not a matching edge in $F$ and it is not in $\gamma(X)\cup I(X)\setminus F$. Therefore, if $(u_1, u_2)$ also does not belong to $\gamma(X)\cup \{\eta(X) \}$, then by Lemma~\ref{diff} $\diff(\rho_l, \Omega, F)\geq 0$.
	
	Now consider the case where $(u_1, u_2)\in \gamma(X)\cup \{\eta(X) \}$. If $(u_1, u_2)\in\gamma(X)$, then it must be $\eta(X) = \NULL$ since $u_1$ is unsaturated. Therefore, applying Lemma~\ref{diff0}, we know $\diff(\rho_l, \Omega, F)$ cannot be equal to $-1$ since otherwise $\eta(X)\in\rho_l$ would not be null, or in other words $\diff(\rho_l, \Omega, F)\geq 0$. If $(u_1, u_2) = \eta(X)$, then $\eta(X)\notin F$, and thus we also know $\diff(\rho_l, \Omega, F)$ cannot be equal to $-1$ since otherwise Lemma~\ref{diff0} guarantees that $\eta(X)$ should be a matching edge in $F$. Either way, $\diff(\rho_l, \Omega, z, F)\geq 0$.
\end{itemize}

By these two bullets, we have
$$\begin{aligned}
\dy(u_{2l})\geq (C-16)n^{2/3}-2 -\frac{1}{2}\sum_{i=1}^l\sum_{\substack{X\in\Omega_\old^\text{large}\\ (u_{2i-1}, u_{2i})\in\{\eta(X),\zeta(X) \}}}z_\old(X) -\frac{1}{2}\sum_{i=1}^{l-1}\sum_{\substack{X\in\Omega^\text{large} \\ (u_{2i}, u_{2i+1})\in\{\eta(X),\zeta(X) \}}}z(X)
\end{aligned}$$

\begin{lemma}\label{aug-num}
	There are at least $\frac{1}{12}Cn^{2/3}$ augmenting walks $\rho\in\walks$ in which $\dy(u_{2l})\geq 0$.
\end{lemma}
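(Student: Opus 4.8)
The plan is a straightforward averaging argument built on top of the lower bound on $\dy(u_{2l})$ that has just been established. Recall that for every walk $\rho=\langle u_1,u_2,\dots,u_{2s}\rangle$ currently in $\walks$ (unaffected, of length at least $\tfrac13 Cn^{2/3}$, with at most $n^{2/3}$ bad edges), taking $l$ to be the least index with $u_{2l}$ good, we derived
\[
\dy(u_{2l}) \;\geq\; (C-16)n^{2/3}-2-\tfrac12 A(\rho)-\tfrac12 B(\rho),
\]
where $A(\rho)=\sum_{i=1}^{l}\sum_{X\in\Omega_\old^\text{large},\,(u_{2i-1},u_{2i})\in\{\eta(X),\zeta(X)\}}z_\old(X)$ and $B(\rho)=\sum_{i=1}^{l-1}\sum_{X\in\Omega^\text{large},\,(u_{2i},u_{2i+1})\in\{\eta(X),\zeta(X)\}}z(X)$. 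Call $\rho$ \emph{heavy} if $\dy(u_{2l})<0$; the displayed inequality then forces $A(\rho)+B(\rho)>2(C-16)n^{2/3}-4$. Thus it suffices to show that only $O(n^{2/3})$ walks in $\walks$ can be heavy, since then the remaining walks — of which there are at least $\tfrac1{12}Cn^{2/3}$ — all satisfy $\dy(u_{2l})\geq 0$.

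To bound the number of heavy walks I would first bound $\sum_{\rho\in\walks}\bigl(A(\rho)+B(\rho)\bigr)$. By Lemma~\ref{unique}, for each large blossom $X$ at most one augmenting walk of $F_0\oplus F$ meets $\{\eta(X),\zeta(X)\}$; moreover, because the auxiliary vertex shared by $\eta(X)$ and $\zeta(X)$ has degree two and capacity one, at most one of $\eta(X),\zeta(X)$ is a matched edge, which matches exactly the parity restriction under which $X$ can appear in $A(\rho)$ (resp. $B(\rho)$). Hence each $X\in\Omega_\old^\text{large}$ contributes $z_\old(X)$ to at most one term of $\sum_\rho A(\rho)$ and each $X\in\Omega^\text{large}$ contributes $z(X)$ to at most one term of $\sum_\rho B(\rho)$, exactly as in the proof of Corollary~\ref{long}. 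Therefore $\sum_\rho A(\rho)\leq\sum_{X\in\Omega_\old^\text{large}}z_\old(X)$ and $\sum_\rho B(\rho)\leq\sum_{X\in\Omega^\text{large}}z(X)$, so Lemma~\ref{z-bound} and Lemma~\ref{z1-bound} give $\sum_{\rho\in\walks}\bigl(A(\rho)+B(\rho)\bigr)\leq 7Cn^{4/3}+14n^{2/3}$.

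A one-line Markov bound then finishes the proof: the number of heavy walks is at most $\frac{7Cn^{4/3}+14n^{2/3}}{2(C-16)n^{2/3}-4}$, which is below $4n^{2/3}$ once $C$ is a sufficiently large constant. Since $|\walks|\geq(\tfrac14 C-4)n^{2/3}$ after the Phase~3 pruning, discarding the heavy walks leaves at least $(\tfrac14 C-8)n^{2/3}\geq\tfrac1{12}Cn^{2/3}$ walks with $\dy(u_{2l})\geq 0$, the last inequality holding for $C\geq 48$. I do not expect a genuine obstacle here: all the substantive estimates (the lower bound on $\dy(u_{2l})$, the global bounds on $\sum z$, and Lemma~\ref{unique}) are already in hand, and the only point needing a moment's care is the charging step — verifying that each large blossom is counted against at most one walk and at most one index within that walk, which is precisely where the $f(e_u)=1$ / degree-two structure of the blowup graph is used.
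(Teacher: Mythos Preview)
Your proof is correct and follows essentially the same averaging/Markov argument as the paper. The only difference is cosmetic: the paper thresholds on the auxiliary sum $\tfrac12 A(\rho)+\tfrac12 B(\rho)$ at $24n^{2/3}$ (so at most $\tfrac16 Cn^{2/3}$ walks fail), whereas you threshold directly on the event $\dy(u_{2l})<0$, which gives a sharper cutoff and leaves $(\tfrac14 C-8)n^{2/3}$ walks rather than the paper's $(\tfrac1{12}C-4)n^{2/3}$; your extra remark about $f(e_u)=1$ pinning down that each $X$ is charged at most once \emph{within} a walk is a point the paper leaves implicit.
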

\begin{proof}
	Recall the inequality
	$$\dy(u_{2l})\geq (C-16)n^{2/3}-2 -\frac{1}{2}\sum_{i=1}^l\sum_{\substack{X\in\Omega_\old^\text{large} \\ (u_{2i-1}, u_{2i})\in\{\eta(X),\zeta(X) \}}}z_\old(X) -\frac{1}{2}\sum_{i=1}^{l-1}\sum_{\substack{X\in\Omega^\text{large} \\ (u_{2i}, u_{2i+1})\in\{\eta(X),\zeta(X) \}}}z(X)$$
	
	By Lemma~\ref{z-bound} and Lemma~\ref{z1-bound}, $\frac{1}{2}\sum_{X\in\Omega_\old^\text{large}}z_\old(X) + \frac{1}{2}\sum_{X\in\Omega^\text{large}}z(X)\leq 3.5Cn^{4/3} + 7n^{2/3} < 4Cn^{4/3}$ for large constant $C$. Consider the value of the summation $$\frac{1}{2}\sum_{i=1}^l\sum_{\substack{X\in\Omega_\old^\text{large} \\ (u_{2i-1}, u_{2i})\in\{\eta(X),\zeta(X) \}}}z_\old(X) +\frac{1}{2}\sum_{i=1}^{l-1}\sum_{\substack{X\in\Omega^\text{large}\\ (u_{2i}, u_{2i+1})\in\{\eta(X),\zeta(X) \}}}z(X)$$
	
	Using Lemma~\ref{unique}, each $z_\old(X), z(X)$ appears in the above summation for at most one augmenting walk. Hence for at most $\frac{1}{6}Cn^{2/3}$ of the augmenting walks, the value of the summation is larger than $\frac{4Cn^{4/3}}{\frac{1}{6}Cn^{2/3}} = 24n^{2/3}$.	In other words, for the rest of $(\frac{1}{4}C-4)n^{2/3} - \frac{1}{6}Cn^{2/3} = (\frac{1}{12}C-4)n^{2/3}$ augmenting walks, when $C$ is a large enough constant, 
	$$\dy(u_{2l})\geq (C-16)n^{2/3} -2- 24n^{2/3}\geq 0$$
\end{proof}

For the rest, we are only interested in augmenting walks $\rho\in\walks$ under the requirement that $\dy(u_{2l})\geq 0$; by Lemma~\ref{aug-num}; there are at least $(\frac{1}{12}C-4)n^{2/3}$ many of them. As for other augmenting walks, prune them from $\walks$.

\subsection*{Phase 5}

Now, consider the suffix of an augmenting walk $\rho\in\walks$, $\langle u_{2l}, u_{2l+2}, \cdots, u_{2s}\rangle$, and for notational simplicity, revert its order and rename it $\langle v[1], v[2], \cdots, v[2k+1]\rangle$, where $v[1] = u_{2s}, v[2k+1] = u_{2l}$. So $\dy(v[1]) = -\ceil{Cn^{2/3}}$ and $\dy(v[{2k+1}])\geq 0$. 

\begin{defn}\label{adjacent}
	Consider the subsequence $v[2i_1+1], v[2i_2+1], \cdots, v[2i_r + 1] = v[2k+1]$ of all \textbf{good and original} vertices with odd indexes $2i_j+1, 1\leq j\leq r$ plus $u_{2l}$; for notational convenience, define $i_0 = -1$.
	
	Clearly $i_{j+1}-i_j\geq 3$ since there are at least $6$ edges between $v[2i_j+1]$ and $v[2i_{j+1}+1]$ on the augmenting walk. Call two original vertices $v[2i_j+1]$ and $v[2i_{j+1}+1]$ \textbf{adjacent} if the sub-walk between them does not contain any bad edges; notice that in this case it must be $i_{j+1} = i_j + 3$.
\end{defn}

\begin{lemma}\label{adj}
	$\sum_{j=0}^{r-1}(i_{j+1}-i_j-3)\leq 3n^{2/3}$.
\end{lemma}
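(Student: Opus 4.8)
The plan is to relate the left-hand side to the number of terms $r$ of the subsequence, and to bound $r$ from below by counting original vertices on the reversed suffix walk. First I would note that the sum telescopes: since $i_r=k$ and $i_0=-1$, we have $\sum_{j=0}^{r-1}(i_{j+1}-i_j-3)=(i_r-i_0)-3r=(k+1)-3r$, so it is enough to prove the lower bound $r\ge \floor{k/3}+1-n^{2/3}$.

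To obtain such a bound I would first pin down where the original vertices sit along the walk. In $\blowgraph$ every edge has an auxiliary endpoint and every auxiliary vertex has degree two, so along any augmenting walk the original vertices appear exactly once every three edges, with the two intermediate steps forced ($u$–$e_u$–$e_v$–$v$ for the underlying edge $e=(u,v)$). Since the unsaturated endpoints $u_1,u_{2s}$ are original vertices, the original vertices among $v[1],\dots,v[2k+1]$ occur at $v$-indices $\equiv 1\pmod 3$, and those of odd $v$-index occur at $v$-indices $1,7,13,\dots$ — an arithmetic progression of common difference six. Call these the \emph{candidate} positions; there are exactly $N=\floor{k/3}+1$ of them. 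By definition the subsequence $v[2i_1+1],\dots,v[2i_r+1]$ consists of the \emph{good} candidate positions together with $v[2k+1]=u_{2l}$ (which is good by the choice of $l$), hence $r\ge N-B$, where $B$ is the number of candidate positions that are \emph{bad}, i.e.\ incident to a bad edge of the walk.

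It then remains to show $B\le n^{2/3}$. Each bad candidate position $q$ is incident to a bad edge of the walk, and since a candidate position is an original vertex, such a bad edge must be one of the two walk-edges at positions $q-1$ and $q$; assign to $q$ one such bad edge. As consecutive candidate positions are at least six steps apart along the walk, the pairs $\{q-1,q\}$ for distinct bad candidates are disjoint, so this assignment is injective into the set of bad edges of the walk. Since $\walks$ was pruned in Phase~3 so that every surviving walk has at most $n^{2/3}$ bad edges, we get $B\le n^{2/3}$. Putting the pieces together, $\sum_{j=0}^{r-1}(i_{j+1}-i_j-3)=(k+1)-3r\le (k+1)-3(N-B)=(k\bmod 3)-2+3B\le 3B\le 3n^{2/3}$, where the last line uses $k\bmod 3\le 2$.

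The step I expect to be the main obstacle is the middle one: establishing cleanly that the candidate positions form an arithmetic progression of common difference six (this needs the blowup-graph structure, including that the endpoints of an augmenting walk are original vertices, and careful tracking of parities under the reversal $v[\cdot]\leftrightarrow u_{\cdot}$), together with a careful statement of what "incident to a bad edge of the walk" means when the augmenting walk revisits an original vertex — which it can, since original vertices may have large degree in $F_0\oplus F$. The clean way around the revisiting issue is to treat goodness as a property of a \emph{position} on the walk rather than of a vertex, so that the injective charging to the two adjacent walk-edges is unambiguous; this is worth stating explicitly. Everything else reduces to the arithmetic above and the Phase~3 bound on the number of bad edges.
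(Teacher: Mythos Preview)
Your argument is correct and rests on the same idea as the paper's: every ``missing'' candidate between consecutive $i_j$'s is a bad position, each such position charges injectively to a bad edge via the $6$-spacing, and the walk has at most $n^{2/3}$ bad edges. The paper organizes this locally---it just observes that the sub-walk from $v[2i_j+1]$ to $v[2i_{j+1}+1]$ contains at least $(i_{j+1}-i_j-3)/3$ bad edges and sums---whereas you telescope globally to $(k+1)-3r$ and then lower-bound $r$. One small advantage of the paper's per-segment phrasing is that it never needs your assumption that $u_{2s}$ is original (which is not established, and indeed auxiliary endpoints are possible); in your version this only shifts the candidate count by $O(1)$, but the local charging sidesteps that bookkeeping entirely.
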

\begin{proof}
	If $v[2i_j+1]$ and $v[2i_{j+1}+1]$ are not adjacent, then there is at least $(i_{j+1}-i_j-3)/3$ bad edge on the walk from $v[2i_j+1]$ to $v[2i_{j+1}+1]$. Since we restrict ourselves to augmenting walks with at most $n^{2/3}$ bad edges, the total sum should be bounded by $n^{2/3}$.
\end{proof}

\begin{lemma}
	For any $0\leq j<r$, 
	$$\begin{aligned}
	\dy(v[2i_{j+1}+1]) - \dy(v[2i_j+1])&\leq 8(i_{j+1}-i_j)+\frac{1}{2}\sum_{h=i_j+1}^{i_{j+1}}\sum_{\substack{X\in\Omega_\old^\text{large}\\ (v[2h-1], v[2h])\in\{\eta(X),\zeta(X) \}}}z_\old(X)\\
	&+ \frac{1}{2}\sum_{h=i_j+1}^{i_{j+1}-1}\sum_{\substack{X\in\Omega^\text{large}, (v[2h] \\ v[2h+1])\in\{\eta(X),\zeta(X) \}}}z(X)
	\end{aligned}$$
\end{lemma}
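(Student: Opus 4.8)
The plan is to derive this inequality from Lemma~\ref{odd} applied to a single sub-walk, followed by two small clean-up steps. Fix $j$ with $0\le j<r$ and let $\rho_j=\langle v[2i_j+1],v[2i_j+2],\dots,v[2i_{j+1}+1]\rangle$ be the corresponding sub-walk of the reversed suffix walk. Since the augmenting walk $\rho$ alternates with its odd-position edges in $F_0\setminus F$ and its even-position edges in $F\setminus F_0$, the same holds along the reversal, so $(v[2h-1],v[2h])\in F_0\setminus F$ and $(v[2h],v[2h+1])\in F\setminus F_0$; in particular $\rho_j$ starts with the non-matching edge $(v[2i_j+1],v[2i_j+2])\notin F$ and has exactly $2(i_{j+1}-i_j)$ edges. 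Hence Lemma~\ref{odd} applies with $s=i_{j+1}-i_j$, and after the substitution $h=i_j+i$ it yields
$$\begin{aligned}
\dy(v[2i_{j+1}+1])-\dy(v[2i_j+1])&\le 8(i_{j+1}-i_j)+\frac{1}{2}\sum_{h=i_j+1}^{i_{j+1}}\sum_{\substack{X\in\Omega_\old^\text{large}\\ (v[2h-1],v[2h])\in\{\eta(X),\zeta(X)\}}}z_\old(X)\\
&+\frac{1}{2}\sum_{h=i_j+1}^{i_{j+1}}\sum_{\substack{X\in\Omega^\text{large}\\ (v[2h],v[2h+1])\in\{\eta(X),\zeta(X)\}}}z(X)\\
&-\diff(\rho_j,\Omega\setminus\Omega^\text{large},z,F)-\diff(\rho_j,\Omega_\old\setminus\Omega_\old^\text{large},z_\old,F_0).
\end{aligned}$$
It then remains to (i) discard the two $\diff$ terms and (ii) delete the $h=i_{j+1}$ summand from the second sum.

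For step (i), each $\diff(\rho_j,S,\cdot,\cdot)$ is a nonnegatively weighted sum of the per-blossom quantities $\diff(\rho_j,X,\cdot)$, so it suffices to show $\diff(\rho_j,X,F)\ge0$ for small $X\in\Omega$ and $\diff(\rho_j,X,F_0)\ge0$ for small $X\in\Omega_\old$. The two end edges of $\rho_j$ are $(v[2i_j+1],v[2i_j+2])\in F_0\setminus F$ and $(v[2i_{j+1}],v[2i_{j+1}+1])\in F\setminus F_0$. When $1\le j\le r-1$ the endpoints $v[2i_j+1]$ and $v[2i_{j+1}+1]$ are good, so neither end edge is bad; hence the first is not in $\gamma(X)\cup I(X)\setminus F$ for small $X\in\Omega$, the second is not in $\gamma(X)\cup I(X)\setminus F_0$ for small $X\in\Omega_\old$, and (trivially) an $F$-edge lies in no set of the form $(\cdots)\setminus F$, an $F_0$-edge in none of the form $(\cdots)\setminus F_0$; Lemma~\ref{diff} then gives both non-negativity statements. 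For $j=0$, where $v[1]=u_{2s}$ is the unsaturated end of $\rho$ and need not be good, I would argue as in the proof of Lemma~\ref{length}, using Lemma~\ref{diff0} together with the fact that the other end edge of $\rho_0$ is an $F$-edge (killing the ``$\gamma(X)\cup\{\eta(X)\}$ contains $\rho_0$'' case, since that case requires both end edges unmatched) and that a small blossom containing the unsaturated vertex $u_{2s}$ has $\eta(X)=\NULL$ by maturity (killing the other case). Since the $\diff$ terms are now $\le0$, dropping them preserves the inequality.

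For step (ii), the $h=i_{j+1}$ summand concerns the matching edge $e^\ast=(v[2i_{j+1}],v[2i_{j+1}+1])\in F$ incident on $v[2i_{j+1}+1]$, which is a good \emph{original} vertex when $j+1<r$ (and equals $u_{2l}$ when $j+1=r$, to be treated identically). By construction of $\blowgraph$, every edge at an original vertex joins it to an auxiliary vertex, so $e^\ast=(w,e_w)$ with $w=v[2i_{j+1}+1]$; as $\zeta(X)$ always joins two auxiliary vertices, $e^\ast\ne\zeta(X)$ for every $X\in\Omega^\text{large}$. If $e^\ast=\eta(X)$ for some $X\in\Omega^\text{large}$, then, using $\eta(X)\in\delta(\beta(X))\cap\delta(X)$ and Lemma~\ref{blowedge}, we would get $w=\beta(X)$ with $e_w\notin X$; moreover, since $e^\ast=\eta(X)\in F$ is matched and $e_w$ has degree $2$ and $f(e_w)=1$, the edge preceding $e_w$ on $\rho$ would be forced to be $\zeta(X)$. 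Ruling this configuration out is the step I expect to be the main obstacle: it must use that $X\in\Omega$ is mature with respect to the current $f$-factor $F$ and the base/alternation structure of $X$ at $\beta(X)=w$ in the blowup graph, contradicting that $w$ is a good original vertex through which $\rho$ passes with the edge following $w$ a non-matching $F_0$-edge. Granting (ii), every remaining term on the right-hand side is non-negative, deleting the $h=i_{j+1}$ term only decreases it, and the resulting inequality is exactly the claim; the $j=0$ case additionally uses the stated convention $i_0=-1$ read as $v[2i_0+1]=v[1]=u_{2s}$.
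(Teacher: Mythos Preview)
Your step (i) is exactly what the paper does: apply Lemma~\ref{odd} to the sub-walk $\rho_j$ and then argue that the two $\diff$ terms are nonnegative using Lemma~\ref{diff} for $1\le j<r$, and Lemma~\ref{diff0} together with maturity ($\eta(X)=\NULL$ for a blossom containing the unsaturated endpoint) for $j=0$. Your outline of the $j=0$ case matches the paper's case analysis.

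The divergence is your step (ii). The paper does \emph{not} carry out this step at all: after applying Lemma~\ref{odd} it leaves the second sum running up to $i_{j+1}$, proves the $\diff$ terms are nonnegative, and stops. In other words, the paper only establishes the bound with upper limit $i_{j+1}$ in the $z$-sum, not $i_{j+1}-1$ as written in the statement. If you look at how this lemma is used, the quantity $\zsum(j)$ defined immediately afterward has both sums going to $i_{j+1}$, so the ``$-1$'' in the displayed statement is evidently a typo and the version the paper actually proves is the one that gets used.

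Your attempted argument for (ii) is therefore unnecessary, and in fact it is not clear it can be completed: the hypothesis that $v[2i_{j+1}+1]$ is good only rules out incidence with bad edges, which are defined via \emph{small} blossoms, whereas the summand you are trying to kill involves $X\in\Omega^{\text{large}}$. Nothing prevents the matching edge $(v[2i_{j+1}],v[2i_{j+1}+1])$ from equalling $\eta(X)$ for some large $X$, so the configuration you flag as ``the main obstacle'' need not be excluded. Drop step (ii) and state the bound with the second sum to $i_{j+1}$; that is both what the paper proves and what suffices downstream.
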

\begin{proof}
	Apply Lemma~\ref{odd} on the alternating walk of $\rho$ starting from an non-matching edge in $F$ (let us call it $\rho_j$) from $v[2i_j+1]$ to $v[2i_{j+1}+1]$, we have
	$$\begin{aligned}
	\dy(v[2i_{j+1}+1]) - \dy(v[2i_j+1])&\leq 8(i_{j+1}-i_j)+\frac{1}{2}\sum_{h=i_j+1}^{i_{j+1}}\sum_{\substack{X\in\Omega_\old^\text{large} \\ (v[2h-1], v[2h])\in\{\eta(X),\zeta(X) \}}}z_\old(X)\\
	&+\frac{1}{2}\sum_{h=i_j+1}^{i_{j+1}}\sum_{\substack{X\in\Omega^\text{large} \\ (v[2h], v[2h+1])\in\{\eta(X),\zeta(X) \}}}z(X)\\
	&-\diff(\rho_j, \Omega\setminus\Omega^\text{large}, z, F) - \diff(\rho_j, \Omega_\old\setminus\Omega_\old^\text{large}, z_\old, F_0)
	\end{aligned}$$
	So it suffices to prove $\diff(\rho_j, \Omega\setminus\Omega^\text{large}, z, F)$ and $\diff(\rho_j, \Omega_\old\setminus\Omega_\old^\text{large}, z_\old, F_0)$ are both non-negative. 
	
	First, consider any $X\in \Omega_\old\setminus \Omega_\old^\text{large}$. When $1\leq j< r$, both $v[2i_{j+1}+1]$ and $v[2i_j+1]$ are not incident on bad edges, and thus $\diff(\rho_j, X, F_0)\geq 0$ by Lemma~\ref{diff}. When $j = 0$, as the edge incident on $v[1]$ is a matching edge in $F_0$ which does not belong to $\gamma(X)\cup I(X)\setminus F_0$, so we can still apply Lemma~\ref{diff} to argue $\diff(\rho_0, X, F_0)\geq 0$.
	
	Second, consider any $X\in \Omega\setminus\Omega^{\text{large}}$. For $j\geq 1$, since both $v[2i_j+1], v[2i_{j+1}+1]$ are good vertices, using Lemma~\ref{diff} we know $\diff(\rho_j, X, F)\geq 0$. As for $\diff(\rho_0, X, F)$, we only need to worry about the case where the starting edge $(v[1], v[2])$ belongs to $\gamma(X)\cup \{\eta(X) \}$; otherwise again by Lemma~\ref{diff} we know $\diff(\rho_0, X, F)\geq 0$. Since by definition $v[2i_1+1]$ is a good vertex, thus $(v[2i_1], v[2i_1+1])\notin\gamma(X)\cup I(X)$, so if $\diff(\rho_0, X, F) = -1$, then by Lemma~\ref{diff0}, the only possibility is $\eta(X)\in \rho_0\cap F$. Therefore the unsaturated vertex $v[1]$ does not belong to $X$ as $\eta(X)\neq \NULL$. Since the starting edge $(v[1], v[2])$ belongs to $\gamma(X)\cup \{\eta(X) \}$ but $v[1]\notin X$, the first edge can only be $\eta(X)$ which is a matched edge in $F$, contradiction. Therefore, $\diff(\rho_0, X, F)\geq 0$.
\end{proof}

Next we need introduce a useful lemma that plays the key role in our proof.
\begin{lemma}\label{key}
	Let $a_1, a_2, \cdots, a_t$ be an arbitrary sequence of integers, and let $b>0$ be an integer. Then, there exists at least $\frac{1}{b+1}(a_t - a_1) - \sum_{i=1}^{t-1}\max\{a_{i+1}-a_i-b, 0\}$ different integers $q\in [a_1, a_t]$, with the property that $\exists i, a_i, a_{i+1}\in [q-b, q+b]$.
\end{lemma}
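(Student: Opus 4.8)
The plan is to read the statement set-theoretically. Let $S=\{q\in\mathbb Z : \exists i,\ a_i,a_{i+1}\in[q-b,q+b]\}$, so that the quantity to be lower-bounded is exactly $|S\cap[a_1,a_t]|$. We may assume $a_1\le a_t$, since otherwise the claimed bound $\tfrac1{b+1}(a_t-a_1)-\sum_{i}\max\{a_{i+1}-a_i-b,0\}$ is non-positive and there is nothing to prove. Writing $p_i=\max\{a_{i+1}-a_i-b,0\}$, the target inequality after clearing denominators is
$$a_t-a_1\ \le\ (b+1)\big(|S\cap[a_1,a_t]|+\textstyle\sum_i p_i\big).$$
Intuitively each trapping value $q$ will be allowed to account for $b+1$ units of the total displacement $a_t-a_1$, while the large up-jumps (captured by the $p_i$) supply whatever displacement is not so accounted for.

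The device I would use is a \emph{capped running maximum}: set $M_i=\min\{\max(a_1,\dots,a_i),\,a_t\}$ for $1\le i\le t$. Then $M$ is non-decreasing, $M_1=a_1$ and $M_t=a_t$, so the half-open integer intervals $(M_i,M_{i+1}]$ for $1\le i\le t-1$ partition $(a_1,a_t]$. I would then establish the per-step inequality
$$M_{i+1}-M_i\ \le\ (b+1)\big(|S\cap(M_i,M_{i+1}]|+p_i\big)$$
for every $i$ and sum it over $i$: the left sides telescope to $a_t-a_1$, the sets $S\cap(M_i,M_{i+1}]$ are pairwise disjoint subsets of $S\cap[a_1,a_t]$, and the $p_i$ sum to the stated penalty term, which is precisely the displayed bound.

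For the per-step inequality I would split on the step $a_i\to a_{i+1}$. If $M_{i+1}=M_i$ (no new running maximum; in particular whenever $a_{i+1}\le M_i$), both sides are handled trivially since the right side is non-negative. Otherwise $a_{i+1}$ is a strict new maximum, and a short check gives $M_i=\max(a_1,\dots,a_i)\le a_t$ and $a_i\le M_i<M_{i+1}\le a_{i+1}$. If moreover $a_{i+1}-a_i\le b$, then I claim every integer $q\in(M_i,M_{i+1}]$ lies in $S$ via the pair $(a_i,a_{i+1})$: from $M_i<q\le a_{i+1}$ one gets $a_{i+1}\in[q-b,q+b]$, and from $a_i\ge a_{i+1}-b\ge q-b$ together with $a_i\le q-1$ one gets $a_i\in[q-b,q+b]$; hence $|S\cap(M_i,M_{i+1}]|\ge M_{i+1}-M_i$ and the inequality follows at once. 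If instead $a_{i+1}-a_i>b$, then by integrality $p_i=a_{i+1}-a_i-b\ge 1$, so $M_{i+1}-M_i\le a_{i+1}-a_i=b+p_i\le (b+1)p_i$, again giving the inequality.

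The substantive content is concentrated in the small-step case, where the factor $b+1$ is actually produced; the large-jump estimate $M_{i+1}-M_i\le(b+1)p_i$ is deliberately crude but already suffices. I expect the main fussiness — rather than a real obstacle — to be the bookkeeping around the capped running maximum: verifying that it runs from $a_1$ to $a_t$, that its increments partition $(a_1,a_t]$ so the disjointness step is legitimate, and that in the small-step case the two membership conditions $a_i,a_{i+1}\in[q-b,q+b]$ genuinely hold for every $q$ in the new interval.
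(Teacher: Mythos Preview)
Your argument is correct. The capped running maximum $M_i=\min\{\max(a_1,\dots,a_i),a_t\}$ does everything you claim: it runs monotonically from $a_1$ to $a_t$, the half-open intervals $(M_i,M_{i+1}]$ partition $(a_1,a_t]$, and your per-step inequality holds in all three subcases. The small-step verification (both $a_i$ and $a_{i+1}$ land in $[q-b,q+b]$ for every integer $q\in(M_i,M_{i+1}]$) and the large-jump estimate $b+p_i\le(b+1)p_i$ when $p_i\ge1$ are both clean. Summing and using $|S\cap(a_1,a_t]|\le|S\cap[a_1,a_t]|$ finishes.

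The paper takes a different route: it inducts on the sequence length. Assuming $a_t>a_1$, it finds the first index $s\ge2$ with $a_s>a_1$, applies the inductive hypothesis to the suffix $a_s,\dots,a_t$ (getting trapping values in $[a_s,a_t]$), and then splits on whether $a_{s-1},a_s\in[a_1-b,a_1+b]$. If so, $q=a_1$ is a new trapping value disjoint from those in $[a_s,a_t]$, and the gain of $1$ covers $\tfrac1{b+1}(a_s-a_1)$ since $a_s-a_1\le b$. If not, one checks $a_s-a_{s-1}\ge b+1$, so the penalty $p_{s-1}=a_s-a_{s-1}-b$ alone already dominates $\tfrac1{b+1}(a_s-a_1)$. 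Your decomposition via the running maximum is effectively an unrolled, non-recursive version of the same idea: each time the running maximum increases you either collect trapping values (your small-step case, the paper's first branch) or pay a penalty (your large-step case, the paper's second branch). Your version has the advantage of being a single pass with no induction hypothesis to manage, at the cost of introducing the auxiliary sequence $M_i$; the paper's induction is shorter to state but requires tracking that the newly-found $q=a_1$ is genuinely distinct from the inductively-obtained values.
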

\begin{proof}
	Prove this by an induction on the sequence length. When $t = 1$, the statement is trivial. Now consider the inductive step. For general $t>1$, without loss of generality assume $a_t > a_1$. Then, find the smallest index $s\geq 2$ such that $a_s> a_1$. Apply induction on sequence $a_s, a_{s+1}, \cdots, a_t$, there are at least $\frac{1}{b+1}(a_t - a_s) -\sum_{i=s}^{t-1}\max\{a_{i+1}-a_i-b, 0 \}$ different satisfying values $q\in [a_s, a_t]$. Consider two cases.
	\begin{itemize}
		\item $a_s, a_{s-1}\in [a_1-b, a_1+b]$.
		
		In this case, including $q = a_1$, the total number of different $q$'s is at least $1+\frac{1}{b+1}(a_t - a_s) -\sum_{i=s}^{t-1}\max\{a_{i+1}-a_i-b, 0 \}\geq \frac{1}{b+1}(a_t - a_1) - \sum_{i=1}^{t-1}\max\{a_{i+1}-a_i-b, 0\}$.
		\item Not both of $a_s, a_{s-1}$ belong to $[a_1-b, a_1+b]$.
		
		In this case, by definition of index $s$, $a_s > a_1\geq a_{s-1}$. Hence, $a_s > a_1+b$ or $a_{s-1}<a_1-b$, so we always have $a_s - a_{s-1}\geq b+1$ Therefore, 
		$$a_s - a_{s-1}-b\geq \frac{1}{b+1}(a_s - a_{s-1})\geq\frac{1}{b+1}(a_s - a_1)$$
		and thus by the induction, the number of different satisfying $q$'s within $[a_s, a_t]$ is at least $$\frac{1}{b+1}(a_t - a_s) -\sum_{i=s}^{t-1}\max\{a_{i+1}-a_i-b, 0 \}\geq \frac{1}{b+1}(a_t - a_1) - \sum_{i=1}^{t-1}\max\{a_{i+1}-a_i-b, 0\}$$
	\end{itemize}
\end{proof}

For notational convenience, define:
$$\zsum(j) \overset{\text{def}}{=} \frac{1}{2}\sum_{h=i_j+1}^{i_{j+1}}\sum_{\substack{X\in\Omega_\old^{\text{large}} \\ (v[2h-1], v[2h])\in\{\eta(X),\zeta(X) \}}}z_\old(X) + \frac{1}{2}\sum_{h=i_j+1}^{i_{j+1}}\sum_{\substack{X\in\Omega^{\text{large}} \\ (v[2h], v[2h+1])\in\{\eta(X),\zeta(X) \}}}z(X)$$

Apply Lemma~\ref{key} by substituting the following parameters:
\begin{itemize}
	\item $b\leftarrow 24, t\leftarrow r+1$;
	\item $a_1\leftarrow \dy(v[1]), a_2\leftarrow \dy(v[2i_1+1]), \cdots, a_{t-1}\leftarrow \dy(v[2i_{r-1}+1]), a_t\leftarrow 0$.
\end{itemize}

Using Lemma~\ref{adj}, then the total number of different integers $q\in [\dy(v[1]), 0] = [-Cn^{2/3}, 0]$, such that $\exists 1\leq j<t, a_j, a_{j+1}\in [q-24, q+24]$ is at least
$$\frac{1}{25}Cn^{2/3} - 8\sum_{j=0}^r(i_{j+1}-i_j-3) - \sum_{j=0}^{r-1}\zsum(j) \geq (\frac{1}{25}C-24)n^{2/3} - \sum_{j=0}^{r-1}\zsum(j)$$

(Note that a positive upper bound can cover that case that $a_{i+1}-a_i-b<0$.) 
Next we need to argue that the above quantity $(\frac{1}{25}C-24)n^{2/3} - \sum_{j=0}^{r-1}\zsum(j)$ is large for most augmenting walks $\rho\in\walks$, and shortly we will be restricting our attention only on those augmenting walks where this quantity is large. More specifically, as we did before, by Lemma~\ref{z-bound} and Lemma~\ref{z1-bound}, $\frac{1}{2}\sum_{X\in\Omega_\old^\text{large}}z_\old(X) + \frac{1}{2}\sum_{X\in\Omega^\text{large}}z(X)\leq 3.5Cn^{4/3} + 7n^{2/3} < 4Cn^{4/3}$. Then, using Lemma~\ref{unique}, each $z_\old(X), z(X)$ contributes to at most one $\sum_{j=0}^{r-1}\zsum(j)$. Therefore, we can conclude that for at most $\frac{1}{24}Cn^{2/3}$ choices of augmenting walks, the corresponding summation $\sum_{j=0}^{r-1}\zsum(j)$ is larger or equal to $96n^{2/3}$; in other words, for the rest $|\walks| - \frac{1}{24}Cn^{2/3}\geq (\frac{1}{24}C - 4)n^{2/3}$ walks, the corresponding sum $\sum_{j=0}^{r-1}\zsum(j)$ is bounded by $96n^{2/3}$. So after the pruning step, we still have $|\walks|\geq (\frac{1}{24}C - 4)n^{2/3}$.

\subsection*{Phase 6}
Let us continue with our terminologies in the previous phase. We already know $\forall \rho\in \walks, \sum_{j=0}^{r-1}\zsum(j)\leq 96n^{2/3}$. So there are at least $$(\frac{1}{25}C-24)n^{2/3} - \sum_{j=0}^{r-1}\zsum(j)\geq (\frac{1}{25}C - 120)n^{2/3}$$ different integers $q\in [-Cn^{2/3}, 0]$ such that $\exists 1\leq j<t, a_j, a_{j+1}\in [q-24, q+24]$. We first need to remove some of the $q$'s that will not be useful for future arguments. The removal consists of two steps.

\begin{enumerate}[(1)]
\item Exclude all those $q$ from $[-Cn^{2/3}, -Cn^{2/3}+24]\cup [-24, 0]$, so there are still $(\frac{1}{25}C - 120)n^{2/3}-50$ many of these $q$'s. Since $\dy(v[2k+1])\geq 0$, for any of the rest $q\in [-Cn^{2/3}+25, -25]$, the corresponding index $j$ cannot be $1$ or $t-1$. Hence, there always exists $j$ such that $\dy(v[2i_j+1]), \dy(v[2i_{j+1}+1])\in [q-24, q+24]$.

\item Exclude all those $q$ such that $v[2i_j+1]$ and $v[2i_{j+1}+1]$ are not adjacent, and this lower bound would become $(\frac{1}{25}C-120)n^{2/3} - 50 - 49n^{2/3} = (\frac{1}{25}C-169)n^{2/3} - 50$, as there are at most $n^{2/3}$ bad edges in $\rho$. This is because, recalling the definition of ``adjacent'' from Definition~\ref{adjacent}, there is at least one bad edge on the sub-walk between $v[2i_j+1]$ and $v[2i_{j+1}+1]$ if they are not adjacent, and each such bad edge can invalidate at most $49$ different $q$'s.
\end{enumerate}

\begin{defn}\label{trap}
	We call an integer $q \in [-Cn^{2/3}+25, -25]$ traps $\rho$ at $j$ if the following conditions hold.
	\begin{enumerate}[(1)]
		\item Original vertices $v[2i_j+ 1], v[2i_{j+1} + 1]$ are adjacent.
		\item $y(v[2i_j+1]), y(v[2i_{j+1} + 1]) \in [q - 24, q + 24]$.
		\item $\zsum(j) = 0$.
	\end{enumerate}
\end{defn}
 
\begin{defn}
	For each $q\in [-Cn^{2/3}, 0]$, let $Y_q\subseteq V$ be the set of all original vertices $u$ such that $\dy(u) = q$. For any two original vertices $u, v\in V$, a walk \textbf{directly connects} $u, v$ if $(u, v) = e\in E$ and this walk goes through edges $(u, e_u), (e_u, e_v), (e_v, v)$ consecutively.
\end{defn}

\begin{lemma}\label{neck}
	Suppose an integer $q$ traps $\rho$. Then, $\rho$ directly connects two original vertices in $\bigcup_{h=q-24}^{q+24}Y_h\cup \bigcup_{h=-q-42}^{-q+40}Y_h$.
\end{lemma}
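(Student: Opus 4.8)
The plan is to read off from the trap conditions a single original edge that $\rho$ traverses directly, and then to pin down the $\dy$-values of its two endpoints. Suppose $q$ traps $\rho$ at $j$. First I would use the structure of the blowup graph together with adjacency (condition~(1) of Definition~\ref{trap}): since $v[2i_j+1]$ and $v[2i_{j+1}+1]$ are adjacent, $i_{j+1}=i_j+3$, so the sub-walk $\rho_j$ between them consists of exactly six edges $v[2i_j+1]\to v[2i_j+2]\to\cdots\to v[2i_j+7]$, and because every auxiliary vertex has degree $2$ in $\blowgraph$ while $\rho$ uses pairwise distinct edges, this forces the type pattern original, auxiliary, auxiliary, original, auxiliary, auxiliary, original. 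Writing $a=v[2i_j+1]$, $c=v[2i_j+4]$ and $b=v[2i_{j+1}+1]=v[2i_j+7]$, the vertices $v[2i_j+2],v[2i_j+3]$ are then the two auxiliary copies $e_a,e_c$ of an edge $e=(a,c)\in E$ and $v[2i_j+5],v[2i_j+6]$ the two copies $g_c,g_b$ of $g=(c,b)\in E$; in particular $\rho$ directly connects $a$ and $c$. Along the reversed-suffix indexing one has $(v[2h-1],v[2h])\in F_0\setminus F$ and $(v[2h],v[2h+1])\in F\setminus F_0$, so the three-edge half $\langle a,e_a,e_c,c\rangle$ starts and ends with non-$F$ edges while $\langle c,g_c,g_b,b\rangle$ starts and ends with $F$-edges; moreover, by adjacency $\rho_j$ contains no bad edge.

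Next I would bound $\dy(c)$ on both sides. For the lower bound, apply Lemma~\ref{even} to the alternating walk $\langle a,e_a,e_c,c\rangle$, which starts with a non-$F$ edge: this yields $\dy(c)\ge -\dy(a)-18$, corrected downward by two large-blossom sums supported on the edges $(a,e_a),(e_c,c),(e_a,e_c)$ and upward by $\diff(\cdot,\Omega_\old\setminus\Omega_\old^\text{large},z_\old,F_0)+\diff(\cdot,\Omega\setminus\Omega^\text{large},z,F)$. Each of those three edges is among the six edges over which $\zsum(j)$ ranges, so the two large-blossom sums are partial sums of $\zsum(j)$; since all their terms are nonnegative and condition~(3) gives $\zsum(j)=0$, these corrections vanish. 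The two $\diff$ terms are nonnegative: $\rho_j$ has no bad edge, hence for every small $X\in\Omega\setminus\Omega^\text{large}$ we have $\rho_j\cap(\gamma(X)\cup I(X))\subseteq F$ and so $\diff(\cdot,X,F)\ge 0$, and symmetrically $\diff(\cdot,X,F_0)\ge 0$ for small $X\in\Omega_\old\setminus\Omega_\old^\text{large}$; Lemma~\ref{diff} packages exactly this sufficient condition. Thus $\dy(c)\ge -\dy(a)-18$. For the upper bound, apply Lemma~\ref{even1} to $\langle c,g_c,g_b,b\rangle$, which starts with an $F$-edge: the large-blossom corrections are again partial sums of $\zsum(j)=0$, and the two $\diff$ terms now enter with a minus sign but are nonnegative for the same reason, giving $\dy(c)\le -\dy(b)+16$.

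Finally I would combine these with condition~(2), namely $\dy(a),\dy(b)\in[q-24,q+24]$: then $\dy(c)\ge -\dy(a)-18\ge -q-42$ and $\dy(c)\le -\dy(b)+16\le -q+40$, so $\dy(c)\in[-q-42,-q+40]$. Hence $a\in\bigcup_{h=q-24}^{q+24}Y_h$ and $c\in\bigcup_{h=-q-42}^{-q+40}Y_h$, so both $a$ and $c$ lie in $\bigcup_{h=q-24}^{q+24}Y_h\cup\bigcup_{h=-q-42}^{-q+40}Y_h$; since $\rho$ directly connects $a$ and $c$, this is the claim. I expect the only real difficulty to be bookkeeping rather than ideas: pinning down the forced original/auxiliary alternation of $\rho_j$ in $\blowgraph$, checking the parity so that Lemma~\ref{even} is applied to the correct three-edge half and Lemma~\ref{even1} to the other, and verifying that each of the six edges of $\rho_j$ lies in whichever of the two sums defining $\zsum(j)$ makes the corresponding large-blossom correction drop out; once this is set up, the stated inequalities are immediate.
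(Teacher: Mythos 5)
Your proof is correct and is essentially identical to the paper's: split the six-edge sub-walk between the adjacent good original vertices $v[2i_j+1]$ and $v[2i_{j+1}+1]$ at the middle original vertex $c$, apply Lemma~\ref{even} to the first three-edge half (which begins with a non-$F$ edge) and Lemma~\ref{even1} to the second (which begins with an $F$-edge), kill the large-blossom correction terms via $\zsum(j)=0$ and the $\diff$ terms via the no-bad-edge hypothesis, and conclude $\dy(c)\in[-q-42,-q+40]$. You add a little extra bookkeeping that the paper leaves implicit — spelling out that the degree-$2$ auxiliary vertices force the type pattern original/aux/aux/original/aux/aux/original so that the middle vertex is indeed original and the walk directly connects $a$ to $c$ — but the decomposition, the two lemmas invoked, and the arithmetic are the same as the paper's argument.
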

\begin{proof}
	By definition, $v[2i_j+1], v[2i_{j+1}+1]$ are adjacent, so $i_{j+1} = i_j + 3$. Consider the $6$-hop sub-walk from $v[2i_j+1]$ to $v[2i_{j+1}+1]$ which starts with an edge in $F_0$. This walk passes through three original vertices consecutively, and call them $v[2i_j+1], u, v[2i_j+7]\in V$.
	
	On the one hand, consider the $3$-hop sub-walk $\varrho_1$ from $v[2i_j+1]$ to $u$ that begins with an edge not in $F$. By Lemma~\ref{even} and $\zsum(j) = 0$, we have
	$$\begin{aligned}
	\dy(u) &\geq -\dy(v[2i_j+1]) - 18 + \diff(\varrho_1, \Omega_\old\setminus \Omega_\old^{\text{large}}, z_\old, F_0) + \diff(\varrho_1, \Omega\setminus\Omega^{\text{large}}, z, F)
	\end{aligned}$$
	
	By definition of being adjacent, this $3$-hop sub-walk $\varrho_1$ does not contain any bad edges, and so using Lemma~\ref{diff} we know both $\diff(\varrho_1, \Omega_\old\setminus \Omega_\old^{\text{large}}, z_\old, F_0)$  and $\diff(\varrho_1, \Omega\setminus\Omega^{\text{large}}, z, F)$ are nonnegative. So $\dy(u)\geq -\dy(v[2i_j+1]) - 18$.
	
	On the other hand, consider the $3$-hop sub-walk $\varrho_2$ from $u$ to $v[2i_{j+1}+1]$ that begins with an edge in $F$. As this $3$-hop sub-walk does not contain any bad edges, by Lemma~\ref{even1}, we have
	$$\dy(u)\leq -\dy(v[2i_{j+1}+1]) +16 - \diff(\varrho_2, \Omega_\old\setminus \Omega_\old^{\text{large}}, z_\old, F_0) - \diff(\varrho_2, \Omega\setminus\Omega^{\text{large}}, z, F)$$
	
	By definition of being adjacent, this $3$-hop sub-walk $\varrho_2$ does not contain any bad edges, and so using Lemma~\ref{diff} we know both $\diff(\varrho_2, \Omega_\old\setminus \Omega_\old^{\text{large}}, z_\old, F_0)$  and $\diff(\varrho_2, \Omega\setminus\Omega^{\text{large}}, z, F)$ are non-negative. $\dy(u)\leq -\dy(v[2i_{j+1}+1]) +16$.
	
	To sum up, $-q-42\leq -\dy(v[2i_j+1])-18\leq \dy(u)\leq -\dy(v[2i_{j+1}+1])+16\leq -q+40$, which concludes the proof.
\end{proof}

\begin{lemma}\label{q-num}
	For any augmenting walk $\rho\in\walks$, there are at least $\frac{1}{26}Cn^{2/3}$ different values of $q$ that traps $\rho$.
\end{lemma}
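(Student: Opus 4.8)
The plan is to continue directly from the two removal steps carried out in Phase~6. Those steps already establish that, for the fixed augmenting walk $\rho\in\walks$ under consideration, there are at least $(\tfrac{1}{25}C-169)n^{2/3}-50$ integers $q\in[-Cn^{2/3}+25,-25]$, each coming with a witness index $j=j(q)$ for which conditions (1) and (2) of Definition~\ref{trap} hold: the original vertices $v[2i_j+1]$ and $v[2i_{j+1}+1]$ are adjacent, and $\dy(v[2i_j+1]),\dy(v[2i_{j+1}+1])\in[q-24,q+24]$. So the only thing left is to discard those $q$ whose witness index fails the third condition $\zsum(j(q))=0$, and to verify that not too many $q$ are discarded.

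The key observation I would make is that $\zsum(j)$ is always a nonnegative integer, and is at least $1$ whenever it is positive. Indeed, the algorithm keeps every $z(X)$ (hence every $z_\old(X)$) an even nonnegative integer, so each nonzero summand $\tfrac12 z(X)$ or $\tfrac12 z_\old(X)$ occurring in $\zsum(j)$ is a positive integer. Together with the Phase~5 bound $\sum_{j=0}^{r-1}\zsum(j)\le 96n^{2/3}$, this yields at most $96n^{2/3}$ indices $j$ with $\zsum(j)>0$. On the other hand, for a single fixed index $j$, condition (2) forces $q\in[\dy(v[2i_j+1])-24,\dy(v[2i_j+1])+24]$, so $j$ can serve as the witness $j(q)$ for at most $49$ different values of $q$. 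Multiplying, at most $49\cdot 96\,n^{2/3}$ of the surviving $q$'s have $\zsum(j(q))>0$; every other surviving $q$ traps $\rho$ at $j(q)$. Hence the number of integers $q$ that trap $\rho$ is at least $(\tfrac{1}{25}C-169)n^{2/3}-50-49\cdot 96\,n^{2/3}\ge \tfrac{1}{26}Cn^{2/3}$ for a sufficiently large constant $C$, using $\tfrac{1}{25}-\tfrac{1}{26}>0$.

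I do not anticipate a genuine obstacle: this is a bookkeeping argument layered on top of Phase~6. The one point that truly requires care is the integrality-and-positivity claim on $\zsum(j)$; without it one could only control a weighted sum of the $\zsum(j)$'s, not the number of indices $j$ with $\zsum(j)>0$, and the counting would collapse. A minor subtlety is to fix the witness index $j(q)$ once and for all immediately after Phase~6's step~(1), so that it is unambiguously the same index that is tested for adjacency in step~(2) and for $\zsum(j(q))=0$ here.
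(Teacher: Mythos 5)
Your proposal is correct and follows essentially the same route as the paper's own proof: start from the $(\tfrac{1}{25}C-169)n^{2/3}-50$ survivors of Phase~6's two removal steps, observe that each witness index can serve at most $2\cdot 24+1 = 49$ values of $q$, combine that with the Phase~5 bound $\sum_j \zsum(j)\le 96n^{2/3}$ to discard at most $49\cdot 96\,n^{2/3}$ further $q$'s, and absorb everything into the constant. The only cosmetic difference is that you make the integrality of $\zsum(j)$ explicit (bounding the number of bad indices $j$ first, then multiplying by $49$), whereas the paper sums $\zsum(j_q)$ over $q$ directly and uses the same integrality implicitly; both yield the identical $4704n^{2/3}$ deduction.
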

\begin{proof}
	We have already proved that there are $(\frac{1}{25}C-169)n^{2/3}-50$ many different $q$'s that satisfy properties (1) and (2) of Definition~\ref{trap}. For each such $q$, let $j_q$ be the index such that both $\dy(v[2i_{j_q}+1]), \dy(v[2i_{j_q+1}+1])\in [q-b, q+b]$. Then there are at most $2b+1 = 49$ different $q$'s that share the same $j_q$. So if we take a summation over all $q$ the value of $\zsum(j_q)$, we have: 

	$$\sum_{q}\zsum(j_q)\leq 49\times 96n^{2/3} = 4704n^{2/3}$$

	So there are at most $4704n^{2/3}$ possible values of $q$ such that $\zsum(j_q)>0$, and consequently there are $(\frac{1}{25}C-169)n^{2/3}-50 - 4704n^{2/3}>\frac{1}{26}Cn^{2/3}$ different $q$'s that traps $\rho$, when $C \geq 5\times 10^{6}$ is a sufficiently large constant.
\end{proof}

In the next lemma, we argue that for a large amount of integers $q\in [-Cn^{2/3}+b+1,-b-1]\subset[-Cn^{2/3}, -1]$, $q$ traps a large number of different augmenting walks.

\begin{lemma}\label{p-num}
	There exist at least $0.01Cn^{2/3}$ different $q\in [-Cn^{2/3},-1]$ that traps at least $\frac{1}{1000}Cn^{2/3}$ different augmenting walks in $\walks$.
\end{lemma}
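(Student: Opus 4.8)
The plan is to use a double-counting / averaging argument over the set of pairs $(\rho, q)$ where $q$ traps $\rho$. By Lemma~\ref{q-num}, every augmenting walk $\rho \in \walks$ is trapped by at least $\frac{1}{26}Cn^{2/3}$ distinct values of $q \in [-Cn^{2/3}+25, -25]$. Since $|\walks| \ge (\frac{1}{24}C-4)n^{2/3}$ after Phase~5, the total number of trapping pairs is at least
$$
|\walks| \cdot \frac{1}{26}Cn^{2/3} \;\ge\; \left(\frac{1}{24}C - 4\right)n^{2/3}\cdot\frac{1}{26}Cn^{2/3} \;\ge\; \frac{C^2 n^{4/3}}{700}
$$
for $C$ large enough. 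On the other hand, all trapping values $q$ lie in the interval $[-Cn^{2/3},-1]$, which contains at most $Cn^{2/3}$ integers. If only fewer than $0.01Cn^{2/3}$ of these $q$'s were ``popular'' (each trapping at least $\frac{1}{1000}Cn^{2/3}$ walks), then the total pair count would be bounded by
$$
0.01Cn^{2/3}\cdot |\walks| \;+\; Cn^{2/3}\cdot \frac{1}{1000}Cn^{2/3},
$$
where the first term counts pairs through popular $q$'s (each trapping at most $|\walks|$ walks) and the second counts pairs through unpopular $q$'s. I would bound $|\walks| \le n$ trivially (or more carefully, note $|\walks|\le \frac{C}{2}n^{2/3}$ since the walks are edge-disjoint augmenting walks and there are at most that many — actually we should use $|\walks| \le f(\blowvertex)$ or simply the deficiency bound), so that the first term is at most, say, $0.01Cn^{2/3}\cdot \frac{C}{2}n^{2/3} = 0.005\,C^2n^{4/3}$ and the second is $\frac{1}{1000}C^2n^{4/3}$; their sum is then strictly less than $\frac{C^2n^{4/3}}{700}$, a contradiction.

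**Carrying it out.** First I would fix the constants carefully: from Lemma~\ref{q-num} we need $C \ge 5\times 10^6$, and the averaging inequality forces a slightly larger threshold, so I would simply state ``for a sufficiently large constant $C$'' and verify the chain of inequalities at the end. The key quantitative step is choosing the right upper bound on $|\walks|$: since every $\rho \in \walks$ is an edge-disjoint augmenting walk in $F_0 \oplus F$ and we are operating under the (to-be-contradicted) hypothesis that the deficiency of $F_t$ exceeds $Cn^{2/3}t$, the cleanest bound to use here is that $\walks$ was initialized to a set of at most — well, the set $\walks$ can be large, but what matters is that each popular $q$ traps at most $|\walks|$ walks and we want $0.01 C n^{2/3} \cdot (\text{something}) + \text{small} < (\text{total pairs})$. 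The honest bookkeeping: let $N$ be the number of popular $q$'s; then total pairs $\le N\cdot|\walks| + Cn^{2/3}\cdot\frac{1}{1000}Cn^{2/3}$. Combined with total pairs $\ge |\walks|\cdot\frac{1}{26}Cn^{2/3}$, we get $N\cdot|\walks| \ge |\walks|\cdot\frac{1}{26}Cn^{2/3} - \frac{1}{1000}C^2n^{4/3} \ge |\walks|\cdot\frac{1}{26}Cn^{2/3} - \frac{1}{1000}Cn^{2/3}\cdot|\walks|\cdot\frac{2}{C}\cdot C = \ldots$ — this is where I must be careful, because dividing by $|\walks|$ requires relating $C^2n^{4/3}$ back to $|\walks|\cdot Cn^{2/3}$, which uses $|\walks| \ge (\frac{1}{24}C-4)n^{2/3} \ge \frac{C}{30}n^{2/3}$. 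Then $\frac{1}{1000}C^2n^{4/3} \le \frac{30}{1000}C n^{2/3}\cdot|\walks| = \frac{3}{100}Cn^{2/3}|\walks|$, so $N \ge \frac{1}{26}Cn^{2/3} - \frac{3}{100}Cn^{2/3} > 0.01Cn^{2/3}$. That is the argument.

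**Main obstacle.** The delicate point is making sure the numbers actually close: the ``popular'' threshold $\frac{1}{1000}Cn^{2/3}$ and the target count $0.01Cn^{2/3}$ must be consistent with the lower bound $\frac{1}{26}Cn^{2/3}$ from Lemma~\ref{q-num} and the lower bound $(\frac{1}{24}C-4)n^{2/3}$ on $|\walks|$, simultaneously. The inequality $\frac{1}{26} - \frac{3}{100} > \frac{1}{100}$ does hold ($\frac{1}{26} \approx 0.0385$, so $0.0385 - 0.03 = 0.0085$ — wait, that is \emph{less} than $0.01$), so I would instead sharpen the bound on $|\walks|$ using $|\walks| \ge (\frac{1}{24}C - 4)n^{2/3}$ and choose $C$ large enough that $\frac{1}{24}C - 4 \ge \frac{C}{25}$, giving $\frac{1}{1000}C^2n^{4/3} \le \frac{25}{1000}Cn^{2/3}|\walks| = 0.025\,Cn^{2/3}|\walks|$, hence $N \ge (\frac{1}{26} - 0.025)Cn^{2/3} \ge 0.013\,Cn^{2/3} > 0.01\,Cn^{2/3}$. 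So the real work is just this constant-chasing; everything else (edge-disjointness of the walks in $\walks$, the trapping bound per walk from Lemma~\ref{q-num}, and the range $[-Cn^{2/3},-1]$ containing at most $Cn^{2/3}$ integers) is already established, and I would invoke it directly. I expect no structural difficulty, only the need to pick $C$ — say $C \ge 10^7$ — so that all the slack terms are absorbed.
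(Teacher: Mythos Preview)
Your approach is correct and is essentially the same double-counting argument the paper uses: count the incidence pairs $(q,\rho)$ with $q$ trapping $\rho$, bound from below by $|\walks|\cdot\frac{1}{26}Cn^{2/3}$ via Lemma~\ref{q-num}, bound from above by splitting $q$'s into popular and unpopular, and compare. The paper phrases it as a proof by contradiction (assume more than $0.99Cn^{2/3}$ unpopular $q$'s and derive that the upper bound falls below the lower bound), whereas you solve directly for the number $N$ of popular $q$'s; the arithmetic is the same, and your final chain $N \ge (\tfrac{1}{26}-0.025)Cn^{2/3} > 0.01Cn^{2/3}$ using $|\walks|\ge \tfrac{C}{25}n^{2/3}$ is exactly the computation needed.
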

\begin{proof}
	We prove it by contradiction. Assume there exist $k (k > 0.99 Cn^{2/3})$ different $q\in [-Cn^{2/3},-1]$ that traps less than $\frac{1}{1000}Cn^{2/3}$ different augmenting walks. Consider the total number of $q$ trapped by all the augmenting walks (if an integer $q$ is trapped by two different augmenting walks, it is counted twice). 

	On one hand, assume there are $t > (\frac{1}{24}C-4)n^{2/3} > \frac{1}{25}Cn^{2/3}$ augmenting walks in $\walks$, and by Lemma~\ref{q-num}, each of these augmenting walks is trapped by at least $\frac{1}{26}Cn^{2/3}$ different $q$'s from $[-Cn^{2/3}, -1]$. So the total number of trapped $q$ is more than $\frac{1}{26}Cn^{2/3}t$. 

	On the other hand, there are $k$ different $q$ that traps less than $\frac{1}{1000}Cn^{2/3}$ different augmenting walks. For the rest $Cn^{2/3}-k$ different $q\in [-Cn^{2/3},-1]$, each of them traps at most $t$ augmenting walks. The total number of trapped $q$ is $$\frac{k}{1000}Cn^{2/3} + (Cn^{2/3}-k)t \le \frac{0.99}{1000}C^2n^{4/3} + \frac{1}{100}Cn^{2/3}t \le \frac{1}{26}Cn^{2/3}t$$
	the last inequality holds since $t > \frac{1}{25}Cn^{2/3}$. It is a contradiction.

\begin{comment}
	Notice that there are at least $(\frac{1}{24}C-3)n^{2/3} > \frac{1}{25}Cn^{2/3}$ augmenting walks in $\walks$, and by Lemma~\ref{q-num}, each of these augmenting walks is trapped by at least $\frac{1}{26}Cn^{2/3}$ different $q$'s from $[-Cn^{2/3}+b+1, -b-1]\subset [-Cn^{2/3}, -1]$. Draw a binary table of scale $(Cn^{2/3})\times (\frac{1}{25}Cn^{2/3})$ whose rows are indexed by integers from $[-Cn^{2/3}, -1]$ and whose columns are indexed by augmenting walks, and the entry indexed by $(q, \rho)$ is set to $1$ if $q$ traps $\rho$. Hence it suffices to prove a lower bound on the total number of rows that contain a large number of $1$'s.
	
	Since the total sum over each column is at least $\frac{1}{26}Cn^{2/3}$, the overall sum of the table is at least $\frac{1}{650}C^2n^{4/3}$. So the total number of zeros is at most $(\frac{1}{25}-\frac{1}{650})C^2n^{4/3}$. Hence, the number of rows with more than $(\frac{1}{25}-\frac{1}{1000})Cn^{2/3}$ zero entries is at most $((\frac{1}{25} - \frac{1}{650})C^2n^{4/3}) / ((\frac{1}{25} - \frac{1}{1000})Cn^{2/3}) < 0.99Cn^{2/3}$. Hence, there are at least $0.01Cn^{2/3}$ rows with at least $\frac{1}{1000}Cn^{2/3}$ nonzero entries.
\end{comment}
\end{proof}

By Lemma~\ref{p-num}, we can collect $0.01Cn^{2/3}$ different $q\in [-Cn^{2/3}, -1]$ that traps at least $\frac{1}{1000}Cn^{2/3}$ different augmenting walks in $\walks$. We argue there exists such a $q$ such that the size of $$Y(q)\overset{\text{def}}{=}\bigcup_{h=q-24}^{q+24}Y_h\cup \bigcup_{h=-q-42}^{-q+40}Y_h$$ is at most $\frac{13200}{C}n^{1/3}$. In fact, each original vertex $u$ can belong to at most $132$ different $Y(q)$, and so by the pigeon-hole principle there exists such a $q$ such that $|Y(q)|\leq 132n / 0.01Cn^{2/3} = \frac{13200}{C}n^{1/3}$. By Lemma~\ref{neck}, every augmenting walk in $\walks$ trapped by $q$ directly connects two vertices in $Y(q)$. By edge-disjointness of augmenting walks, every pair of $u, v\in Y(q)$ can be directly connected at most once, and hence $\frac{1}{1000}Cn^{2/3} < \left(\frac{13200}{C}n^{1/3}\right)^2$, which is a contradiction when $C$ is a sufficiently large constant.

\bibliographystyle{plain}
\bibliography{ref}

\begin{thebibliography}{10}

\bibitem{cohen2017negative}
Michael~B Cohen, Aleksander Madry, Piotr Sankowski, and Adrian Vladu.
\newblock Negative-weight shortest paths and unit capacity minimum cost flow in
  $\tilde{O}(m^{10/7}\log w)$.
\newblock In {\em Proceedings of the Twenty-Eighth Annual ACM-SIAM Symposium on
  Discrete Algorithms}, pages 752--771. SIAM, 2017.

\bibitem{duan2014linear}
Ran Duan and Seth Pettie.
\newblock Linear-time approximation for maximum weight matching.
\newblock {\em Journal of the ACM (JACM)}, 61(1):1, 2014.

\bibitem{duan2018scaling}
Ran Duan, Seth Pettie, and Hsin-Hao Su.
\newblock Scaling algorithms for weighted matching in general graphs.
\newblock {\em ACM Transactions on Algorithms (TALG)}, 14(1):8, 2018.

\bibitem{edmonds1972theoretical}
Jack Edmonds and Richard~M Karp.
\newblock Theoretical improvements in algorithmic efficiency for network flow
  problems.
\newblock {\em Journal of the ACM (JACM)}, 19(2):248--264, 1972.

\bibitem{even1975network}
Shimon Even and R~Endre Tarjan.
\newblock Network flow and testing graph connectivity.
\newblock {\em SIAM journal on computing}, 4(4):507--518, 1975.

\bibitem{gabow1983efficient}
Harold~N Gabow.
\newblock An efficient reduction technique for degree-constrained subgraph and
  bidirected network flow problems.
\newblock In {\em Proceedings of the fifteenth annual ACM symposium on Theory
  of computing}, pages 448--456. ACM, 1983.

\bibitem{gabow1983scaling}
Harold~N Gabow.
\newblock Scaling algorithms for network problems.
\newblock In {\em 24th Annual Symposium on Foundations of Computer Science
  (FOCS 1983)}, pages 248--258. IEEE, 1983.

\bibitem{gabow2018data}
Harold~N Gabow.
\newblock Data structures for weighted matching and extensions to $b$-matching
  and $f$-factors.
\newblock {\em ACM Transactions on Algorithms (TALG)}, 14(3):39, 2018.

\bibitem{gabow2013algebraic}
Harold~N Gabow and Piotr Sankowski.
\newblock Algebraic algorithms for b-matching, shortest undirected paths, and
  f-factors.
\newblock In {\em 2013 IEEE 54th Annual Symposium on Foundations of Computer
  Science}, pages 137--146. IEEE, 2013.

\bibitem{gabow1989faster}
Harold~N Gabow and Robert~E Tarjan.
\newblock Faster scaling algorithms for network problems.
\newblock {\em SIAM Journal on Computing}, 18(5):1013--1036, 1989.

\bibitem{goldberg1987solving}
Andrew Goldberg and Robert Tarjan.
\newblock Solving minimum-cost flow problems by successive approximation.
\newblock In {\em Proceedings of the nineteenth annual ACM symposium on Theory
  of computing}, pages 7--18. ACM, 1987.

\bibitem{goldberg1998beyond}
Andrew~V Goldberg and Satish Rao.
\newblock Beyond the flow decomposition barrier.
\newblock {\em Journal of the ACM (JACM)}, 45(5):783--797, 1998.

\bibitem{huang2017approximate}
Dawei Huang and Seth Pettie.
\newblock Approximate generalized matching: $ f $-factors and $ f $-edge
  covers.
\newblock {\em arXiv preprint arXiv:1706.05761}, 2017.

\bibitem{karzanov1973finding}
Alexander~V Karzanov.
\newblock On finding maximum flows in networks with special structure and some
  applications.
\newblock {\em Matematicheskie Voprosy Upravleniya Proizvodstvom}, 5:81--94,
  1973.

\bibitem{lee2014path}
Yin~Tat Lee and Aaron Sidford.
\newblock Path finding methods for linear programming: Solving linear programs
  in $\tilde{O}(\sqrt{rank})$ iterations and faster algorithms for maximum
  flow.
\newblock In {\em 2014 IEEE 55th Annual Symposium on Foundations of Computer
  Science}, pages 424--433. IEEE, 2014.

\end{thebibliography}

\end{document}